\let\vec\relax
\newcolumntype{d}{D{/}{~}{5.4} }
\newcolumntype{f}{D{/}{~}{6.3} }
\tikzset{>=latex}
\tikzset{
  event_structure/.style={
    ->,
    nodes={ minimum size = .5cm}
  },
}
\tikzstyle{level 1}=[level distance=3.5cm, sibling distance=3.5cm]
\tikzstyle{level 2}=[level distance=3.5cm, sibling distance=2cm]
\tikzstyle{small}=[scale=0.7]
\tikzstyle{event}=[draw,rectangle,prefix after command= {\pgfextra{\tikzset{every label/.style={small}}}}]
\tikzstyle{conflict}=[dashed]
\tikzstyle{invisible}=[draw opacity=0]
\tikzstyle{conf}=[line width=0.5mm]
\tikzstyle{edge from parent}=[draw,->]
\crefname{appsec}{Appendix}{Appendices}
\newcommand\review[1]{#1}
\title{Language Inclusion for \\Finite Prime Event Structures
\thanks{
This work has received funding from the Electronic Component
Systems for European Leadership Joint Undertaking under grant
agreement No 737459 (project Productive4.0), which receives support from the European Union Horizon 2020
research and innovation program and Germany, Austria, France,
Czech Republic, Netherlands, Belgium, Spain, Greece, Sweden,
Italy, Ireland, Poland, Hungary, Portugal, Denmark, Finland, Luxembourg, Norway, Turkey, 
from the by ECSEL Joint Undertaking
under the project H2020 737469 AutoDrive --- Advancing failaware,
fail-safe, and fail-operational electronic components, systems, and
architectures for fully automated driving to make future mobility
safer, affordable, and end-user acceptable,
from the LogiCS doctoral program W1255-N23 of the Austrian Science Fund (FWF) and by the Vienna Science and Technology Fund
(WWTF) through the projects Heisenbugs project VRG11-005.
}
}
\author{Andreas Fellner\inst{1,2}\orcidID{0000-0002-3618-2251} \and
Thorsten Tarrach\inst{1}\orcidID{0000-0003-4409-8487} \and
Georg Weissenbacher\inst{2}\orcidID{0000-0002-0143-632X}}
\author{Andreas Fellner\inst{1,2} \and
Thorsten Tarrach\inst{1} \and
Georg Weissenbacher\inst{2}}
\institute{AIT Austrian Institute of Technology\\1210 Vienna, Austria \and
TU Wien, 1040 Vienna, Austria \\
\email{andreas.fellner@ait.ac.at}}
\authorrunning{A. Fellner et al.}%TODO mandatory. First: Use abbreviated first/middle names. Second (only in severe cases): Use first author plus 'et al.'
\newcommand{\lang}{\mathcal{L}}
\newcommand{\aut}{\mathcal{A}}
\newcommand\jsucc[1]{\mbox{su}(#1)}
\newcommand\source[1]{s(#1)}
\newcommand\target[1]{t(#1)}
\newcommand{\hist}[1]{\lceil #1 \rceil}
\newcommand\trace\tau
\newcommand{\les}{\mathcal{E}}
\newcommand{\kw}[1]{\mathit{#1}}
\newcommand{\var}[1]{\mathit{#1}}
\newcommand{\dsucc}{\kw{dsucc}}
\newcommand{\true}{\mathtt{true}}
\newcommand{\false}{\mathtt{false}}
\newcommand{\none}{\mathtt{None}}
\newcommand{\frontier}{\var{frontier}}
\newcommand{\push}{\var{push}}
\newcommand{\pop}{\var{pop}}
\newcommand{\range}{\var{range}}
\newcommand{\leslabel}{\mathcal{X}}
\newcommand{\bhalf}{\mathit{bh}}
\DeclareRobustCommand{\cev}[1]{%
  \mathpalette\do@cev{#1}%
}
\newcommand{\do@cev}[2]{%
  \fix@cev{#1}{+}%
  \reflectbox{$\m@th#1\vec{\reflectbox{$\fix@cev{#1}{-}\m@th#1#2\fix@cev{#1}{+}$}}$}%
  \fix@cev{#1}{-}%
}
\newcommand{\fix@cev}[2]{%
  \ifx#1\displaystyle
    \mkern#23mu
  \else
    \ifx#1\textstyle
      \mkern#23mu
    \else
      \ifx#1\scriptstyle
        \mkern#22mu
      \else
        \mkern#22mu
      \fi
    \fi
  \fi
}
\begin{document}

\date{\today}

\maketitle

\begin{abstract}

We study the problem of language inclusion between finite, labeled prime event structures.
Prime event structures are a formalism to compactly represent concurrent behavior of discrete systems.
A labeled prime event structure induces a language of sequences of labels produced by the represented system.
We study the problem of deciding inclusion and membership for languages encoded by finite prime event structures and provide complexity results for both problems.
We provide a family of examples where prime event structures are exponentially more succinct than formalisms that do not take concurrency into account. 
We provide a decision algorithm for language inclusion that exploits this succinctness.
%
%Taking into account concurrency during language inclusion allows to decide an expo%nentially smaller problem,
%compared to representations of the language that neglect concurrency.
%
%We prove the complexity of the problem and provide a practical decision algorithm.
%
Furthermore, we provide an implementation of the algorithm and an evaluation on a series of benchmarks.
Finally, we demonstrate how our results can be applied to mutation-based test case generation.
\keywords{Event Structures, Language Inclusion, Concurrency, Mutation-based Test Case Generation}%TODO mandatory; please add comma-separated list of keywords
\end{abstract}

%\algrenewcomment[1]{\(\triangleright\) #1}

\section{Introduction}

Language inclusion is a fundamental problem in computer science which
arises in numerous application domains. In its most familiar form
the problem is instantiated with regular languages and finite
automata \cite{automata_book}; an incarnation frequently occurring
in formal verification and model checking is language inclusion
(and intersection, respectively) for $\omega$-regular languages
and B\"uchi automata \cite[Chapter 7]{modelchecking_book}.
In the latter application, the goal is to check whether a
transition system conforms to a specification given in
linear temporal logic. 
One challenge arising in automata-based model checking is that the
verification of concurrent systems relies on the explicit construction 
of a product automaton whose size can be exponential
in the number of processes. Partial Order Reduction 
(POR, see \cite{Valmari1989,Godefroid1996} and
\cite[Chapter 12]{modelchecking_book}, for instance)
addresses this problem by exploiting independence between
transitions to avoid the construction of the full
product automaton: the reduction identifies
equivalence classes of words in the language (i.e., executions)
obtained by reordering commutative edges/transitions \cite{Mazurkiewicz1986}
and restricts the exploration to representative members of these classes. 
POR in its simplest form can be used to check reachability and deadlock problems; 
for checking temporal logic properties only transitions
whose labels are ``invisible'' to the property are assumed to be
independent \cite[Chapter 12]{modelchecking_book}.
This renders the approach impractical for language
inclusion if the alphabets of both languages are the same, e.g., when
checking whether a modification is language-preserving --
a question arising in the applications that motivated our work (see below).
%\afsays{We should say that POR that preserved languages was the motivation of this work}

In this paper, we focus on language inclusion for finite, labeled prime
event structures, a representation of bounded executions of
concurrent systems in which dependence (and independence) of
transitions is made explicit.
This representation can be exponentially more
succinct than finite automata, as shown in \cref{sec:algorithm}:
there are event structures with $n$ events, 
such that the smallest NFA expressing the same language has at
least $2^n$ states.

We provide an analysis of the computational complexity of checking
language membership as well as inclusion between two event structures,
showing that the former is NP-complete and the latter is $\Pi^p_2$-complete
(\cref{sec:complexity}).
\review{
While a similar results to the former was proven earlier for
trace languages \cite{bertoni1989membership}, to the best of our knowledge, 
the latter result is novel even in the related domains of bounded trace languages and bounded labeled Petri nets.
%the latter result is novel even in related domains. \ttsays{So this means no language inclusion for M. traces?}
}

\review{Besides showing the complexity of the decision problems, 
we provide a practical decision algorithm for solving event structure language inclusion in \cref{sec:algorithm}.}
By finding suitable embeddings of one event structure in another, the algorithm
determines whether the language of the former is included in the language
of the latter. The algorithm iteratively refines the event structure
whenever two labels occur unordered in the former structure 
but ordered in the latter. Moreover, the algorithm can provide
counterexamples to inclusion encoded as event
structures representing words that occur in the former language but
not in the latter.

\Cref{sec:experiments} provides a qualitative analysis of our
representation and an experimental evaluation that highlights
advantages and disadvantages of event structures
in comparison to an automaton-based representation (for which
language inclusion is PSPACE-complete). 

\review{Our inclusion algorithm decides whether two systems, represented as
event structures, have the same behavior in terms of bounded
words over a common vocabulary.} This scenario arises in a range of
applications: refinement or model checking, where an implementation is
compared against a specification; upgrade or regression checking,
where a fixed version of a software is compared against the original
version; or mutation-based test case generation, where a small
modification (or bug) is introduced in code to obtain a ``mutant''
of the original program, and the counterexample to inclusion then
represents a test case which discriminates between mutant and
original. 
%\afsays{We could say that it was our motivation}
We use the latter scenario, which motivated our research on
language inclusion, as an exemplary application
of our approach in our experiments (\cref{sec:experiments}).

%% This allows us to check whether an
%% implementation conforms to its specification or whether a concrete
%% model refines an abstract one. In case of a discrepancy between
%% the two structures, our algorithm yields an event structure that
%% serves as a witness of non-conformance, enabling the generation
%% of test cases that distinguish a mutated version of a program
%% from the original (illustrated in \cref{sec:applications}).
%\afsays{probably we can scrap the figure}
%\cref{fig:overview} provides an overview of these applications.

%%

\section{Preliminaries}

In this section we introduce labeled prime event structures.
%We follow closely the presentation in \cite{Rodriguez2015} and refer the reader there for a more detailed elaboration.
%\afsays{Removed the reference to \cite{Rodriguez2015}, since we are (or should be) self-contained and they label event structures differently}
%\begin{definition}[Event Structure]
%
%A $\leslabel$-\emph{labeled prime event structure (FLES)} $\les$ is a tuple $\langle E, <, \#, h\rangle$, where
%$E$ is a finite set of events.
%The \emph{causality relation} 
%$< \subseteq E \times E$ is a well founded strict partial order on $E$.
%The \emph{conflict relation} 
%$\# \subseteq E \times E$ is irreflexive, symmetric, and transfers over $<$, which means $e \# e'$ and $e' < e''$ then $e \# e''$.
%The \emph{labeling function} $h: E \rightarrow \leslabel \cupdot \{\varepsilon\}$ assigns a label to each event,
%where $\varepsilon$ is the distinct empty label.
%
%\end{definition}
Throughout this work, we assume that every set of labels $\leslabel$ contains a distinct label $\varepsilon$, 
which denotes the empty symbol.
Concatenation of $\varepsilon$ to a word does not change the word.

\begin{definition}[FLES]
Given a set of labels $\leslabel$, a finite, $\leslabel$-labeled prime event structure (FLES) is a tuple $\les := \langle E, <, \#, h\rangle$
where $E$ is a finite set of events, $<\, \subseteq E \times E$ is a strict partial order on $E$, called \emph{causality relation},
$h: E \rightarrow \leslabel$ labels every event with an element of $\leslabel$, 
and $\# \subseteq E\times E$ is the symmetric, irreflexive \emph{conflict relation} \review{that is \emph{closed under} $<$, i.e.}
for all $e,e',e'' \in E$, if $e \# e'$ and $e' < e''$, then $e \# e''$.
\end{definition}
%\ttsays{if $\leslabel$ has to contain $\varepsilon$, is this also changed in all of the proofs?}
%\gwsays{Is $A=\leslabel\cup\{\varepsilon\}$? If yes, it can be avoided in
  %the definition}
%We will occasionally write $<'=(<\cup\{(e_1,e_2)\})^+$ to indicate that an additional order between $e_1$ and $e_2$ should be added and the transitive closure should be created of the resulting relation.
For an event $e$, we use $\hist e$ to denote the history of $e$ as the set of events that must happen before $e$ according to $<$, formally $\hist e := \{e' \in E \mid e' < e\}$. %Further, $\histe e := \{e\} \cup \hist e$.
%\afsays{Do we need \histe e?}
We require that there is a special event $\bot \in E$, such that $\hist\bot = \emptyset$, for all events $e \in E:\ \bot < e$, and $h(\bot) = \varepsilon$.
We define the direct successors $\dsucc$ of event $e$ as the set of events that depend on $e$ without there being another event in-between, formally $\dsucc(e)=\{e'\in E \mid e < e' \wedge \nexists e'':\ e < e'' < e'\}$.
We say that two events $e, e' \in E$ are \emph{concurrent} if $e \neq e'$, not $(e < e')$, not $(e > e')$, and not $(e \# e')$.

A central concept in assigning event structures a semantic is the notion of configurations:

\begin{definition}[Configuration]
\review{
For a FLES $\les := \langle E, <, \#, h\rangle$, a
\emph{configuration} of $\les$ is a set of events
$C~=~\{e_1,\ldots,e_n\}~\subseteq~E$ that is both
}
\begin{itemize}
	\item Left closed: $\forall e \in C: \forall e' \in E $ such
          that $e' < e \implies e' \in C$, and
	\item Conflict free: $\forall e,e' \in C: \neg (e \# e')$
\end{itemize}

\end{definition}

A configuration $C$ is \emph{maximal}, if there is no configuration $C'$ such that \review{$C \subseteq C'$ and $C \neq C'$}.
We denote by $\mathcal{MC}(\les)$ the set of all maximal configurations of an event structure $\les$.
%An event $e$ is \emph{maximal} in $C$, if there is no event $e' \in C$, such that $e < e'$.
%\ttsays{Do we need this? A configuration $C$ is \emph{local}, if there exists an event $e$, such that $C = [e]$.}
A \emph{trace} $\trace$ of $C$ is a sequence of events $\langle e_1,\ldots,e_n \rangle$, where every event $e\in C$ occurs exactly once in the sequence and for all $e_i,e_j\in \trace:\  e_i<e_j\implies i < j$.
We denote the set of all traces of a configuration $C$ with $T(C)$.
Let $f: C \rightarrow X$ be a mapping on $C$ to some set $X$.
For a trace $\trace$ of $C$, we denote by $f(\trace)$ the sequence resulting from point-wise application of $f$ on the elements of $\trace$.
%with the removal of $\varepsilon$ values.
Finally, we extend $T$ to event structures by defining it as the union
of traces over all maximal configurations.
That is, $T(\les) := \bigcup_{C \in \mathcal{MC}(\les)} T(C)$.
%Similarly, we denote by $f(T(C))$ the set of sequences resulting from application of $f$ to all traces in $T(C)$.
%Finally, we call $\lang(C) := h(T(C))$ the language of $C$.

A finite, labeled prime event structure $\les$ represents a finite set of bounded words over an alphabet $\leslabel$, where 
the bound for the length of words is given by the size of the largest maximal configuration.
We call this set the language $\lang(\les)$. 

\begin{definition}[Language of $C$ and $\les$]
The language of configuration $C$ of $\les$ is $\lang(C) := \{ h(\trace) \mid \trace \in T(C)\}$. 
The language of $\les$ is $\lang(\les) := \{h(\trace) \mid \trace \in T(\les)\}$.
\end{definition}

%This means that FLES describe a finite language of bounded words up the size of the largest maximal configuration.
To illustrate this definition we give a small example.

\begin{figure}[tbp]
\begin{subfigure}[t]{0.49\textwidth}
\centering
\begin{tikzpicture}[grow'=right, sloped,scale=0.5,level 1/.style={level distance=3cm, sibling distance=3cm}]
	\node[event] {$\bot$}
	child {
		node[event,label={above:A}] (e1) {$e_1$}
		edge from parent
	}
	child {
		node[event,label={below:B}] (e3) {$e_2$}
		edge from parent
	};
	
\end{tikzpicture}
\caption{LES with one maximal configuration}
\label{fig:ex-noconf}
\end{subfigure}\hfill
\begin{subfigure}[t]{0.49\textwidth}
\centering
\begin{tikzpicture}[grow'=right, sloped,scale=0.5,level 1/.style={level distance=3cm, sibling distance=3cm},level 2/.style={level distance=3cm, sibling distance=3cm}]
	\node[event] {$\bot$}
	child {
		node[event,label={above:A}] (e1) {$e_1$}
		edge from parent
		child {
			node[event,label={above:B}] {$e_2$}
			edge from parent
		}
	}
	child {
		node[event,label={below:B}] (e2) {$e_3$}
		edge from parent
		child {
			node[event,label={below:A}] {$e_4$}
			edge from parent
		}
	};
	
	\draw[conflict] (e1) -- (e3);
\end{tikzpicture}
\caption{Event structure with conflicts}
\label{fig:ex-conf}
\end{subfigure}\hfill
\caption{Event structures}
\label{fig:ex-simple}
\end{figure}
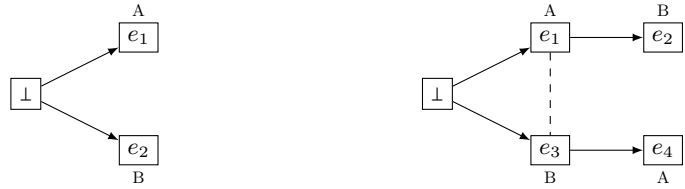

\begin{example}[Event structure and configurations]
We show two event structures in \cref{fig:ex-noconf,fig:ex-conf}. 
Boxes depict events.
Inside every box is its event's identifier, above or below the box is its event's label. If there is no label we implicitly assume the label to be $\varepsilon$.
Solid arrows depict direct successors of an event.
Dashed lines depict immediate conflicts.
Two events $e,e'$ are in immediate conflict if $e\# e'$ and there are no $e_1,e_2 \in E$ such that $e_1 < e \wedge e_1 \# e'$ or $e_2 < e' \wedge e\#e_2$.
For better readability, we omit all other causalities and conflicts.

\Cref{fig:ex-noconf,fig:ex-conf} both represent the language $\{AB,BA\}$. The event structure in \cref{fig:ex-noconf} has a single maximal configuration consisting of events $\{\bot,e_1,e_2\}$. The event structure in \cref{fig:ex-conf} has two maximal configurations: $\{\bot,e_1,e_2\}$ and $\{\bot,e_3,e_4\}$ (due to the conflict between $e_1$ and $e_3$ these two events cannot appear in the same configuration).
\end{example}

%%\ttsays{Do we need this lemma?}
%%\begin{restatable}{lemma}{lemmaconfsize}
%%
%%Let $C$ be a configuration.
%%$|T(C)| = 1$ if and only if $|C| = 1$ or for all events $e,e' \in C$ $(e <^* e')$ or $(e' <^* e)$.
%%
%%\end{restatable}
%%
%%\ttsays{Do we need this corollary?}
%%\begin{corollary}
%%$|T(C)| > 1$ if and only if there are events $e,e' \in C$ $\neg (e < e')$ and $\neg (e' < e)$.
%%
%%\end{corollary}

%Event structures can be used to describe labeled transition systems \cite{Rodriguez2015}.
%We study the comparison of event structures in order to reason about the labeled transition systems they describe.
%We briefly present how labeled prime event structures are derived of such (labeled) transition systems in \cref{sec:unfolding}. 

\section{Language Inclusion Problem and Complexity Results}
\label{sec:complexity}

\begin{figure}[tbp]
\centering
\begin{tikzpicture}[grow'=right, sloped,scale=0.5,level 1/.style={level distance=3cm, sibling distance=3cm}]
	\node[event] (bot) {$\bot$}
	child {
		node[event] (ef11) {$e_{f_1,1}$}
		edge from parent
    child {
      node[event] (ef21) {$e_{f_2,1}$}
      edge from parent[draw=none]
      child {
        node[event] (efm1) {$e_{f_k,1}$}
        edge from parent[draw=none]
      }
    }
	}
  child {
		node[event] (ef12) {$e_{f_1,2}$}
		edge from parent
    child {
      node[event] (ef22) {$e_{f_2,2}$}
      edge from parent[draw=none]
      child {
        node[event] (efm2) {$e_{f_k,2}$}
        edge from parent[draw=none]
        child {
          node[event,xshift=.5cm,label={above:$x$}] (ef21m2) {$e_{f_2,f_k,1}$}
          edge from parent
        }
      }
    }
	}
	child {
		node[event] (ef1n) {$e_{f_1,n}$}
		edge from parent
    child {
      node[event] (ef2n) {$e_{f_2,n}$}
      edge from parent[draw=none]
      child {
        node[event] (efmn) {$e_{f_k,n}$}
        edge from parent[draw=none]
        child {
          node[event,xshift=.5cm,label={above:$x$}] (ef2xmn1) {$e_{f_2,f_k,n-1}$}
          edge from parent
        }
      }
    }
	};
  \node[event,below=.9cm of ef2xmn1,label={above:$x$}] (ef2xmn) {$e_{f_2,f_k,n}$};
  \node[below=.3cm of ef22,invisible] (efn12) {};
  \draw[conflict] (ef11)--(ef21);
  \draw[conflict] (ef12)--(ef22);
  \draw[conflict] (ef1n)--(ef2n);
  \draw[conflict] (ef11)--(ef12);
  \draw[conflict] (ef21)--(ef22);
  \draw[conflict] (efm1)--(efm2);
  
  \draw[conflict,bend left] (ef11) to (efm1);
  \draw[conflict,bend left] (ef12) to (efm2);
  \draw[conflict,bend left] (ef1n) to (efmn);
  
  \draw[conflict,bend right] (ef11) to (ef1n);
  \draw[conflict,bend right] (ef21) to (ef2n);
  \draw[conflict,bend right] (efm1) to (efmn);
  
  \draw[invisible] (ef21) --node[auto=false]{\ldots} (efm1);
  \draw[invisible] (ef22) --node[auto=false]{\ldots} (efm2);
  \draw[invisible] (ef2n) --node[auto=false]{\ldots} (efmn);
  \draw[invisible] (ef12) --node[auto=false]{\ldots} (ef1n);
  \draw[invisible] (ef22) --node[auto=false]{\ldots} (ef2n);
  \draw[invisible] (efm2) --node[auto=false]{\ldots} (efmn);
  \draw[invisible] (ef21m2) --node[auto=false]{\ldots} (ef2xmn1);
  \draw[edge from parent] (ef21) -- (ef21m2);
  \draw[edge from parent] (efn12) -- (ef2xmn1);
  \draw[edge from parent] (ef2n) -- (ef2xmn);
  \draw[edge from parent] (efm1) -- ++(16mm,0) |- (ef2xmn);
  
  \begingroup
  \node[above=0.6cm of ef21m2,yshift=1mm,align=center] (node2) {\scriptsize if $\target{f_2} = \source{f_k}$:};
  \node[above=1.3mm of node2,xshift=1mm] (node3) {\scriptsize \bf Connected edges};
  
  \draw ($ (node2.north east) + (5mm,3mm) $) rectangle ($ (ef2xmn.south west) - (10mm,3mm) $);
  \endgroup
  
  %draw the box
  % draw the box
  \begingroup
  \node[right=9cm of bot,yshift=1mm,align=left] (node1) {\scriptsize if $\target{f_p} = \target{f_q}$:\\ $\forall i,j$};
  \node[above=1.8mm of node1,xshift=5mm] (node0) {\scriptsize \bf Conditional Conflicts};
  
  \node[event,below=0mm of node1,xshift=-5mm] (efp) {$e_{f_p,i}$};
  \node[event,below=0mm of node1,xshift=10mm] (efq) {$e_{f_q,j}$};
  \draw[conflict] (efp) -- (efq);
  \draw ($ (node1.north east) + (10mm,3mm) $) rectangle ($ (efp.south west) - (3mm,3mm) $);
  \endgroup

\node[below=3cm of bot,xshift=4cm,align=left] {All $e_{f_1,1},\ldots,e_{f_k,1},e_{f_1,n},\ldots,e_{f_k,n}$ are causally related to $\bot$};
	
\end{tikzpicture}
\caption{$\les^G$ for \cref{thm:membership_np_hard}}
\label{fig:ex-three}
\end{figure}

The language inclusion problem for two event structures $\les_1,  \les_2$ is to decide whether $\lang(\les_1) \subseteq \lang(\les_2)$.
In this section we prove a complexity bound for the language inclusion problem. 
As an intermediate step we look at the membership problem.

\subsection{Language Membership is NP-complete}

The \emph{finite prime event structure language membership problem} for word $w$ and FLES $\les$ is the problem of deciding whether $w \in \lang(\les)$.
Surprisingly, deciding membership is NP-complete. 
In contrast, trace membership $\trace \in T(\les)$ can be decided in polynomial time.
Trace membership can be decided simply by verifying that the set of events of $\trace$ forms a maximal configuration of $\les$, 
which requires to verify left-closure, conflict-freedom, and maximality.
All of those can be checked in polynomial time (\review{linear time, assuming linear conflict lookup}).%\colorbox{yellow}{linear} time.
%\afsays{Sounds true, I reformulated it slightly}
%This is easy to see because for the set of events in $\trace$ we need to check that they are a maximal configuration in $\les$. 
%This requires to verify left-closure, conflict-freedom, and maximality, all of which can be done in quadratic time. \ttsays{check}

Intuitively, the hardness of language membership comes from the fact that the labeling function does not need to be injective and the role of conflicts, 
which together rule out a greedy algorithm that consumes the word in question symbol by symbol in a unique way.

\begin{theorem}
\label{thm:membership_in_np}

Finite prime event structure language membership is in NP.

\end{theorem}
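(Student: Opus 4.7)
The plan is to exhibit a polynomial-size certificate and a polynomial-time verifier, so membership sits in NP by the standard guess-and-check paradigm.

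First I would describe the certificate. Given an input $(w, \les)$ with $\les = \langle E, <, \#, h\rangle$, the natural witness is a single trace $\trace = \langle e_1, \ldots, e_n\rangle$ of events of $\les$ whose labels form $w$. Since a trace of a configuration lists each event at most once, its length is bounded by $|E|$, so the certificate has size polynomial in $|\les|$.

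Next I would spell out the verifier. On input $(w, \les, \trace)$, it would check, in polynomial time, that (i) $\trace$ contains no repeated events, (ii) the set $C := \{e_1, \ldots, e_n\}$ is left-closed, i.e.\ for every $e \in C$ and every $e' \in E$ with $e' < e$ we have $e' \in C$, (iii) $C$ is conflict-free, i.e.\ $\neg(e_i \# e_j)$ for all $i,j$, (iv) $C$ is maximal, which can be tested by verifying that for every $e \in E \setminus C$ either $e$ has a predecessor outside $C$, or $e$ conflicts with some event of $C$, or $C \cup \{e\}$ is no longer conflict-free / left-closed, (v) the order of $\trace$ respects $<$, i.e.\ $e_i < e_j \Rightarrow i < j$, and (vi) the sequence $h(e_1) h(e_2) \cdots h(e_n)$, read over $\leslabel$ with $\varepsilon$ absorbed by concatenation, equals $w$. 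Each of these checks is polynomial (at worst quadratic in $|E|$ given oracle access to $<$ and $\#$), so the overall verifier runs in polynomial time.

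Finally I would argue correctness. If $w \in \lang(\les)$, then by definition there is a maximal configuration $C$ and a trace $\trace \in T(C)$ with $h(\trace) = w$; this $\trace$ is a valid certificate by construction. Conversely, any $\trace$ accepted by the verifier enumerates a maximal, left-closed, conflict-free set of events in a $<$-compatible order, hence is a trace of a maximal configuration of $\les$ whose label sequence is $w$, so $w \in \lang(\les)$. I do not foresee a real obstacle here: the only subtlety worth handling carefully is the maximality check, since it is a universal condition over $E \setminus C$, but it decomposes into independent polynomial-time tests per event and therefore poses no difficulty.
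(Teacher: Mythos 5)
Your proposal is correct and follows essentially the same route as the paper: the paper also uses a trace as a polynomially sized certificate and verifies in polynomial time that it is a trace of a maximal configuration (left-closure, conflict-freedom, maximality) whose label sequence equals $w$. Your write-up merely spells out the individual checks that the paper delegates to its earlier remark that trace membership is polynomially decidable.
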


\begin{proof}

Let $\les = \langle E, <, \#,h \rangle$ be an $\leslabel$-labeled FLES and $w = \langle \sigma_1,\ldots,\sigma_n \rangle \in \leslabel^*$ be a word.
A trace $\trace$ is a polynomially sized certificate for $w \in \lang(\les)$.
Checking that $\trace\in T(\les)$ can be done in polynomial time, and 
checking whether $h(\trace) = w$ can be done in linear time.
\qed

\end{proof}

To prove NP-hardness we reduce the Hamiltonian cycle (HC) problem  to the membership problem. 
HC is known to be NP-hard~\cite{karp1972reducibility}. It is the problem of deciding whether for a directed graph there exists a path that visits all vertices once and that ends in the vertex it started.
We use $\source{f}$ and $\target{f}$ to denote the source and target of a directed edge $f$.

\begin{restatable}{theorem}{membershipnphard}
\label{lem:cycle_red}
\label{thm:membership_np_hard}
Finite prime event structure language membership is NP-hard.
\end{restatable}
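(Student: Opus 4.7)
The plan is to reduce from the Hamiltonian cycle (HC) problem. Given a directed graph $G = (V,E)$ with $|V| = n$ vertices and $E = \{f_1,\ldots,f_k\}$, I would construct in polynomial time a FLES $\les^G$ (depicted in \cref{fig:ex-three}) and a word $w$ such that $w \in \lang(\les^G)$ if and only if $G$ admits a Hamiltonian cycle. Together with \cref{thm:membership_in_np}, this establishes NP-hardness (indeed NP-completeness).

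The idea is that a configuration of $\les^G$ encodes a choice of one edge per cyclic position $1,\ldots,n$. Each event $e_{f_i,j}$, a direct $<$-successor of $\bot$ and labeled $\varepsilon$, represents the assertion ``edge $f_i$ occupies position $j$''. Three families of conflicts then enforce the cycle conditions: (i) events sharing a column, $e_{f_i,j} \# e_{f_{i'},j}$ for $i \neq i'$, enforce at most one edge per position; (ii) events sharing a row, $e_{f_i,j} \# e_{f_i,j'}$ for $j \neq j'$, enforce that each edge is used at most once; (iii) the \textbf{Conditional Conflicts} $e_{f_p,i} \# e_{f_q,j}$ whenever $\target{f_p} = \target{f_q}$ enforce that the selected edges have pairwise distinct targets (hence, together with (i), cover every vertex exactly once). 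The \textbf{Connected edges} gadget introduces, for every pair $(f_p,f_q)$ with $\target{f_p} = \source{f_q}$ and every position $j$, a witness event $e_{f_p,f_q,j}$ labeled with a fresh symbol $x$ whose causal predecessors are $e_{f_p,j}$ and $e_{f_q,(j \bmod n)+1}$; such an $x$-event can therefore appear in a configuration only when the edges chosen at consecutive (cyclic) positions meet at a common vertex.

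Setting $w = x^n$, the two directions are then:
\emph{(soundness)} any trace producing $w$ comes from a maximal configuration containing exactly $n$ events labeled $x$; since at most one $x$-event is compatible with each position, this forces exactly one selected edge per position, and the $x$-events' causal constraints force consecutive targets and sources to match cyclically. Combined with the conflict constraints (distinct edges, distinct targets), the selected sequence $f_{i_1},\ldots,f_{i_n}$ is a Hamiltonian cycle of $G$. \emph{(completeness)} any Hamiltonian cycle with edge sequence $f_{i_1},\ldots,f_{i_n}$ yields a conflict-free, left-closed configuration $\{\bot\} \cup \{e_{f_{i_j},j}\}_{j=1}^n \cup \{e_{f_{i_j},f_{i_{j+1}},j}\}_{j=1}^n$ (indices mod $n$) that can be extended to a maximal one whose only non-$\varepsilon$ labels are the $n$ occurrences of $x$.

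The polynomial size of $\les^G$ is immediate: there are $O(nk)$ edge-events and $O(nk^2)$ witness events, and the number of conflicts is polynomial. The main obstacle I anticipate is the careful analysis of \emph{maximality} on the completeness side and of \emph{uniqueness} of the selection on the soundness side; in particular, one must show that no additional $x$-event can be smuggled into a Hamiltonian configuration (so that its trace still has exactly $n$ occurrences of $x$), which hinges on each witness $e_{f_p,f_q,j}$ being incompatible with the selected edges as soon as $(f_p,f_q)$ does not match the cycle's consecutive pair at position $j$. A second routine check is that the conflict relation is closed under $<$: since non-witness conflicts are declared only between direct successors of $\bot$, closure under $<$ propagates automatically to their descendants.
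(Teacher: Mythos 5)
Your proposal is correct and follows essentially the same route as the paper: a reduction from Hamiltonian cycle in which position-assignment events $e_{f,j}$ (labeled $\varepsilon$) are constrained by exactly the three conflict families you list, connectedness is witnessed by $x$-labeled events $e_{f,f',j}$ caused by $e_{f,j}$ and $e_{f',(j \bmod n)+1}$, and membership of $x^n$ is checked. The obstacles you flag (maximality on the completeness side, and that no extra $x$-event can be added once the edge selection is fixed) are precisely the points the paper's detailed proof discharges, so nothing is missing.
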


\begin{proof}
\review{
For a directed graph $G=(\{v_1,\ldots,v_n\},\{f_1,\ldots,f_k\})$ we construct an event structure $\les^G$, such that $x^n \in \lang(\les^G)$ iff $G$ has a Hamiltonian cycle. $\les^G$ is shown in \cref{fig:ex-three} and we present the main arguments why this reduction is correct here. 
A detailed, formal proof is given in \cref{sec:langincproof}.

Configurations of the event structure encode a sequence of $n$ edges.
If event $e_{f,i}$ is included in the configuration it means that edge $f$ is at position $i$ in the sequence of edges. 
To ensure that every vertex is visited, edges with the same target are in conflict. 
Since $n$ edges need to be selected, there are $n$ vertices, and every vertex is a target of some selected edge, every vertex is visited once by the selected edges.
To ensure that the sequence of edges actually forms a cycle they need to be connected. 
Events $e_{f,i}$ and $e_{f',{i+1} \mathsf{mod} n}$ for which the target of $f$ is the source of $f'$ cause an $x$-labeled event $e_{f,f',i}$.
Therefore, only configurations that represent a cycle form the word $x^n$.

In summary, checking the membership of $x^n$ 
amounts to checking whether there exists a Hamiltonian cycle in $G$.
The reduction clearly is polynomial.
}
%According to \cref{lem:cycle} the existence of a Hamiltonian cycle means that there are $n$ edges with different source and target vertices. This implies that at least $n$ events labeled with $x$ are enabled at the same time. There are also at most $n$ events with label $x$ enabled as otherwise two vertices would have to have the same source and therefore be in conflict. This means that $w$ is a valid word in $\lang(\les^G)$.
\qed
\end{proof}

\subsection{Language Inclusion is $\Pi^p_2$-complete}

The \emph{finite prime event structure language inclusion problem} for FLES $\les_1$ and $\les_2$ is the problem of deciding whether $\lang(\les_1) \subseteq \lang(\les_2)$.

$\Pi^p_2$ is a complexity class from the polynomial hierarchy. It intuitively represents a $\forall\exists$ quantifier alternation.
To show inclusion, we use the definition of $\Pi^p_2$ given by Wrathall~\cite{wrathall1976complete},
providing semantics for the complexity class in terms of formal languages.
These languages should not be confused with the particular type of languages we discuss in this work.
In contrast, such languages encode problem instances and candidate witnesses.

Formally, a language $L$ is in $\Pi^p_2$ iff there exists a polynomially decidable language $L'$, such that $x\in L \Leftrightarrow \forall y_1 \exists y_2 [\langle x,y_1,y_2\rangle \in L']$.
A language $L'$ is polynomially decidable if $w \in L'$ can be decided in polynomial time.
The $x$ represents an encoding of the problem instance as a string.
The $y_1$ and $y_2$ represent string encodings of witnesses to a sub-problem.

%\gwsays{The following paragraph is optional.}
%This methodology is well known for showing inclusion in the complexity classes {\em NP} and {\em coNP} 
%for which it boils down to providing polynomially sized counterexamples and proofs respectively.

We fix two $\leslabel$-labeled FLES $\les_1 = \langle E_1, <_1, \#_1, h_1\rangle$ and $\les_2 = \langle E_2, <_2, \#_2, h_2\rangle$.

\begin{theorem}
\label{thm:inclusion_in_pip2}

Finite prime event structure language inclusion is in $\Pi^p_2$.

\end{theorem}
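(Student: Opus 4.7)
The plan is to apply the definition of $\Pi^p_2$ given in the excerpt directly, by expressing inclusion in the $\forall\exists$ form: $\lang(\les_1)\subseteq\lang(\les_2)$ holds iff for every trace $\trace_1$ of $\les_1$, there exists a trace $\trace_2$ of $\les_2$ such that $h_1(\trace_1)=h_2(\trace_2)$. The universally quantified witness $y_1$ will encode a candidate sequence of events from $E_1$, and the existentially quantified witness $y_2$ will encode a candidate sequence of events from $E_2$. Since any trace is a permutation of a maximal configuration, $|y_1|$ is bounded by $|E_1|$ and $|y_2|$ by $|E_2|$, so both witnesses are polynomially sized in $x := \langle \les_1, \les_2\rangle$.

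Next I would define the auxiliary language $L'$ via the predicate $P(x,y_1,y_2)$ that holds exactly when either (i) $y_1$ is not a valid trace of any maximal configuration of $\les_1$, or (ii) $y_2$ is a valid trace of a maximal configuration of $\les_2$ with $h_1(y_1)=h_2(y_2)$. The disjunct (i) is needed to handle $y_1$-strings that fail to encode legitimate traces, so that the $\forall$-quantifier vacuously accepts them. To argue that $P$ is polynomially decidable, I reuse the observation already relied on in the proof of \cref{thm:membership_in_np}: verifying that a sequence of events is a trace of a maximal configuration amounts to checking left-closure, conflict-freedom, and maximality of its underlying set, plus consistency of the order with $<$, all of which run in polynomial time; pointwise comparison of the induced label sequences is linear.

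Finally I would verify the equivalence $x\in L \Leftrightarrow \forall y_1\,\exists y_2\,[P(x,y_1,y_2)]$. For the forward direction, given inclusion, any $y_1$ that fails to be a trace is handled by clause (i) with arbitrary $y_2$; otherwise $h_1(y_1)\in\lang(\les_1)\subseteq\lang(\les_2)$ provides a matching trace $y_2\in T(\les_2)$. For the converse, if $w\in\lang(\les_1)$, pick a trace $\trace_1\in T(\les_1)$ with $h_1(\trace_1)=w$; the hypothesis yields a trace $\trace_2\in T(\les_2)$ with $h_2(\trace_2)=w$, hence $w\in\lang(\les_2)$.

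I do not expect a serious obstacle here: the argument is a textbook application of the $\Pi^p_2$ definition. The one point that needs mild care is the treatment of malformed $y_1$ under the universal quantifier, which is dispatched by the disjunctive structure of $P$. The rest reduces to the polynomial-time checks for trace validity that the paper already establishes.
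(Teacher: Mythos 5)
Your proposal is correct and follows essentially the same route as the paper: your predicate $P(x,y_1,y_2)$, phrased as the disjunction ``$y_1$ is not a trace of $\les_1$, or $y_2$ is a matching trace of $\les_2$'', is exactly the paper's auxiliary language $L'$ written with an implication, and both rely on the same polynomial-time trace-membership and label-equality checks. The only difference is presentational: you spell out the witness-size bounds and the two directions of the equivalence, which the paper leaves implicit.
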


\begin{proof}

Language inclusion $\lang(\les_1) \subseteq \lang(\les_2)$ amounts to checking whether $\forall w\in \lang(\les_1)\Rightarrow w \in \lang(\les_2)$.
In terms of traces this can be expressed as
$\forall \trace_1 \in T(\les_1) .\ \exists \trace_2 \in T(\les_2) .\ h_1(\trace_1)=h_2(\trace_2)$, meaning that for every trace in $\les_1$ there has to be a trace in $\les_2$ corresponding to the same word in the common alphabet $\leslabel$.

We define $L := \{\langle \les_1, \les_2 \rangle \mid \lang(\les_1) \subseteq \lang(\les_2)\}$ and $L' := \{\langle \langle \les_1, \les_2 \rangle,\trace_1,\trace_2\rangle \mid \trace_1 \in T(\les_1) \Rightarrow \big( h_1(\trace_1)=h_2(\trace_2) \wedge \trace_2 \in T(\les_2) \big)\}$.
By the argument above, we obtain the desired form $x\in L$ iff $\forall y_1 \exists y_2 [\langle x,y_1,y_2\rangle \in L']$ to show $\Pi^p_2$ inclusion.
Furthermore, $L'$ can be decided deterministically in polynomial time, because trace membership, as well as label equality, can be decided in polynomial time.
\qed
\end{proof}

To show $\Pi^p_2$ hardness, we present a reduction from the Dynamic Hamiltonian Cycle (DHC) problem to the finite prime event structure language inclusion problem.
Given an undirected graph $G=(V,F)$ and a set $B \subseteq F$, graph $G$ and $B$ form a DHC if for every set $D \subseteq B$ with $|D| \leq |B|/2$, the graph $G_D=(V,F\setminus D)$ has a Hamiltonian cycle. We define $n:=|V|$, $k:=|F|$, $m:=|B|$, and $bh:=\lfloor |B|/2 \rfloor$.
Essentially DHC, in comparison to HC, has an additional universal quantifier over subsets of $B$.
DHC is known to be $\Pi^p_2$-complete \cite{ko1995complexity}.

\begin{restatable}{theorem}{lemdyncycle}
\label{lemma:dyn_ham_cycle}
\label{thm:inclusion_pip2-hard}
Finite prime event structure language inclusion is $\Pi^p_2$-hard.
\end{restatable}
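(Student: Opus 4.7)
The plan is to reduce DHC to FLES language inclusion by building $\les_1$ and $\les_2$ that share a common ``selector'' prefix encoding a choice of $D\subseteq B$ with $|D|\leq \bhalf$, such that $\les_2$ additionally produces a fixed suffix $x^n$ iff $G_D$ admits a Hamiltonian cycle. Since DHC quantifies universally over such $D$ and existentially over cycles, this matches the $\forall\trace_1\exists\trace_2$ structure of inclusion established in the proof of \cref{thm:inclusion_in_pip2}.

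Concretely, both structures will contain $\bhalf$ concurrent \emph{selector slots}; each slot $i\in\{1,\ldots,\bhalf\}$ consists of $m+1$ pairwise-conflicting events, namely an event $s_{i,b}$ with label $c_b$ for every $b\in B$ plus a ``skip'' event labelled $\varepsilon$. Every maximal configuration must pick exactly one event per slot, so the emitted selector labels encode some $D\subseteq B$ with $|D|\leq \bhalf$ (skips and repeated choices of the same edge yield a strictly smaller $D$). In $\les_1$ all selectors causally precede a simple chain of $n$ events labelled $x$, so $\lang(\les_1)$ consists of exactly the words $\pi(c_{d_1},\ldots,c_{d_{\bhalf}})\cdot x^n$ ranging over all such selections and all permutations $\pi$. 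In $\les_2$ the selectors causally precede a copy of the HC gadget $\les^G$ from \cref{thm:membership_np_hard} (applied to the bidirected version of $G$, so that undirected Hamiltonian cycles correspond to directed ones), and I additionally impose conflicts $s_{i,b}\,\#\, e_{b,j}$ for every slot $i$, every (bidirected copy of an) edge $b\in B$, and every position $j\in\{1,\ldots,n\}$. Selecting $b$ in any slot thereby forbids $b$ from participating in the cycle.

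Correctness should then follow from a direct application of the analysis of \cref{thm:membership_np_hard}: for any maximal configuration of $\les_1$ the selectors determine some $D\subseteq B$ with $|D|\leq \bhalf$ and yield a word $w=\pi(c_{d_1},\ldots,c_{d_{\bhalf}})\cdot x^n$, and by the argument of that theorem applied to $G_D$, we have $w\in\lang(\les_2)$ iff the gadget restricted to edges in $F\setminus D$ produces $x^n$, i.e., iff $G_D$ has a Hamiltonian cycle. Hence $\lang(\les_1)\subseteq\lang(\les_2)$ holds iff every admissible $D$ yields a Hamiltonian cycle in $G_D$, which is precisely the DHC condition. The reduction is clearly polynomial: the selector component contributes $\bhalf(m+1)$ events and the gadget contributes $O(kn)$ events.

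The main obstacle I expect is the careful handling of the conflict relation, which must remain closed under $<$: the direct conflicts $s_{i,b}\,\#\, e_{b,j}$ must be transported along the causal chains of $\les^G$ to all downstream events (in particular the adjacency events $e_{f,f',j}$), and one must verify that these propagated conflicts do not inadvertently eliminate configurations that should survive. A secondary subtlety is ensuring that repeated selections of the same edge across slots and $\varepsilon$-skips correctly correspond to set-level subsets of size at most $\bhalf$; this is resolved by observing that repeated $s_{i,b}$'s merely impose redundant exclusions and that skip events emit no visible label, so the induced set of forbidden edges always has cardinality at most $\bhalf$.
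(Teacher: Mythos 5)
Your reduction is correct and targets the same source problem (DHC) with the same core gadget (the Hamiltonian-cycle event structure from \cref{thm:membership_np_hard}, applied to the bidirected graph, with selected edges knocked out by conflicts that propagate through the gadget by $<$-closure), but it handles the cardinality constraint $|D|\leq |B|/2$ by a genuinely different mechanism. The paper lets $\les^{G,B}_1$ enumerate \emph{all} $2^m$ subsets via conflicting pairs $ein_i\,\#\,eout_i$, communicates $|D|$ to $\les^{G,B}_2$ through a unary count of $y$-labels ($eD_i$), and then uses the $\mathit{efix}_{|D|}$ events to switch $\les^{G,B}_2$ between a trivial $x^n$-acceptor (the $ev$-chain) when $|D|>\bhalf$ and the HC gadget when $|D|\leq\bhalf$. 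You instead restrict $\les_1$ at the source: $\bhalf$ concurrent selector slots, each choosing one edge label or a skip, so every word of $\les_1$ already encodes a set of size at most $\bhalf$ and no counting or escape hatch is needed in $\les_2$ — the $y$-labels, $\mathit{efix}$ events and the $ev$-chain in $\les_2$ all disappear. The price is that your selector component is not injective on subsets (the same $D$ arises from many slot assignments, and a label can repeat across slots), but as you note this is harmless: $\les_2$ carries an identical selector component and can reproduce any multiset of selector labels, and repeated selections only impose redundant exclusions, so membership of a word $w$ in $\lang(\les_2)$ still reduces exactly to the existence of a Hamiltonian cycle in $G_{D(w)}$. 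Your two flagged verification points (conflict closure under $<$ into the $e_{f,f',j}$ events, and the multiset-versus-set bookkeeping) are precisely the right ones and both go through; the paper's proof discharges the analogous closure argument for its conflicts $ein_i\,\#\,e_{\vec{f}_i,j}$ in the same way. Overall your variant buys a leaner construction and a shorter correctness argument; the paper's variant buys a selector component whose words are in bijection with subsets, which makes the case split on $|D(w)|$ explicit.
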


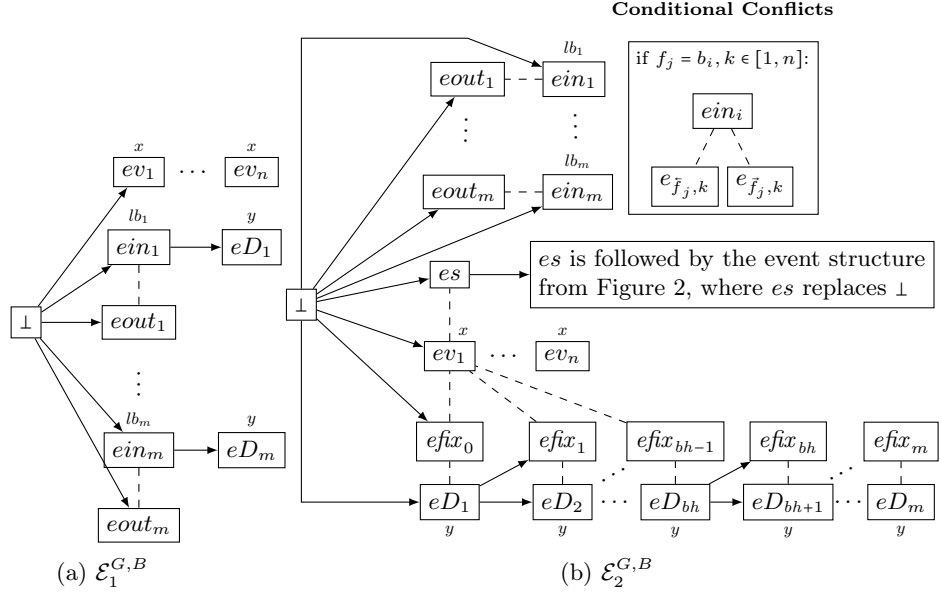
\begin{figure}[tb]
\begin{subfigure}[t]{0.2\textwidth}
\centering
\begin{tikzpicture}[grow'=right, sloped,scale=0.5,level 1/.style={sibling distance=2cm,level distance=3cm},level 2/.style={sibling distance=2cm,level distance=3cm}]
	\node[event] {$\bot$}
	child {
		node[event,label={above:$x$}] (ev1) {$ev_1$}
		edge from parent
    child {
      node[event,label={above:$x$}] (evn) {$ev_n$}
      edge from parent[draw=none]
    }
	}
  child {
		node[event,label={above:$lb_1$}] (eb1) {$ein_1$}
		edge from parent
    child {
      node[event,label={above:$y$}] (ebb1) {$eD_1$}
      edge from parent
    }
	}
  child {
		node[event] (ebbb1) {$eout_1$}
		edge from parent
	}
  child {
		node[event,label={above:$lb_m$},yshift=-.7cm] (ebm) {$ein_m$}
		edge from parent
    child {
      node[event,label={above:$y$}] (ebbm) {$eD_m$}
      edge from parent
    }
	}
  child {
		node[event,yshift=-.7cm] (ebbbm) {$eout_m$}
		edge from parent
	};
	
  \draw[invisible] (ev1) --node[auto=false]{\ldots} (evn);
  \draw[conflict] (eb1) -- (ebbb1);
  \draw[conflict] (ebm) -- (ebbbm);
  \draw[invisible] (ebbb1) --node[auto=false]{\ldots} (ebm);
\end{tikzpicture}
\caption{$\les^{G,B}_1$}
\label{fig:ex-g1}
\end{subfigure}\hfill
\begin{subfigure}[t]{0.7\textwidth}
\centering
\begin{tikzpicture}[grow'=right, sloped,scale=0.5,level 1/.style={sibling distance=2.5cm,level distance=3cm},level 2/.style={sibling distance=2.5cm,level distance=3cm}]
	\node[event] (bot) {$\bot$};
  \node[event,label={above:$lb_1$},right=3cm of bot,yshift=3.0cm] (eb1) {$ein_1$};
  \node[event,left=0.5cm of eb1] (ebbb1) {$eout_1$};
		
	\node[event,label={above:$lb_m$},right=3cm of bot,yshift=1.5cm] (ebm) {$ein_m$};
  \node[event,left=0.5cm of ebm] (ebbbm) {$eout_m$};
	\node[above=.2cm of ebbb1] (silent) {};
  \draw[edge from parent] (bot) |- (silent.center) -- (eb1);
  \draw[edge from parent] (bot) -- (ebbb1);
  \draw[edge from parent] (bot) -- (ebm);
  \draw[edge from parent] (bot) -- (ebbbm);
  \draw[invisible] (eb1) --node[auto=false,xshift=-1mm]{\ldots} (ebm);
  \draw[invisible] (ebbb1) --node[auto=false,xshift=1mm]{\ldots} (ebbbm);
  
  \node[event,right=1.5cm of bot,yshift=0.4cm] (es) {$es$};
  
  \draw[edge from parent] (bot) -- (es);
  
  \node[event,label={80:$x$},below=0.7cm of es] (ev1) {$ev_1$}
  child {
    node[event,label={above:$x$}] (evn) {$ev_n$}
    edge from parent[draw=none]
  };
  \draw[conflict] (ev1) -- (es);
	
	\draw[edge from parent] (bot) -- (ev1);
	
  \node[event,label={below:$y$},below=1.5cm of ev1] (ey1) {$eD_1$}
  child {
    node[event,label={below:$y$}] (ey2) {$eD_2$}
    edge from parent
    child {
      node[event,label={below:$y$}] (eyhalf) {$eD_{\bhalf}$}
      edge from parent[draw=none]
      child {
        node[event,label={below:$y$}] (eyhalf1) {$eD_{\bhalf+1}$}
        edge from parent
        child {
          node[event,label={below:$y$}] (eym) {$eD_m$}
          edge from parent[draw=none]
        }
      }
    }
  };
  \draw[edge from parent] (bot) |- (ey1);

  \node[event,above=0.3cm of ey1] (eyy1) {$\mathit{efix}_0$};
  \draw[edge from parent] (bot) -- (eyy1);
  \draw[conflict] (ey1) -- (eyy1);
  \draw[conflict] (eyy1) -- (ev1);
  \node[event,above=0.3cm of ey2] (eyy2) {$\mathit{efix}_1$};
  \draw[edge from parent] (ey1) -- (eyy2);
  \draw[conflict] (ey2) -- (eyy2);
  \draw[conflict] (eyy2) -- (ev1);
  \node[event,above=0.3cm of eyhalf] (eyyhalf) {$\mathit{efix}_{\bhalf-1}$};
  \draw[conflict] (eyhalf) -- (eyyhalf);
  \draw[invisible] (ey2) --node[auto=false]{\ldots} (eyyhalf);
  \draw[conflict] (eyyhalf) -- (ev1);
  
  \node[event,above=0.3cm of eyhalf1] (eyyhalf1) {$\mathit{efix}_{\bhalf}$};
  \draw[edge from parent] (eyhalf) -- (eyyhalf1);
  \draw[conflict] (eyhalf1) -- (eyyhalf1);
  \node[event,above=0.3cm of eym] (eyym) {$\mathit{efix}_m$};
  \draw[conflict] (eym) -- (eyym);
  
  \draw[invisible] (ey2) --node[auto=false]{\ldots} (eyhalf);
  \draw[invisible] (eyhalf1) --node[auto=false]{\ldots} (eym);
  \draw[invisible] (eyhalf1) --node[auto=false]{\ldots} (eyym);

  \draw[conflict] (eb1) -- (ebbb1);
  
  \draw[conflict] (ebm) -- (ebbbm);

  \draw[invisible] (ev1) --node[auto=false]{\ldots} (evn);
  
  % draw the box
  \begingroup
  \node[above=2.8cm of bot,xshift=5.6cm] (node1) {\scriptsize if $f_j = b_i, k \in [1,n]$:};
  \node[above=2.4mm of node1] (node0) {\scriptsize \bf Conditional Conflicts};
  \node[event,below=2mm of node1] (seini) {$ein_i$};
  \node[event,below=5mm of seini,xshift=5mm] (sevfj) {$e_{\vec{f}_j,k}$};
  \node[event,below=5mm of seini,xshift=-5mm] (secfj) {$e_{\cev{f}_j,k}$};
  \draw[conflict] (seini) -- (sevfj);
  \draw[conflict] (seini) -- (secfj);
  
  \draw ($ (node1.north east) + (0,0) $) rectangle ($ (secfj.south west) - (6mm,2mm) $);
  
  \node[right=.8cm of es,align=left,draw,rectangle] (box) {$es$ is followed by the event structure\\ from \cref{fig:ex-three}, where $es$ replaces $\bot$};
  \draw[edge from parent] (es) -- (box);
  \endgroup
\end{tikzpicture}
\caption{$\les^{G,B}_2$}
\label{fig:ex-g2}
\end{subfigure}
\caption{Event structures for the language inclusion hardness proof. We use $\ldots$ to indicate omitted events.}
\label{fig:ex-hardness}
\end{figure}

\begin{proof}
\review{

For an undirected graph $G=(V,F)$ and set $B= \{b_1,\ldots,b_m\} \subseteq F$ we construct event structures $\les^{G,B}_1$ and $\les^{G,B}_2$, such that $\lang(\les^{G,B}_1)\subseteq \lang(\les^{G,B}_2)$ iff $G, B$ satisfy DHC. 
$\les^{G,B}_1$ and $\les^{G,B}_2$ are shown in \cref{fig:ex-hardness} and we present the main arguments why this reduction is correct here.
A detailed, formal proof, as well as an example, are given in \cref{sec:langincproof}.

The idea of the proof is to encode subsets $D$ of $B$ via events $ein_i$ and $eout_i$ with $i \in [1,m]$ in both $\les^{G,B}_1$ and $\les^{G,B}_2$.
Events $ein_i$ are labeled with $lb_i$ and represent $b_i \in D$, whereas $eout_i$ are labeled with $\varepsilon$ and represent $b_i \notin D$.
Furthermore, the cardinality of $D$ is encoded in $\les^{G,B}_1$ via $y$-labeled events $eD_i$.
In contrast $y$-labeled events $eD_i$ in $\les^{G,B}_2$ are not used to count $|D|$, but to differentiate whether or 
not a Hamiltonian cycle is required to show DHC (i.e. whether $|D| \leq \frac{|B|}{2}$).
For every $i \in [1,m]$, event $\mathit{efix}_i$ is used to guarantee the existence of maximal configurations in $\les^{G,B}_2$ with $i$ $y$-labeled events.
In case a Hamiltonian cycle is required to show DHC for some set $D$, 
we encode $G_D$ using the same event structure as in the proof of \cref{thm:membership_np_hard}, 
excluding edges from $D$ via conflicts of events $ein_i$.

Our Hamiltonian cycle encoding used in the proof of \cref{thm:membership_np_hard}
operates on directed edges, but DHC is defined for undirected graphs.
Therefore, we replace every edge $f$ in $G$ with two edges in opposing directions,
 denoted by $\vec{f}$ and $\cev{f}$.
Clearly, every Hamiltonian cycle in the directed version corresponds to a Hamiltonian cycle in the undirected graph.
In order to faithfully represent restricted graphs $G_D$, 
we make sure always to exclude all directed edges corresponding to edges in $D$ 
when looking for a Hamiltonian cycle in $G_D$.

Every subset $D$ of $B$ is encoded by some word in $\lang(\les^{G,B}_1)$ via labels $lb_i$ of events $ein_i$.
Since $ein_i$ is in conflict with $eout_i$ maximal configurations can only include one or the other.
That is, words exactly enumerate all subsets $D$ of $B$. Furthermore, similarly as for the proof of \cref{thm:membership_np_hard}, every $w \in \lang(\les^{G,B}_1)$ contains $n$ times the label $x$.

Membership of words of $\lang(\les^{G,B}_1)$ in $\lang(\les^{G,B}_2)$ only depends on whether the encoded subset $F\setminus D$ induces a Hamiltonian cycle:
Words that encode a $D$ such that $|D| > \frac{|B|}{2}$ are always
in $\lang(\les^{G,B}_2)$, because they are trivially accepted by events $ev_i$. 
In contrast, for $D$ such that $|D| \leq \frac{|B|}{2}$ the event $\mathit{efix}_{|D|}$ is in concflict with $ev_1$, 
thereby preventing a trivial acceptance of words in $\lang(\les^{G,B}_1)$.
Therefore, $x$ labels of words in $\lang(\les^{G,B}_2)$ that encode $D$ such that $|D| \leq \frac{|B|}{2}$ 
must be of events caused by $es$.
The events caused by $es$ exactly encode Hamiltionian cycles in $G_D$, similarly to the proof of \cref{thm:membership_np_hard}.

Since the two cases are exhaustive and cover every subset $D$ of $B$, 
we get $\lang(\les^{G,B}_1) \subseteq \lang(\les^{G,B}_2)$ iff $G$ and $B$ satisfy DHC.
The reduction is polynomial as can be easily observed by the event structures in \cref{fig:ex-hardness}.
}
\qed
\end{proof}

\section{Deciding Language Inclusion}
\label{sec:algorithm}

In this section, we introduce a decision algorithm for the FLES language inclusion problem.
Furthermore, we provide a language preserving translation of event structures into non-deterministic finite automata (NFAs), which allows us to compare our algorithm to NFA language inclusion.
We start by introducing necessary concepts for our decision algorithm.

\paragraph*{Configuration as an event structure}
Given an event structure $\les = \langle E, <, \#, h\rangle$ and a configuration $C$, 
we denote its corresponding event structure as 
$\les^C := \langle C, <_{\lceil C\times C}, \emptyset, h_{\lceil C} \rangle $, 
where $X_{\lceil Y}$ denotes the restriction of $X$ to $Y$.
For ease of presentation, when describing our algorithms, 
we abuse notation and do not differentiate between a configuration and its corresponding event structure.
Furthermore, in the following presentation of the algorithms, we assume that the causality relations and 
labeling functions of configurations $C_i$ for $i \in \{1,2\}$ are implicitly given by the event structure interpretation over $\les_i$.

\review{
\paragraph*{$\varepsilon$-free configurations}
For every configuration $C$, there is a configuration with the same language whose only $\varepsilon$-labeled event is $\bot$.
This $\varepsilon$-free configuration can be obtained simply by removing all $\varepsilon$-labeled events besides $\bot$ 
from its corresponding event structure, in particular from $C$ and $<_{\lceil C\times C}$.
The resulting $\varepsilon$-free configuration has the same language as the initial configuration, 
because the causality relation is transitive.
Furthermore, $\varepsilon$-labeled events do not modify the words and 
thus removing them does not influence the language of the configuration.
Therefore, in order to improve readability, 
from hereon we assume without loss of generality that configurations are $\varepsilon$-free.
We keep the $\bot$ event to improve readability, even though for our purpose this event is not required neither.
Note that $\varepsilon$-labeled events are useful during the construction phase of the event structure representing, for example, 
hidden transitions or non-deterministic choices.
}

\paragraph*{Embeddings}

\begin{figure}[tbp]
\centering
\begin{tikzpicture}[grow'=right, sloped,scale=0.5]
	\node[event,label={above:$\varepsilon$}] (bot1) {$\bot$}
	child {
		node[event,label={above:A}] (e1) {$e_1$}
		edge from parent
	}
	child {
		node[event,label={below:B}] (e2) {$e_2$}
		edge from parent
		child {
			node[event,label={below:A}] (e3) {$e_3$}
			edge from parent
		}
	};
  \node [below=5mm of e2,inner sep=0pt]{\parbox{3cm}{\subcaption{Structure 1\label{fig:ex-str1}}}};
	
	\node[event,label={above:$\varepsilon$},right=4cm of bot1] (bot2) {$\bot$}
	child {
		node[event,label={above:A}] (e4) {$e_4$}
		edge from parent
		child {
			node[event,label={above:B}] (e5) {$e_5$}
			edge from parent
		}
	}
	child {
		node[event,label={below:A}] (e6) {$e_6$}
		edge from parent
	};
  \node [below=5mm of e6,inner sep=0pt] (str2) {\parbox{3cm}{\subcaption{Structure 2\label{fig:ex-str2}}}};

\draw[->,dotted,line width=0.3mm] (bot1) -- (bot2);  
\draw[->,dotted,line width=0.3mm] (e1) to[bend left=10] (e4);
\draw[->,dotted,line width=0.3mm] (e2) to[bend right=5] (e5);
\draw[->,dotted,line width=0.3mm] (e3) to[bend left=10] (e6);
	
\begingroup
  \tikzset{level 1/.style={level distance=2cm},level 2/.style={level distance=2cm},level 3/.style={level distance=2cm}}
	\node[event,label={above:$\varepsilon$},right=4cm of bot2] (bot3) {$\bot$}
	child {
		node[event,label={above:A}] (e7) {$e_7$}
		edge from parent
		child {
			node[event,label={above:B}] (e8) {$e_8$}
			edge from parent
			child {
				node[event,label={above:A}] (e9) {$e_9$}
				edge from parent
			}
		}
	};
\endgroup
\node [right=1cm of str2,inner sep=0pt]{\parbox{3cm}{\subcaption{Structure 3\label{fig:ex-str3}}}};

\draw[->,dashdotted,line width=0.3mm] (bot3) -- (bot2);
\draw[->,dashdotted,line width=0.3mm,rounded corners] (e7) to[bend right=20] ($ (e7)!0.5!(e4) $) to[bend left=20] (e4);
\draw[->,dashdotted,line width=0.3mm] (e8) to[bend right=20] (e5);
\draw[->,dashdotted,line width=0.3mm] (e9) to[bend left=10] (e6);
\end{tikzpicture}
\vspace{-5mm}
\caption{Necessary and sufficient embeddings.\\
$\varphi: \bot \mapsto \bot; e_1 \mapsto e_4; e_2 \mapsto e_5; e_3\mapsto e_6$ is a necessary embedding (dotted arrows).\\
$\varphi: \bot \mapsto \bot; e_7 \mapsto e_4; e_8 \mapsto e_5; e_9 \mapsto e_6$ is a sufficient embedding (dash-dotted arrows).
}
\label{fig:ex-necessary}
\end{figure}

%This requires any comparison to first split the FLES into maximal configurations. Events labeled with $\varepsilon$ are simply omitted from the configuration. Though this is technically not a valid configuration as it is no longer left closed we will consider such configurations as valid by a small abuse of notation and for the rest of the section only consider $\varepsilon$-free maximal configurations.

An embedding is a structure-preserving one-to-one mapping between events of two configurations from different event structures. 
We consider two different types of embeddings that vary in their strictness in terms of structure preservation.
\review{Since embeddings are defined between configurations, conflicts do not play a role in these considerations.}
In order to use these embeddings for deciding language inclusion between two FLES, 
we assume that in a step prior to searching for embeddings, the maximal configurations of both $\les_1$ and $\les_2$ are computed.
This can, for example, be achieved with the algorithm presented in \cite{Rodriguez2015}.

In the following we consider two configurations $C_1$ and $C_2$ of two $\leslabel$-labeled FLES $\les_1=\langle E_1, <_1, \#_1, h_1\rangle$ respectively
 $\les_2=\langle E_2, <_2, \#_2, h_2\rangle$.

\begin{definition}[Necessary Embedding]
A mapping $\varphi: C_1 \rightarrow C_2$ is a \emph{necessary embedding} if
A) $\varphi$ is bijective, B) $\forall e \in C_1: h_1(e) = h_2(\varphi(e))$, and C) $\forall e \in C_1: \neg \big( e (<_1 \cup <_2^\varphi)^+ e \big)$, 
%
%\begin{align}
%&\varphi \text{ is bijective}, \\
%&\forall e \in C_1: h_1(e) = h_2(\varphi(e)),\\
%&\forall e \in C_1: \neg \big( e (<_1 \cup <_2^\varphi) e \big) \label{align:inverted}
%\end{align}
%
where $.^+$ denotes transitive closure and $<_2^\varphi$ denotes the relation $<_2$ mapped to the events of $C_1$.
Formally $<_2^\varphi:=\{(\varphi^{-1}(e_1),\varphi^{-1}(e_2)) \mid \exists e_1,e_2 \in C_2.\ e_1 <_2 e_2 \}$.
For a necessary embedding $\varphi$ from $C_1$ to $C_2$, we write $C_1~\sim^\varphi_N~C_2$.
We write $C_1~\sim_N~C_2$ if there exists a necessary embedding $\varphi$ such that $C_1~\sim^\varphi_N~C_2$.
\end{definition}

A necessary embedding implies that the two configurations have a common word, 
by requiring they have the same number of events with the same labels and 
that their partial orders are not contradicting each other.
\review{Note that the relation $\sim_N$ is symmetric, since for a necessary embedding 
$\varphi: C_1 \rightarrow C_2$, $\varphi^{-1}$ is a necessary embedding from $C_2$ onto $C_1$
}
\begin{example}
Consider the configurations in \cref{fig:ex-str1,fig:ex-str2}. 
There are only two label-preserving bijections between the configurations:
$\varphi_1: \bot \mapsto \bot; e_1 \mapsto e_6; e_2 \mapsto e_5; e_3\mapsto e_4$ and 
$\varphi_2: \bot \mapsto \bot; e_1 \mapsto e_4; e_2 \mapsto e_5; e_3\mapsto e_6$.

The mapping $\varphi_1$ is not a necessary embedding, since $e_2 (<_1 \cup <_2^\varphi) e_2$, which violates C).
To see this, consider the chain of events $e_2<_1 e_3 <_2^\varphi e_2$, where $e_3 = \varphi^{-1}(e_4), e_2 = \varphi^{-1}(e_5),$ and $e_4 <_2 e_5$.
In contrast, $\varphi_2$ is a necessary embedding and a witness to the common word ABA of both configurations.
\end{example}

\begin{restatable}{lemma}{lemmasimempty}
\label{lem:simempty}

Let $C_1$ and $C_2$ be maximal configuration of FLES $\les_1$ respectively $\les_2$.\\
$C_1~\sim_N~C_2$ if and only if $\mathcal{L}(C_1) \cap \mathcal{L}(C_2) \neq \emptyset$.

\end{restatable}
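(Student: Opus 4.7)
I would prove both directions by explicit constructions: turn a necessary embedding into a common word by linearising a combined order, and turn a common word into a necessary embedding by matching positions in witness traces.

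For the ``only if'' direction, given a necessary embedding $\varphi$ from $C_1$ to $C_2$, condition C) ensures that $(<_1 \cup <_2^\varphi)^+$ is irreflexive, hence a strict partial order on $C_1$. I would pick any linear extension $\trace_1 = \langle e_1, \ldots, e_n\rangle$ of this combined order. Since the extension refines $<_1$, the sequence $\trace_1$ is a trace of $C_1$. Since it also refines $<_2^\varphi$, the pointwise image $\varphi(\trace_1) = \langle \varphi(e_1), \ldots, \varphi(e_n)\rangle$ respects $<_2$ and is therefore a trace of $C_2$. Condition B) then yields $h_1(\trace_1) = h_2(\varphi(\trace_1))$, so this common label sequence witnesses $\mathcal{L}(C_1) \cap \mathcal{L}(C_2) \neq \emptyset$.

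For the ``if'' direction, I would start from witness traces $\trace_1 \in T(C_1)$ and $\trace_2 \in T(C_2)$ producing the same word $w$. Because configurations are assumed to be $\varepsilon$-free (only $\bot$ carries label $\varepsilon$), every non-$\bot$ event contributes exactly one letter of $w$, giving $|C_1| = |C_2| = |w| + 1$. I would then define $\varphi$ by mapping $\bot$ to $\bot$ and the $i$-th non-$\bot$ event of $\trace_1$ to the $i$-th non-$\bot$ event of $\trace_2$. Bijectivity (A) is immediate, and label preservation (B) follows from $h_1(\trace_1) = h_2(\trace_2)$. For (C), both $<_1$ and $<_2^\varphi$ are by construction contained in the total order on $C_1$ induced by $\trace_1$ (because $\trace_1$ respects $<_1$ and $\trace_2$ respects $<_2$), so their union is acyclic.

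The only real obstacle is the bookkeeping around the distinguished $\bot$ event and the length-matching argument; the $\varepsilon$-free assumption introduced just before embeddings makes this routine. It is worth noting that maximality of $C_1$ and $C_2$ is not actually used by either direction: it appears in the hypothesis because the lemma is meant to be applied to the maximal configurations computed from $\les_1$ and $\les_2$ in the inclusion algorithm.
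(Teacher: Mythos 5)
Your proposal is correct and follows essentially the same route as the paper: the forward direction linearises the combined order $(<_1 \cup <_2^\varphi)^+$ to obtain a trace of $C_1$ whose $\varphi$-image is a trace of $C_2$, and the converse direction builds $\varphi$ by matching positions of the two witness traces and checks acyclicity (your direct containment-in-a-total-order argument is just a positive restatement of the paper's proof by contradiction). Your extra remarks on the $\varepsilon$-free bookkeeping and on maximality being unused are accurate but do not change the argument.
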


The following corollary gives rise to a termination criterion of the decision algorithm.
If we find a configuration $C$ in $\les_1$, such that there exists no configuration in $\les_2$ that shares a word with $C$, 
we can abort the search and report non-inclusion.

\begin{corollary}

Let $C,C_1,\ldots,C_n$ be configurations \review{such that $C\neq \emptyset$}.
If $(\forall i = 1, \ldots, n : C_i~\nsim_N C)$ then $\mathcal{L}(C) \nsubseteq \bigcup_{i=1}^n \mathcal{L}(C_i)$.

\end{corollary}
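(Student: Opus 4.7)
The plan is to prove the corollary as an essentially immediate consequence of \cref{lem:simempty}, by exhibiting a concrete witness word in $\mathcal{L}(C)$ that misses every $\mathcal{L}(C_i)$. The argument is short, so I would present it as a direct (non-contradiction) chain, using the contrapositive of the lemma.

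First I would unpack the hypothesis using the lemma: for each $i \in \{1,\ldots,n\}$, $C_i \nsim_N C$ together with \cref{lem:simempty} yields $\mathcal{L}(C_i) \cap \mathcal{L}(C) = \emptyset$. Next I would observe that $\mathcal{L}(C) \neq \emptyset$: since $C$ is a non-empty configuration, it is in particular left-closed, so $\bot \in C$, and $C$ admits at least one trace $\trace$, whence $h(\trace) \in \mathcal{L}(C)$. Pick any $w \in \mathcal{L}(C)$. Then for every $i$, $w \notin \mathcal{L}(C_i)$, because otherwise $w$ would lie in $\mathcal{L}(C_i) \cap \mathcal{L}(C)$, contradicting the previous step. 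Consequently $w \notin \bigcup_{i=1}^n \mathcal{L}(C_i)$ while $w \in \mathcal{L}(C)$, which is exactly the non-inclusion claim.

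The only real subtlety, and what I would be most careful about, is the mismatch in quantification over \emph{maximal} configurations: \cref{lem:simempty} is stated for maximal configurations, whereas the corollary talks about arbitrary non-empty configurations. I see two reasonable ways to dispatch this: (i) note that in the intended usage inside the algorithm the configurations in question are always maximal (as indicated in the paragraph preceding the lemma, where maximal configurations of $\les_1$ and $\les_2$ are precomputed), and state the corollary under that convention; or (ii) observe that the implication actually used here, namely $\mathcal{L}(C_i) \cap \mathcal{L}(C) \neq \emptyset \Rightarrow C_i \sim_N C$, does not require maximality at all: a common word immediately gives a label-preserving bijection between the event sets (same multiset of labels) whose induced combined order is acyclic (because both traces linearise it), hence a necessary embedding. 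I would prefer route (ii), adding a one-line remark so that the corollary is technically self-contained, and then the contradiction argument above closes the proof with \qed.
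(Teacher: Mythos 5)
Your proof is correct and follows exactly the route the paper intends: the corollary is stated without proof as an immediate consequence of \cref{lem:simempty}, and your contrapositive argument (each $\mathcal{L}(C_i)\cap\mathcal{L}(C)=\emptyset$, plus $\mathcal{L}(C)\neq\emptyset$ from $C\neq\emptyset$) is precisely that consequence. Your remark on the maximality mismatch is a genuine improvement in rigor, and your route (ii) is the right fix, since the $\Leftarrow$ direction of the lemma's proof never uses maximality.
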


%\ttsays{is this needed later?}
%Given a relation $R \subseteq C_1\times C_2$, we define $R(e) := \{e' \mid (e,e') \in R\}$, 
%$R^{-1}(e) := \{e' \mid (e',e) \in R\}$, and $R(e,e') := \{(e_1,e_2) \mid e_1 \in R(e) \wedge e_2 \in R(e')\}$.
%Furthermore, we write $e_1 <_2 e_2 \in R(e,e')$ for $(e_1,e_2) \in R(e,e')\wedge e_1 <_2 e_2$.

The second type of embedding has a stronger requirement on structure preservation.
Intuitively, it requires that the source of such an embedding is at least as strict in terms of causality as the target.

\begin{definition}[Sufficient Embedding]
A mapping $\varphi: C_1 \rightarrow C_2$ is a \emph{sufficient embedding} if
A) $\varphi$ is bijective, B) $\forall e \in C_1:\ h_1(e) = h_2(\varphi(e))$, and C) $\forall e_1,e_2 \in C_1:\ \varphi(e_1) <_2 \varphi(e_2) \implies e_1 <_1 e_2$.
%
%\begin{align}
%&\varphi \text{ is bijective}, \\
%&\forall e \in C_1:\ h_1(e) = h_2(\varphi(e)),\\
%&\forall e_1,e_2 \in C_1:\ \varphi(e_1) <_2 \varphi(e_2) \implies e_1 <_1 e_2 \label{align:implies}
%\end{align}
%
If there exists a sufficient embedding $\varphi$ from $C_1$ to $C_2$, we write $C_1~\sqsubset^{\varphi}_S~C_2$.
We write $C_1~\sqsubset_S~C_2$ if there exists sufficient embedding $\varphi$, such that $C_1~\sqsubset^{\varphi}_S~C_2$.

\end{definition}

A sufficient embedding is a witness to language inclusion between configurations.
The reason to work with two kinds of embeddings is that we can construct necessary embeddings
using a backtracking algorithm.
It is easy to check whether a necessary embedding is also sufficient, 
whereas it is not straight forward to construct a sufficient embedding from scratch.

\begin{example}
Consider the configurations in \cref{fig:ex-str2,fig:ex-str3}. 
The mapping $\varphi_1: \bot \mapsto \bot; e_7 \mapsto e_4; e_8 \mapsto e_5; e_9 \mapsto e_6$ is a sufficient embedding.
The only non-trivial causality to check is $e_4 <_2 e_5$, for which we have $\varphi_1^{-1}(e_4) = e_7 <_3 e_8 = \varphi_1^{-1}(e_5)$.
In contrast, $\varphi_2: \bot \mapsto \bot; e_7 \mapsto e_6; e_8 \mapsto e_5; e_9 \mapsto e_4$ is not a sufficient embedding,
since in this case $e_4 <_2 e_5$ and $\varphi_1^{-1}(e_4) = e_9 \nless_3 e_8 = \varphi_1^{-1}(e_5)$.
This shows that the language of the event structure in \cref{fig:ex-str3} is included in language of the event structure in \cref{fig:ex-str2}.
\end{example}

The following Lemma provides a connection between sufficient embeddings and language inclusion.
In case there exists a sufficient embedding, the respective languages are included.

\begin{restatable}{lemma}{lemmasufficient}
\label{lem:sufficient}

Let $C_1$ and $C_2$ be maximal configurations of FLES $\les_1$ and $\les_2$ respectively.
If $C_1~\sqsubset_S~C_2$ then $\mathcal{L}(C_1) \subseteq \mathcal{L}(C_2)$.

\end{restatable}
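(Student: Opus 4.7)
The plan is to prove the inclusion word by word: pick an arbitrary $w \in \mathcal{L}(C_1)$, lift a witnessing trace from $C_1$ to a trace of $C_2$ via $\varphi$, and check that the lifted trace produces the same word. Concretely, suppose $C_1 \sqsubset^\varphi_S C_2$ and let $w \in \mathcal{L}(C_1)$. By definition of $\mathcal{L}(C_1)$ there exists a trace $\tau_1 = \langle e_1,\ldots,e_n\rangle \in T(C_1)$ with $h_1(\tau_1) = w$. The candidate witness in $C_2$ is the pointwise image $\tau_2 := \varphi(\tau_1) = \langle \varphi(e_1),\ldots,\varphi(e_n)\rangle$.

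First I would verify that $\tau_2$ is a sequence enumerating $C_2$ exactly once. Since $\varphi$ is bijective (condition A) and each $e_i \in C_1$ appears exactly once in $\tau_1$, each element of $C_2$ appears exactly once in $\tau_2$. Next, I would verify the labeling condition: condition B gives $h_2(\varphi(e_i)) = h_1(e_i)$ for each $i$, so $h_2(\tau_2) = h_1(\tau_1) = w$, which is exactly what we need for $w \in \mathcal{L}(C_2)$ once $\tau_2$ is shown to be a trace.

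The crux is verifying that $\tau_2$ respects the causality $<_2$. Suppose $\varphi(e_i) <_2 \varphi(e_j)$ for some $i,j \in \{1,\ldots,n\}$; condition C of sufficient embedding immediately yields $e_i <_1 e_j$, and since $\tau_1 \in T(C_1)$ is a linearization of $<_1$, this forces $i < j$. Hence $\tau_2$ is a linearization of $<_2$ restricted to $C_2$, so $\tau_2 \in T(C_2)$. Combined with $h_2(\tau_2) = w$, we conclude $w \in \mathcal{L}(C_2)$, and since $w$ was arbitrary, $\mathcal{L}(C_1) \subseteq \mathcal{L}(C_2)$.

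I do not expect any serious obstacle: sufficient embeddings are tailored exactly to make this lifting work, and each of the three verification steps uses one of the three defining clauses of the definition. The only subtlety worth flagging is that condition C is stated in one direction only (order in $C_2$ implies order in $C_1$), which is precisely the direction required here and explains the asymmetry between sufficient and necessary embeddings.
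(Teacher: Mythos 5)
Your proof is correct, and it is more direct than the one in the paper. The paper establishes the same underlying fact---that $\varphi$ maps every trace of $C_1$ to a trace of $C_2$---but does so by induction on $|C_1|=|C_2|$: it peels off the last event $e_{n+1}$ of a trace, invokes two auxiliary lemmas to show that maximality of events is preserved by sufficient embeddings and that $C_1\setminus\{e_{n+1}\}$ and $C_2\setminus\{\varphi(e_{n+1})\}$ are again configurations related by the restricted embedding, applies the induction hypothesis, and then re-appends the image of the removed event. Your argument instead verifies the two defining conditions of a trace for $\varphi(\tau_1)$ in one pass: bijectivity gives that every event of $C_2$ occurs exactly once, and condition C of the sufficient embedding transfers any ordering constraint $\varphi(e_i)<_2\varphi(e_j)$ back to $e_i<_1 e_j$, which the linearization $\tau_1$ already respects. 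This avoids the induction and the two auxiliary lemmas entirely, and as a side benefit it makes clear that the maximality hypothesis on $C_1$ and $C_2$ is not actually needed for this implication. Your closing remark about the one-directional nature of condition C is exactly the right observation: it is what makes the lifting work here, and its failure in the reverse direction is why the converse of the lemma requires the extra refinement machinery in the paper.
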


The converse statement is not always true.
To see this, consider a configuration $C_1 = \{\bot,e_1,e_1'\}$ such that $e_1$ and $e_1'$ are concurrent and $h(e_1) = h(e_1') = A$.
Furthermore, consider a configuration $C_2 = \{\bot,e_2,e_2'\}$ such that $e_2$ and $e_2'$ are sequential and $h(e_2) = h(e_2') = A$.
Clearly, the configurations have the same language $\{AA\}$.
However, there is no sufficient embedding from $C_1$ to $C_2$.

Our decision algorithm performs an additional refinement step in such a case and concludes language inclusion only after checking the refined configurations.
In \cref{app:detailed}, we provide a proof that in the case of unique labels, the converse statement also holds.

\paragraph*{Splits}

Our language inclusion decision algorithm continuously performs configuration refinement steps that we call splits.
To be precise, we refine the causality relation of its corresponding event structure.
%Therefore, in this context, we consider configurations as conflict free event structures that inherit their causality relation and the labeling function from their original event structure. 
%Given an event structure $\les = \langle E, <, \#, h\rangle$ and a configuration $C$, we denote its corresponding event structure as 
%$\les^C := \langle C, <_{\lceil C\times C}, \emptyset, h_{\lceil C} \rangle $, where $X_{\lceil Y}$ denotes the restriction of $X$ to $Y$.
%For ease of presentation, when describing our algorithms, 
%we abuse notation and do not differentiate between a configuration and its corresponding event structure.
%Furthermore, in the following presentation of the algorithms, we assume that the causality relations and 
%labeling functions of configurations $C_i$ for $i \in \{1,2\}$ are implicitly given by the event structure interpretation over $\les_i$.

\begin{definition}[Split]
Let $C$ be a configuration of event structure $\langle E,<,\#,h\rangle$ and let $e_1,e_2\in C$ be two concurrent events.
The \emph{split} of $C$ on $e_1$ before $e_2$ is 
$C_{e_1<e_2}:=\langle C,(<\cup\{(e_1,e_2)\})^+_{\lceil C\times C},\emptyset,h_{\lceil C}\rangle $ where ${.}^+$ denotes transitive closure. 
%where $<_1 := (< \cup \{(e_1,e_2)\})^*$ and $<_2 := (< \cup \{(e_2,e_1)\})^*$ 

\end{definition}

\review{
A split on two concurrent events $e_1$ and $e_2$ simply adds an additional ordering constraint between the two events.
In our algorithm, we always split both ways, creating two new configurations that order concurrent events $e_1$ and $e_2$ one way and the other.
Note that in order to avoid duplication of events, in practice splits can be implemented via additional, optional causalities on the event structure.
The following lemma states that splitting a configuration in both ways produces two new configurations with languages whose union is the original language.

\begin{restatable}{lemma}{lemmasplitlang}
\label{lemma:splitlang}

Let $C$ be a configuration and $e_1, e_2 \in C$ be concurrent events, then 
$\lang(C)~=~\lang(C_{e_1<e_2})~\cup~\lang(C_{e_2<e_1})$.
If $h$ is injective (labels are unique), then $\lang(C_{e_1<e_2})~\cap~\lang(C_{e_2<e_1})~=~\emptyset$.

\end{restatable}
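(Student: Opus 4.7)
The plan is to prove the two directions of the set equality separately, and then handle the disjointness claim using the injectivity of $h$.

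For the inclusion $\lang(C_{e_1<e_2}) \cup \lang(C_{e_2<e_1}) \subseteq \lang(C)$, I will observe that a split only adds ordering constraints: the underlying event set, labeling, and (empty) conflict relation are the same, and the causality relation of $C_{e_1<e_2}$ contains that of $C$. Hence every linearization of the refined partial order is also a linearization of the original one, giving $T(C_{e_1<e_2}) \subseteq T(C)$ and similarly for the other split. Applying $h$ pointwise preserves this inclusion. I also need to justify that $(<\cup\{(e_1,e_2)\})^+$ is still a strict partial order on $C$, which follows from the fact that $e_1$ and $e_2$ are concurrent (so adding $(e_1,e_2)$ creates no cycle through $<$); this is where the concurrency hypothesis is essential.

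For the reverse inclusion $\lang(C) \subseteq \lang(C_{e_1<e_2}) \cup \lang(C_{e_2<e_1})$, I take an arbitrary $w \in \lang(C)$ and a trace $\trace = \langle f_1,\ldots,f_n \rangle \in T(C)$ with $h(\trace) = w$. Since $e_1,e_2 \in C$, both occur exactly once in $\trace$; say $e_1$ occurs at position $i$ and $e_2$ at position $j$. Either $i<j$ or $j<i$. In the first case, $\trace$ already respects the additional constraint $e_1 < e_2$, so $\trace \in T(C_{e_1<e_2})$ and $w \in \lang(C_{e_1<e_2})$; symmetrically in the second case. The main step here is just that a linear extension of $<$ which already orders $e_1$ before $e_2$ is automatically a linear extension of $(<\cup\{(e_1,e_2)\})^+$.

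For the disjointness claim under injective $h$: suppose $w \in \lang(C_{e_1<e_2}) \cap \lang(C_{e_2<e_1})$. Then there exist traces $\trace \in T(C_{e_1<e_2})$ and $\trace' \in T(C_{e_2<e_1})$ with $h(\trace) = w = h(\trace')$. Because $h$ is injective and each event of $C$ occurs exactly once in each trace, $\trace = \trace'$ as sequences of events. But $\trace$ puts $e_1$ before $e_2$ while $\trace'$ puts $e_2$ before $e_1$, a contradiction. Thus the intersection is empty.

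The proof is essentially routine; the only subtle point I expect to flag explicitly is the use of concurrency to ensure that the transitive closure in the definition of a split remains irreflexive (i.e., a strict partial order), so that $C_{e_1<e_2}$ is a well-defined configuration-as-event-structure in the first place.
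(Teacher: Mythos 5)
Your proof is correct and follows essentially the same route as the paper, which first establishes the trace-level identity $T(C)=T(C_{e_1<e_2})\cup T(C_{e_2<e_1})$ by exactly your case analysis on the relative positions of $e_1$ and $e_2$ in a trace, and then derives the language statement; your disjointness argument (injectivity forces the two traces to coincide, contradicting their opposite orderings of $e_1,e_2$) is just a slight rephrasing of the paper's observation that the unique labels $h(e_1),h(e_2)$ must appear in opposite orders in the two languages. Your explicit remark that concurrency of $e_1,e_2$ keeps $(<\cup\{(e_1,e_2)\})^+$ irreflexive is a welcome detail the paper leaves implicit.
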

}
%\gwsays{is this in the Appendix?} \ttsays{yes it is, added reference to appendix}
%\afsays{Describe the following lemma textually, and move the lemma to appendix}

The following lemma guarantees progress of our algorithm.
It states that if we find a necessary, but not sufficient embedding, 
there are events that can be used to split $C_1$.
The goal is that after a finite number of splits 
a sufficient embedding can be established.

\begin{restatable}{lemma}{lemmasplitwitness}
\label{lemma:splitwitness}

Let $C_1, C_2$ be maximal configuration of FLES $\les_1$ respectively $\les_2$.\\
Furthermore, let $C_1 \sim^{\varphi}_N C_2$ and $C_1 \nsqsubset^{\varphi}_S C_2$.
Then there are concurrent events $e,e' \in C_1$, such that $\varphi(e) <_2 \varphi(e')$.
%Then, there exist events $e_2,e_2 \in C_2$ such that $e_1 <_1 e_2$ and concurrent events $e'_1,e'_2 \in C_1$ such that $h(e_1) = h(e'_1), h(e_2) = h(e'_2)$.

\end{restatable}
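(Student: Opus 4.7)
The plan is to exploit the failure of sufficiency to exhibit a witness pair and then use the acyclicity condition from necessary embedding (condition C) to rule out the reverse order.

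First I would unfold the failure of the sufficient embedding condition. Since $C_1 \nsqsubset^{\varphi}_S C_2$ while $\varphi$ is already bijective and label-preserving (as a necessary embedding), the only way sufficiency can fail is in condition C of the sufficient embedding definition. That is, there must exist $e,e' \in C_1$ with $\varphi(e) <_2 \varphi(e')$ and $\neg(e <_1 e')$. These $e,e'$ will be my candidate pair, and most of the work is checking that they are concurrent in $C_1$.

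Next I would verify the four conditions of concurrency for $e$ and $e'$. Distinctness $e \neq e'$ follows because $\varphi(e) <_2 \varphi(e')$ forces $\varphi(e) \neq \varphi(e')$ (strictness of $<_2$), and then bijectivity of $\varphi$ gives $e \neq e'$. Conflict-freedom $\neg(e \#_1 e')$ is immediate from the fact that $C_1$ is a configuration, hence conflict-free by definition. The inequality $\neg(e <_1 e')$ is precisely what we assumed when unfolding the failure of sufficiency.

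The main obstacle, and the step where the necessary-embedding hypothesis actually does work, is ruling out $e' <_1 e$. I would argue by contradiction: suppose $e' <_1 e$. By definition of $<_2^\varphi$, the inequality $\varphi(e) <_2 \varphi(e')$ means exactly $(e, e') \in \,<_2^\varphi$, i.e.\ $e <_2^\varphi e'$. Chaining this with the assumption yields $e' <_1 e <_2^\varphi e'$, so $e' \,(<_1 \cup <_2^\varphi)^+\, e'$. This directly contradicts condition C of the necessary embedding $\varphi$, which forbids any such cycle. Hence $\neg(e' <_1 e)$.

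Combining all four observations, $e$ and $e'$ are concurrent events of $C_1$ with $\varphi(e) <_2 \varphi(e')$, which is the desired pair. A brief remark would be useful to note that $e,e' \neq \bot$ is automatic: if either were $\bot$, then either $\varphi(e) <_2 \varphi(e')$ would hold trivially and imply $e <_1 e'$ (contradicting the assumption), or $\varphi(e) <_2 \bot$ would be false (since $\bot$ has no $<_2$-predecessors and $\varphi(\bot) = \bot$ by label-preservation on $\varepsilon$-free configurations). This makes the concurrency genuine in the usual sense.
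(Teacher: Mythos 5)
Your proposal is correct and follows essentially the same route as the paper's proof: extract the witness pair $e,e'$ from the failed sufficiency condition, then rule out $e' <_1 e$ by observing it would create a cycle in $<_1 \cup <_2^\varphi$, contradicting condition C of the necessary embedding. Your version is merely more explicit in checking all clauses of the concurrency definition.
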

%We extend the labeling function from events to configurations, by defining $h(C) := \dot{\{}h(e) \mid e \in C\dot{\}}$ as the multiset of labels of its events.

\subsection{Language Inclusion Decision Algorithm}

\begin{algorithm}
\SetAlgoLined\DontPrintSemicolon

\SetKwFunction{FCheck}{Check}
\SetKwProg{Fn}{Function}{:}{}

\SetKwFunction{FSufficient}{SuffOrSplit}
\SetKwProg{Fn}{Function}{:}{}

\KwIn{Finite, labeled Prime Event Structures $\les_1$ and $\les_2$}
%\KwIn{$\leslabel$-labeled Event Structure $\les_1$}
%\KwIn{$\leslabel$-labeled Event Structure $\les_2$}

\KwResult{$\lang(\les_1) \subseteq \lang(\les_2)$}

$\{C_1^1,\ldots,C_1^n\}, \{C_2^1,\ldots,C_2^m\} \gets $ all maximal configurations of $\les_1$ respectively $\les_2$\;

%$\{C_1^1,\ldots,C_1^n\} \gets $ all maximal configurations of $\mathcal{E_1}$\;
%$\{C_2^1,\ldots,C_2^m\} \gets $ all maximal configurations of $\mathcal{E_2}$\;
%$\var{conformant} \gets \true$\;
%$i = 1$\;
%\While{$i \leq m \wedge \var{conformant}$}{
	%$\var{conformant} \gets \var{conformant} \mathrel\wedge \FCheck{$C_1^i,\{C_2^1,\ldots,C_2^m\}$}$\;\label{line:callCheck}
%}
%\Return $\var{conformant}$\;
%\;

\Return $\bigwedge_{C_1 \in \{C_1^1,\ldots,C_1^n\}}$ \FCheck{$C_1,\{C_2^1,\ldots,C_2^m\}$}\;

\Fn{\FCheck{$C_1$, $\{C_2^{i_1},\ldots,C_2^{i_l}\}$}}{
\KwResult{$\lang(C_1) \subseteq \bigcup_{j=1}^l \lang(C_2^{i_j})$}
$\var{Candidates} \gets \{C_2^{i_1},\ldots,C_2^{i_l}\}$\;
\ForEach{$C_2 \in \{C_2^{i_1},\ldots,C_2^{i_l}\}$}{
	\If{$\exists\varphi: C_1\rightarrow C_2.\ C_1~\sim^\varphi_{N}~C_2$\label{line:nembed}} {
		\Return \FSufficient{$C_1,C_2,\varphi,Candidates$}\;
	}
	\Else{
		$Candidates \gets Candidates \setminus \{C_2\}$\;
	}
}
\Return $\false$\Comment*[f]{counter-example $C_1$}\; \label{line:no_necessary}
}
\Fn{\FSufficient{$C_1,C_2,\varphi,Candidates$}}{
\If{$C_1 ~\sqsubset^\varphi_{S}~C_2$}{
	\Return $\true$\;
} 
Let $e,e' \in C_1$ be concurrent and $\varphi(e) <_2 \varphi(e')$\Comment*[f]{always exist (\cref{lemma:splitwitness})}\;
\Return \FSufficient{$C_{e < e'},C_2,\varphi,Candidates$} $\wedge$ \FCheck{$C_{e' < e},Candidates$}\; \label{line:sufficient_recursive}
}

\caption{Language inclusion decision algorithm}
\label{algo:comparison}

\end{algorithm}

We present our decision algorithm in \cref{algo:comparison}. 
Inputs to the algorithm are finite, labeled prime event structures $\les_1$ and $\les_2$.

The first step of the algorithm is to \review{calculate the maximal configurations of the event structure}, 
which can be done with the algorithm described in \cite{Rodriguez2015}.
For every maximal configuration $C_1$ of $\les_1$, the function \FCheck{} attempts to show that $\lang(C_1)$ is a subset of $\lang(\les_2)$.
This is achieved by searching for sufficient embeddings from (refined versions of) $C_1$ to maximal configurations of $\les_2$.

In order to construct candidate sufficient embeddings, 
in \cref{line:nembed} the algorithm attempts to construct necessary embeddings, using \cref{algo:necessary}.
In the following line, function \FSufficient{} checks whether a necessary embedding $\varphi$ is also sufficient.
This can be done by checking $\forall e \in C_2: \forall e' \in \dsucc(e): \varphi^{-1}(e) $ is not concurrent with $\varphi^{-1}(e')$.
In case $\varphi$ is not a sufficient embedding, such a pair of events is guaranteed to exist by \cref{lemma:splitwitness}.
For efficiency, this check can already be done during construction of the necessary embedding.
%\ttsays{not sure this is correct, because why would in C1 also have to be immediate successors as long as they are successors. Maybe $e,e' \in C_2: e'\in\dsucc(e) \implies (\varphi^{-1}(e) <_1 \varphi^{-1}(e'))$ would be ok. The second part of the sentence is unclear. I suppose you mean to say that the check $\varphi^{-1}(e) <_1 \varphi^{-1}(e')$ can be made more efficient. This, however, implies some implementation knowledge of how the event structures are actually stored (not the whole relation $<_1$ is stored explicitly).}
%\afsays{
%The second point I made was that checking whether an embedding is sufficient can already be done during construction of necessary embedding.
%To this end, we basically check the direct successor property above on the fly.
%Therefore, we do not require an extra phase (which is probably quadratic) afterwards.}

In case $\varphi$ is not sufficient, \cref{lemma:splitwitness} guarantees the existence of a pair of concurrent events that can be split.
The resulting split configurations are recursively checked for language inclusion in \cref{line:sufficient_recursive}.
\cref{lemma:splitwitness} guarantees us that $\varphi$ is a necessary embedding for one of the splits (say $C_{e < e'}$).
Therefore, for $C_{e < e'}$ we do not need to construct a new necessary embedding again, 
but can immediately check whether $\varphi$ is a sufficient embedding for $C_{e < e'}$.

%\afsays{check the following paragraph}
In case no necessary embedding can be found for some configuration $C_1$ and its candidates, according to \cref{lem:simempty}, 
we can conclude $\lang(C_1) \cap \lang(\les) = \emptyset$, i.e. all words in $C_1$ are counter-examples to language inclusion.
Once \cref{line:no_necessary} is reached we know that $C_1$ does not share any word with any $\{C_2^{1},\ldots,C_2^{m}\}$, therefore $C_1$ is a counter-example to language inclusion.
%\ttsays{since the recursion build conjunctions over all the answers, one false will mean the function is false. Or am I missing something?}
%\afsays{returning false does not mean that no neccesary embedding to one of the candidates could be found, but that recursively splitting did not produce a sufficint embedding.
%Only when no recursive calls were made, we are sure that no necessary embedding was found.
%No necessary embedding is required for disjoint languages.}

The algorithm terminates, because the notions of necessary and 
sufficient embedding collapse in case the configuration contains only a single trace,
which is the case when the causality relation is a total order on the events of the configuration (see \cref{lemma:totalorder}).

%\ttsays{does this improve the complexity of the algorithm}
As the algorithm recursively searches for sufficient embeddings, 
for efficiency, we can reduce the set of candidate configurations, 
because in case there is no necessary embedding between two configurations, 
there is clearly also no necessary embedding between any of their split configurations.

\begin{algorithm}[tb]
\SetAlgoLined\DontPrintSemicolon

\SetKwFunction{FContainsCycle}{ContainsCycle}
\SetKwProg{Fn}{}{:}{}

\SetKwFunction{FNEmbedding}{NEmbedding}

%\KwIn{Configurations $C_1$ of $\les_1=\langle E_1, <_1, \#_1, h_1\rangle$}
%\KwIn{Configuration $C_2$ of $\les_2=\langle E_2, <_2, \#_2, h_2\rangle$}
%\KwResult{$\varphi:E_1\rightarrow E_2$ if $C_1\sim^\varphi_N C_2$, $\none$ otherwise}

\KwIn{Configurations $C_1,C_2$}
\KwResult{$\varphi:C_1\rightarrow C_2$ if $C_1\sim^\varphi_N C_2$, $\none$ otherwise}

\If{$\exists x\in\leslabel.\ |\{e\in C_1 \mid h_1(e)=x\}| \neq |\{e\in C_2 \mid h_2(e)=x\}|$\label{line:trivialfalse}}{
\Return $\false$\;
}
\Return \FNEmbedding{$\{\bot_1\}, \{\bot_1 \mapsto \bot_2\}$}\;

\Fn{\FNEmbedding{$\frontier, \varphi$}}{
\KwIn{$\frontier$ stack of events $C_1$}
\KwIn{$\varphi:C_1 \rightarrow C_2$ (partial mapping)}
\KwResult{$\varphi:E_1\rightarrow E_2$ if $C_1\sim^\varphi_N C_2$, $\none$ otherwise}

\If{$\frontier=\emptyset$\label{line:frontierempty}}{
  \Return $\varphi$\;
}
$e \gets \frontier.\pop()$\;
\ForEach(\Comment*[f]{direct successors of $e$ in $C_1$}){$e'\in \dsucc_{\les^{C_1}}(e)$\label{line:addfrontier}}{
	$\frontier.\push(e')$\;
}
\ForEach{$e''\in C_2$ such that $h_1(e)=h_2(e'')\wedge e''\notin \range(\varphi)$}{
  $\varphi' \gets \varphi\cup\{e\mapsto e''\}$\;
  \If(\Comment*[f]{check for cycle}){$\nexists e_1,e_2\in E_1.\ e_1 <_1 e_2 \wedge \varphi'(e_2) <_2 \varphi'(e_1)$\label{line:cycle}}{
    $\varphi''\gets\FNEmbedding{$\frontier, \varphi'$}$\;\label{line:recursion}
    \If{$\varphi''\neq \none$}{
      \Return $\varphi''$\;
    }
  }
}
\Return $\none$\;
}

\caption{$\sim_N$ decision algorithm}
\label{algo:necessary}

\end{algorithm}

We present the algorithm to construct necessary embeddings in \cref{algo:necessary}.
Intuitively, the algorithm is a combined depth first search over the causality relation, 
as well as the space of possible bijective, label-preserving mappings.

The algorithm starts by dismissing configurations that can never have a necessary embedding because the number of events with the same label differs (\cref{line:trivialfalse}).
The actual embedding is established with the recursive function \FNEmbedding{}. The recursion maintains a frontier of events that are yet to be explored and a partial mapping of already explored events. It ends if the frontier becomes empty (\cref{line:frontierempty}). The exploration is done on $C_1$ by adding the successors of the current event $e$ to the frontier (\cref{line:addfrontier}). Then for every event $e''$ in $C_2$ with the same label as $e$ a mapping $\varphi'$ is created and a cycle check performed. If this mapping does not introduce a cycle we recurs on it (\cref{line:recursion}). The first valid (not $\none$) mapping that is returned by a recursion is returned. If no such mapping is found, then $\none$ is returned.
The cycle check in \cref{line:cycle} basically checks if the two causality relations $<_1$ and $<_2$ are compatible for the mapped events, 
in the sense that the events can be brought in an order that respects both causality relations. 
The procedure can be implemented using any of the well known cycle detection algorithms over the graph with nodes being events of $C_1$ and edges being 
causalities $<_1 \cup <^{\varphi}_2$.

The worst-case runtime of the decision algorithm is exponential in $O(2^{2n})$, 
where $n = |E_1| + |E_2|$, which is not surprising for an algorithm solving a $\Pi^p_2$ hard problem.
There are two dominant factors of the exponential complexity.

First, the number of maximal configurations can be exponential in $n$ and 
\cref{algo:comparison} potentially has to compare all pairs of maximal configurations.
The CCNFS benchmark in \cref{sec:experiments} is an example for an event structure with an exponential number of maximal configurations in $n$.

Second, the number of mappings between configurations that need to be considered as candidates for necessary embeddings can be exponential in $n$.
That is, algorithm \cref{algo:necessary} has worst case runtime exponential in $n$.
Note that for a fixed mapping, the algorithm performs a linear search over the configuration and the combined causality relation.

Note that the number of possible embeddings decreases with the number of calls to \mbox{\FCheck} and 
the size of maximal configurations decreases relative to $n$ with the number of maximal configurations.
Therefore, the amortized runtime should be much better than the worst case complexity.

\subsection{Automaton Based Language Inclusion}
\label{sec:automaton}
We provide a language preserving encoding of event structures into non-deterministic finite automata (NFA).
The encoding allows us to compare our algorithm to well researched language inclusion algorithms in our evaluation (\cref{sec:experiments}).

The encoding has a state for every configuration of the event structure.
There is a transition between two states, if the difference between the corresponding configurations is just one event.
The transition is labeled with the label of that event.
In essence, the encoding is an automaton representation of what is known as the configuration structure of a prime event structure \cite{Glabbeek2009}.

\begin{definition}[Automaton Encoding]
Let $\les = \langle E, <, \#, h\rangle$ be a finite prime event structure with labels $\leslabel$.
We define the non-deterministic finite automaton $\aut^{\les} =  \langle Q^{\les},\Omega^{\les},\delta^{\les},q^{\les}_0,F^{\les}\rangle$ as 
$Q^{\les} = \{q^{\les}_C \mid C \text{ is a configuration of } \les\}$, $\Omega^{\les} = \leslabel$, $(q^{\les}_{C_1},\sigma,q^{\les}_{C_2}) \in \delta^{\les}$ iff there is $e \in E$, such that $C_1 \cup \{e\} = C_2$ and $h(e) = \sigma$, $q^{\les}_0 := q^{\les}_{\{\bot\}}$, and $F^{\les} = \{q^{\les}_C \mid C \text{ is maximal}\}$.

\end{definition}

%The encoding is language preserving:
\vspace{-.3cm}
\begin{restatable}{lemma}{lemmapreserving}
\label{lemma:preserving}

Let $\les$ be a labeled, finite prime event structure, then
$\lang(\les) = \lang(\mathcal{A}^{\les})$.

\end{restatable}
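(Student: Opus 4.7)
The plan is to prove both inclusions $\lang(\les) \subseteq \lang(\aut^{\les})$ and $\lang(\aut^{\les}) \subseteq \lang(\les)$ by establishing a bijective correspondence between traces of maximal configurations of $\les$ and accepting runs of $\aut^{\les}$. The central observation is that a trace $\langle \bot, e_1, \ldots, e_n\rangle$ of a maximal configuration $C$ induces exactly a sequence of configurations $C_0 \subsetneq C_1 \subsetneq \cdots \subsetneq C_n$ where $C_0 = \{\bot\}$, $C_i = C_{i-1} \cup \{e_i\}$, and $C_n = C$; this sequence is precisely an accepting path in $\aut^{\les}$ labeled by $h(e_1) \cdots h(e_n)$.

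For the forward direction, I would pick $w \in \lang(\les)$ witnessed by some $\trace = \langle \bot, e_1, \ldots, e_n\rangle \in T(C)$ with $C$ maximal and $h(\trace) = w$, and then check that each prefix $C_i := \{\bot, e_1, \ldots, e_i\}$ is a configuration: conflict-freedom is inherited from $C_i \subseteq C$, and left-closure follows directly from the trace property that predecessors appear at earlier positions. The resulting sequence $q^{\les}_{C_0}, q^{\les}_{C_1}, \ldots, q^{\les}_{C_n}$ is a valid run in $\aut^{\les}$ by definition of $\delta^{\les}$, it ends in the accepting state $q^{\les}_C$, and its label equals $h(e_1) \cdots h(e_n) = w$, using $h(\bot) = \varepsilon$ to drop the initial empty symbol. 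For the reverse direction, I would take an accepting run $q^{\les}_{C_0}, \ldots, q^{\les}_{C_n}$ labeled by $w = \sigma_1 \cdots \sigma_n$ with $C_0 = \{\bot\}$ and $C_n$ maximal, extract events $e_i \in C_i \setminus C_{i-1}$ with $h(e_i) = \sigma_i$ from the transitions, and then argue that $\trace = \langle \bot, e_1, \ldots, e_n\rangle$ is a trace of $C_n$. The only nontrivial point is the causal ordering condition: if $e_i <\, e_j$ then $e_i \in C_j$ by left-closure of $C_j$, forcing $i \leq j$, while distinctness of the events rules out $i = j$.

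I expect no genuine obstacle beyond careful bookkeeping across three parallel notions (configurations, traces, runs) and the small technicality of reconciling the trace (which contains $\bot$ as its first element) with the automaton run (which has $\{\bot\}$ as its initial state rather than as a consumed symbol). This reconciliation is handled uniformly by observing that $h(\bot) = \varepsilon$ and that concatenation with $\varepsilon$ is absorbing, so the word read on any accepting run coincides exactly with $h(\trace)$ for the associated trace.
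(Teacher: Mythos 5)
Your proposal is correct and follows essentially the same route as the paper's proof: both directions are established by the prefix-configuration correspondence between traces of maximal configurations and accepting runs of $\aut^{\les}$. Your write-up is in fact somewhat more careful than the paper's (which asserts the trace/run correspondence without spelling out left-closure of the prefixes or the causal-ordering check, and does not explicitly address the $\bot$/$\varepsilon$ bookkeeping), but there is no substantive difference in approach.
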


%\begin{restatable}{theorem}{theoremnfaexplosion}
%\label{thm:nfa_explosion}
%Let $\les$ be an event structure and $\{e_1,\ldots,e_n\}$ be a set of pairwise concurrent and pairwise disjointly labeled set of events.
%Every NFA $Q$ with $\lang(Q) = \lang(\les)$ has at least $2^n$ states.
%
%\end{restatable}

The provided encoding is not optimal in general due to conflicts and 
the fact that events of prime event structures are caused in a unique way, 
which is a well known caveat of prime event structures \cite{Winskel1988}.
However, for the family of event structures that consists of the $\bot$ event and $n$ concurrent events (c.f. the proof of \cref{lem:words} in \cref{app:detailed}), 
our encoding contains exactly $2^n + 1$ states, which is one state more than the provably optimal NFA accepting the language of the event structure.
Furthermore, in our experiments, we apply optimized NFA reduction techniques~\cite{mayr2013advanced} before checking language inclusion on automata.

\begin{restatable}{theorem}{exponentialsuccinctness}
\label{lem:words}
There is a family of event structures $\les_n$ with events $E_n$, such that $|E_n| = n + 1$, $|\lang(\les_n)| = n!$, and $|Q^{\les}| = 2^n + 1$.
Every NFA $\mathcal{A}$ with $\lang(\mathcal{A}) = \lang(\les_n)$ has at least $2^n$ states.
\end{restatable}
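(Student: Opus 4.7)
The plan is to use the family $\les_n$ consisting of the root event $\bot$ together with $n$ pairwise concurrent events $e_1,\ldots,e_n$, labeled by pairwise distinct symbols $a_1,\ldots,a_n$, with an empty conflict relation and only the causalities $\bot < e_i$. First I verify the three counting claims. The bound $|E_n| = n+1$ is immediate. Since no two $e_i, e_j$ are causally related and there are no conflicts, the unique maximal configuration is $\{\bot, e_1, \ldots, e_n\}$, whose traces are $\bot$ followed by an arbitrary ordering of $e_1,\ldots,e_n$; because the labels are distinct, each ordering yields a different word, giving $|\lang(\les_n)| = n!$. For the configuration count, left-closure forces any nonempty configuration to contain $\bot$, and conflict-freedom is automatic; conversely every set $\{\bot\} \cup S$ with $S \subseteq \{e_1,\ldots,e_n\}$ is a valid configuration. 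Counting these together with $\emptyset$ gives $|Q^{\les_n}| = 2^n + 1$.

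For the NFA lower bound I would apply the standard fooling-set argument. For each $S \subseteq \{a_1,\ldots,a_n\}$, fix an arbitrary ordering of $S$ to obtain a word $x_S$ and an arbitrary ordering of its complement to obtain $y_S$. Then $x_S y_S$ is a permutation of $\{a_1,\ldots,a_n\}$ and hence lies in $\lang(\les_n)$. For $S \neq T$, choose (without loss of generality) $a \in S \setminus T$; then $a$ occurs in $x_S$ and, since $a \notin T$, also in $y_T$, so $x_S y_T$ contains $a$ twice and is not a permutation of $\{a_1,\ldots,a_n\}$, hence $x_S y_T \notin \lang(\les_n)$. Therefore $\{(x_S, y_S) : S \subseteq \{a_1,\ldots,a_n\}\}$ is a fooling set of size $2^n$ for $\lang(\les_n)$, and the standard fooling-set lower bound (if $\{(x_i,y_i)\}_{i=1}^m$ is a fooling set, then any NFA accepting the language has at least $m$ states) yields the claim.

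The counting claims are routine; the substantive step is the NFA lower bound. The main obstacle is picking the right witness family: splitting a single fixed permutation at its $n+1$ prefixes only distinguishes $n+1$ states and falls short of the exponential bound. Indexing the fooling-set pairs by subsets of the alphabet rather than by prefix lengths is what unlocks the exponential gap, and the crucial combinatorial observation that makes the fooling set work is that concatenating a prefix built from $S$ with a suffix built from the complement of a different set $T$ is forced to repeat at least one letter, and therefore to leave the language of permutations of $\{a_1,\ldots,a_n\}$.
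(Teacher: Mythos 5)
Your construction is the same as the paper's: the family with root $\bot$ and $n$ pairwise concurrent, distinctly labeled events, and your counts of $|E_n|$, $|\lang(\les_n)| = n!$, and $|Q^{\les_n}| = 2^n+1$ match the paper's (your configuration count is in fact stated a little more carefully, since you note explicitly that left-closure forces $\bot$ into every nonempty configuration and that $\emptyset$ is the extra configuration). The one place you genuinely diverge is the NFA lower bound: the paper simply cites the known result of Ellul et al.~\cite{Ellul2005} that no NFA with fewer than $2^n$ states accepts the language of permutations of $n$ distinct symbols, whereas you reprove that fact from scratch with a fooling set indexed by subsets $S$ of the alphabet, with $x_S$ an ordering of $S$ and $y_S$ an ordering of its complement. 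Your argument is correct --- for $S \neq T$ one of $S \setminus T$, $T \setminus S$ is nonempty, so one of the cross-concatenations repeats a letter and leaves the language, which is exactly the (extended) fooling-set condition --- and it buys self-containedness at the cost of about a paragraph; the paper's citation buys brevity at the cost of an external dependency. Both establish the same bound, so the theorem goes through either way.
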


\section{Application and Evaluation}
\label{sec:experiments}

%\subsection{Mutation-based test case generation}

Our motivation to investigate event structures and language inclusion was model-based mutation testing. 
The goal of \emph{model-based testing} (MBT) is to derive test-cases from a model of a system. 
The model may, for example, be a UML state machine and the test may be a sequence of inputs and outputs of the system. The simplest way of obtaining such test cases would be to randomly explore the state machine and record the produced input/output (IO) sequences. 
These tests can then be run against an implementation at a later point.

\emph{Model-based mutation testing} (MBMT) compares the original model to a mutated version of it, where 
a mutation is a small change in the model, such as removing or adding a transition. 
A test case is only generated if an \review{observable} difference between the original and the mutated model can be witnessed. 
This form of test case generation can be easily expressed using language inclusion between the two versions of the system: 
The test is exactly the word that is a member of the mutant, but not of the original.

The application of finite prime event structures to this problem is motivated by three factors.
Firstly, models often use concurrent state machine that synchronize rarely. 
Secondly, mutation analysis on reactive models can be performed by exploring models in bounded segments \cite{Fellner:2019:MMT:3305158.3289256}, 
where a bounded segment refers to all events occurring between two consecutive inputs.
These bounded segments can be represented as finite event structures.
Thirdly, it is desirable express independence in test cases in order to produce minimal test suites 
that do not need to list all variations of a test that differ only in terms of independent events.
Such test cases can be obtained as counter-examples to language inclusion, as discussed in \cref{sec:algorithm}.
%We implemented this approach and present the results in \cref{sec:experiments}.

To this end, we implemented the presented prime event structure language inclusion algorithm 
in the model-based mutation testing tool MoMuT~\cite{Fellner:2019:MMT:3305158.3289256}.
MoMuT accepts models written as object-oriented action systems (OOAS).
The models can be understood as labeled transition systems, where labels are either observable, controllable or hidden ($\varepsilon$).
OOAS models can model highly concurrent systems.
In order to construct test cases for such concurrent models efficiently, 
we need to apply partial order reduction during model exploration.
In \cite{Rodriguez2015} a partial order reduction based algorithm for 
constructing labeled prime event structures from transition systems is given.
We implemented this algorithm, using a static dependency relation based on variable reads and writes, 
and use it during model exploration, obtaining finite labeled event structures representing bounded segments of the model.
Each segment corresponds to all output or hidden transitions following some input until either a new input is required 
by the model to progress further or the model is in a terminating state.
We operate on models that exclude infinite sequences of outputs or hidden transitions.
Thus, the discussed segments are indeed bounded in our case.

The event structures constructed during partial order reduction are labeled with (potentially hidden) transitions of the explored model.
However, for mutation analysis, we want to find observable differences between event structures for given controllable inputs, 
in contrast to any difference in transition labels.
Therefore, in addition to using transition labels during partial order reduction, 
we use projected, visible (input \& output) labels and 
perform language inclusion on the languages over the latter kind.

During model exploration, which is described in detail in \cite{Fellner:2019:MMT:3305158.3289256}, 
we construct event structures $\les_B$ of the original model and $\les_A$ of mutants, representing 
bounded segments (as described above) following the same sequence of inputs.
We perform language inclusion checks $\lang(\les_A) \subseteq \lang(\les_B)$ using \cref{algo:comparison} 
to decide whether the corresponding mutant is killed by the sequence of inputs 
and a test case can be produced.

In our experimental evaluation, we report measurements of these language inclusion checks 
during test case generation on a sequence of benchmark models.
For comparison and in addition to event structure based language inclusion, 
we perform language inclusion via automaton encoding, as described in \cref{sec:automaton}.
To this end, we encode the produced event structures as NFAs and check language inclusion using the tool RABIT~\cite{mayr2013advanced}. 

\subsection{Benchmarks and Results}

We use the following benchmarks for our experimental evaluation.
All benchmark models, scripts to instantiate the models for any parameter value, 
and the version of MoMuT used in the experiments can be found in the publicly available artifact of this paper \cite{andreas_fellner_2019_3514619}, 
which can be run with the virtual machine provided in \cite{dietsch_daniel_2019_3533104}.

\begin{itemize}
	\item The \textbf{Paxos$(n,m,k)$} benchmark models the Paxos distributed consensus protocol \cite{Lamport2001} with $n$ proposers, $m$ acceptors, and $k$ learners.
	The protocol specifies how the different actors can exchange certain messages to achieve consensus on some proposed value.
	The actions of the actors are largely independent of each other, which introduces lots of concurrency to the model.
	Furthermore, test cases extracted from our method should be interesting to concertize and run against implementations of the Paxos algorithm.
	\item The \textbf{Semaphore$(n)$} benchmark models $n$ threads that are synchronized by a semaphore. 
	Exactly $n-1$ threads are allowed to enter and compute in a critical section at the same time. 
	The amount of parallelism of this model is proportional to $n$.
	Furthermore, the model exhibits lots of conflicts, 
	as all operations on the semaphore are in conflict with each other.
	\item The \textbf{ParSum$(n)$} benchmark models a parallel summation algorithm.
	The sequence $0,\ldots,n^2{-}1$ is split into $n$ equally sized chunks, which are summed up concurrently. Then the partial results are summed up centrally when all parallel threads are finished.
%	The resulting partial sums is summed up centrally, resulting in a sum $x$, after which $x,x+1,\ldots,x+n^2-1$ is distributed for summation again.
%	Both the concurrent and central summers report overflows.
	%Due to the high number of concurrency, the benchmark should work well with out methods.
	\item The \textbf{CCNFS$(n)$} benchmark models a system with $n$ events and unique labels, such that the $2i'th$ event is in conflict exactly with the $2i{+}1 'th$ event.
	Every set of independent events induces an event resetting the state.
	This benchmark is interesting, because its number of maximal configurations $2^{\lfloor\frac{n}{2}\rfloor}$ 
	(each configuration contains either $2i$ or $2i{+1}$ for each $i = 1 \ldots \lfloor\frac{n}{2}\rfloor$)
	is exponential in the number of events $n$.
	Due to the high number of maximal configurations, this benchmark is challenging for our algorithm.
	\item The \textbf{AllPar$(n)$} benchmark models a system with $n$ independent events and unique labels.
	%Thus, there is just one maximal configuration, encoding all permutations of the labels.
	The benchmark is the ideal case for our algorithm, because its event structure consists of only one maximal configuration with all events in parallel.
	In contrast, the benchmark is a very bad case for NFA language inclusion, as the smallest NFA to encode all permutations of $n$ symbols is exponential in $n$ (\cref{lem:words}).
	\item The \textbf{Sharing$(n,m)$} benchmark models a system that has $n$ different prefixes that all share the same suffix of length $m$.
	The benchmark particularly exhibits the well known shortcoming of event structures not being able to encode shared causes.
	The NFA is able to express the common suffix more succinct in comparison to the event structure.
\end{itemize}

\newcommand{\ms}{}
\newcommand{\s}{$\cdot 10^3$}
\newcommand{\TO}{TO}
\newcommand{\NA}{-}
\newcommand{\TOO}{\TO & \TO}
\newcommand{\TOOOO}{\TOO & \TOO}
\newcommand{\TOOOOO}{\TO & \TOOOO}

\newcommand{\inclusionheader}{\textbf{Inclusion}}
\newcommand{\noninclusionheader}{\textbf{Non-Incl.}}
\newcommand{\timeheader}{\mbox{Time}}
\newcommand{\countheader}{\mbox{Num}}
\begin{table}[tb]
	\captionsetup{justification=centering}
	\centering
	\begin{tabular}{l | r r r r r r | r r r r r}
	\textbf{Name} & $\mathbf{|\les_B|}$ & \textbf{PC} & \multicolumn{2}{r}{\inclusionheader} & \multicolumn{2}{r}{\noninclusionheader} & $\mathbf{|\aut^{\les_B}|}$ & \multicolumn{2}{r}{\inclusionheader} & \multicolumn{2}{r}{\noninclusionheader}  \\
	 & & & \timeheader & \countheader & \timeheader & \countheader & &  \timeheader & \countheader & \timeheader & \countheader \\
  \midrule
  Paxos(2,3,1)& 80 & 4.1 & \NA & 0& 4.1\s & 65&  23469& \TOOOO \\
	Paxos(3,6,1)& 716 & 4.1 & \NA & 0& 4.0\s & 94&  \TOOOOO \\
	Semaphore(3)& 9 & 1.2 & \NA & 0& 23.8\ms & 78&  5& \NA & 0& 324.0\ms & 78\\
	Semaphore(11)& 25 & 2.3 & 3.5\ms & 2& 48.9\ms & 84&  9& \NA & 0& 564.2\ms & 80\\
	ParSum(3)& 18 & 2.2 & 1.1\ms & 80& 170.5\ms & 92&  218& 2.5\s & 80& 9.3\s & 84\\
	ParSum(5)& 38 & 3.8 & 342.0\ms & 84& 67.6\ms & 94& \TOOOOO \\
  ParSum(10)& 123 & 8.2 & \NA & 0& 12.6\ms & 23& \TOOOOO \\
  CCNFS(3)& 15 & 1.7 & 3.9\ms & 88& 1.2\ms & 88&  9& 262.7\ms & 88& 610.9\ms & 88\\
	CCNFS(6)& 77 & 2.7 & 501.8\ms & 108& 84.0\ms & 92&   65& 379.7\ms & 108& 1.2\s & 92\\
  CCNFS(10)& 1045 & 4.0 & 303\s & 98& 17\s & 102& 59050 & \TOOOO \\
  AllPar(10) & 12 & 4.0 & 1.4\ms & 56& 8.6\ms & 144&  1025& 10\s & 56& 82\s & 144\\
  AllPar(50) &  52 & 17.3 & 3.2\ms & 44& 93.3\ms & 156& \TOOOOO \\
  AllPar(500) & 502 & 167.3 & 361.9\ms & 40& 7.7\s & 160&  \TOOOOO \\
  Sharing(5,20)& 111 & 1.0 & 7.2\ms & 26& 4.2\ms & 174&  23& 338.1\ms & 26& 1.1\s & 174\\
  Sharing(50,50)&2601 & 1.0 & 1.9\s & 38& 527.2\ms & 162&  53& 255.4\ms & 38& 1.5\s & 162\\
	\end{tabular}
	\centering
  \caption{Benchmark results for language inclusion checks \\
	         $\lang(\les_A) \subseteq \lang(\les_B)$ respectively $\lang(\aut^{\les_A}) \subseteq \lang(\aut^{\les_B})$.}
	\label{tab:results}

\end{table}

We present the results of our experimental evaluation in \cref{tab:results}. 
For every benchmark, we report measurements of language inclusion checks for the largest bounded segment encountered during model exploration.
As described above, every such bounded segment corresponds to all output and hidden transitions following some input transition.
We report measurements of event structure based language inclusion $\lang(\les_A) \subseteq \lang(\les_B)$ and 
automaton based language inclusion $\lang(\aut^{\les_A}) \subseteq \lang(\aut^{\les_B})$.
We separate the results into the cases where language inclusion holds (\textbf{Inclusion}) respectively does not hold (\textbf{Non-Incl.}).
Column $\mathbf{|\les_B|}$ shows the size of $\les_B$ in terms of the number of its events.
Column $\mathbf{|\aut^{\les_B}|}$ shows the size of $\aut^{\les_B}$ in terms of the number of its states.
Columns \countheader~show the number of inclusion checks performed 
on the respective bounded segment (which is the number of mutants relevant in the segment). 
Columns \timeheader~show the average time for the inclusion checks in milliseconds.
Finally, column \textbf{PC} shows a measurement of the degree of concurrency.
For a single configuration $C$, the measurement is defined as $|C|/max_{e\in C}depth(e)$ and 
we report the average measurement of all maximal configurations in the respective event structure.
%The measurements of mutants are typically similar to the original.
%However, there are some cases where the mutant increases the size significantly, causing a timeout unreduced automaton encoding production.
%For example, for Semaphore(11), some mutants quadrupled the size of the event structure.
%The columns Incl. and Non-Incl. are averaged over all the language inclusions attempts, i.e. mutants effective in the respective bounded segment. 
%We attempt to characterize the degree of concurrency with a measure ParCoeff that is defined as the size of a configuration divided by the maximal depth among its events
%and report the average ParCoeff over all maximal configurations in the respective bounded segment. 

%For fully sequential respectively concurrent configurations, the ParCoeff is 1.0 respectively $|C|$.

%ParCoeff is calculated for a configuration $C$ as the maximum number of unordered events divided by the length of the longest sequence of events ordered by $<_B$.

The reported time for language inclusion of event structures is the time for calculation of the maximal configurations plus the time for the actual language inclusion check. 
The reported time for language inclusion of automata is the time for the language inclusion check on a pre-reduced automaton. 
The construction and minimization of the NFAs is not included.

The results show that our language inclusion algorithm performs well on models with a lot of concurrency, i.e. those with high ParCoeff.
Furthermore, the automaton translation clearly fails in cases with lots of concurrency that are easy for our method (c.f. the \textbf{AllPar} benchmark).
For these examples our algorithm is very useful.
This result is not surprising, since our method exploits concurrency, whereas the NFA encoding does not include any notion of concurrency.
\review{Nevertheless, the result demonstrates that the benefits of exploiting concurrency with our method outweigh  
optimizations and fine-tuning of a well established language inclusion algorithm that has no notion of concurrency.}

However, as the \textbf{Sharing} benchmark shows, the inability of prime event structures to encode shared causes of events is a limitation of the approach.
In contrast, the reduced automaton representation can be significantly more compact than the event structure representation, 
rendering the automaton-based language inclusion superior.

%\afsays{Discuss results further}

%While event structures could easily be adapted to share common substructures to save space, it would not be obvious how to use that in our algorithm, since we start by splitting the event structure into maximal configurations.
%As anticipated there is also the opposite case (Sharing) where there corresponding event structures are much more succinct and language inclusion checks on them faster.

%\ttsays{check}
%The results show that our language inclusion alogrithm performs well on the models we wish to analyze for our mutation-based test-case generation. It also becomes apparent that for examples with a lot of parallelism (large PC) the translation to NFAs ($|Q^{\les_B}|$) becomes increasingly large compared to the event structure ($|E_B|$). For these examples our algorithm is very useful. As anticipated there is also the opposite case (Sharing) where there corresponding event structures are much more succinct and language inclusion checks on them faster.

\section{Related Work}

Prime event structures are a widely used formalism to express concurrency of discrete systems \cite{Winskel1988}
that can be obtained from transition systems via the method presented in 
\cite{Rodriguez2015}, or its extended version in \cite{Nguyen2018}.
There are multiple other variants of event structures, 
such as stable event structures \cite{Winskel1988} and flow event structures \cite{Boudol1990}.
Studying language inclusion for these event structure variants is interesting future work.

\review{
Event structure containment based on causality and conflict refinement is considered in \cite{Winskel1988,Winskel2016}.
However, as we demonstrate in our work, causality preservation is not necessary for language inclusion.
In \cite{Glabbeek1989,VanGlabbeek2001a} equivalence of event structures under action refinement is investigated.
This line of research is orthogonal to our approach, as it considers refinement of event structures,
while we compare event structures that can be obtained in multiple different ways.
Moreover, there is almost never language inclusion between an event structure and the event structure with refined actions by design.

Model checking over particular types of event structures has been studied in \cite{Penczek1997} for event structures labeled with atomic propositions and 
in \cite{Madhusudan2003} for event structures labeled with trace languages.
However, the proposed model-checking methods are not based on language inclusion,
which is one of the interesting future directions for our research.
Instead, formulas are directly interpreted over the event structure.

Several formalisms to express concurrency of discrete systems have been proposed and their relationships have been worked out in \cite{Winskel:1995:MC:218623.218630}.
In particular, trace languages and Petri nets are formalisms closely related to event structures, for which languages and language related problems have been studied.
}

The theory of trace languages \cite{Mazurkiewicz1986,Mazurkiewicz1995} studies closure of string languages under independence relations.
\cite{Cerny2017} presents an efficient method to show trace language inclusion over languages defined by 
non-deterministic finite automata.
\review{
In \cite{diekert1997partial} decidability results of rational trace languages are studied.
In particular, it is shown that language inclusion of rational (closed under union, concatenation, and Kleene-star) trace languages is decidable 
if and only if the common independence relation is transitive.
Language membership for context free and regular trace languages was shown to be NP-complete in \cite{bertoni1989membership}.
In \cite{Bouajjani2017}, comparison of concurrent programs via trace languages is studied.
The suggested trace languages abstract the program executions by considering statement ordering, 
as well as read and write accesses on a subset of relevant variables and synchronization primitives.
Trace language refinement is then reduced to assertion checking.
Interestingly, for Boolean programs this refinement check has complexity $\Delta^P_2$ 
for bounded abstraction precision and $\Sigma^P_2$ for unbounded abstraction precision.
In contrast to arbitrary event labels considered in our work, 
the authors of \cite{Bouajjani2017} consider refinement on languages of a more concrete program and dependency model.
}

Our problem is orthogonal to trace language inclusion in three aspects.
Firstly, we do not assume the independence relations of the compared systems to be equal.
Secondly, we do not require the independence relation to be defined over labels.
That is, we can study systems where two labels occur concurrently in one place, 
while the labels occur sequentially in another. This can occur because different events can have the same label.
Finally, in contrast to automata, which are often used to define trace languages, event structures are acyclic.
Therefore, event structures are less expressive than automata.
However, the price of the additional expressivity is that 
trace language inclusion over automata is undecidable in general \cite{Bertoni1982},
whereas our problem is decidable.

\review{
Petri nets are a formalism for concurrent systems that is closely related to event structures \cite{Nielsen1981}.
A manifold of complexity questions have been studied for Petri nets, see \cite{Jones1977,Esparza1996,DBLP:journals/eatcs/EsparzaN94} for surveys.
In particular, language related problems of labeled Petri nets have been studied, 
see \cite{peterson2019petri,Gaubert1999} for an overview over the types of considered languages and complexity results.
Since Petri nets typically describe languages on infinite words and many Petri net related problems are undecidable,
complexity results on language related Petri nets problems focus on establishing the boundary between decidability and undecidability.
Language inclusion and equivalence were shown to be undecidable for a wide range of types of Petri nets \cite{Grabowski1979,Jancar2001,DBLP:journals/eatcs/EsparzaN94}.
Language inclusion is decidable for languages of firing of regular Petri nets \cite{Valk1981} and 
certain types of deterministic Petri nets \cite{Gaubert1999}.
In contrast, language membership is decidable for a large class of Petri nets and language types \cite{peterson2019petri,Hack1976}.
%\afsays{Unfortunately, I can not find any concrete complexity class.}
Similarly to trace languages, the additional expressivity of Petri nets 
over finite prime event structures manifests in increased complexity of solving language inclusion.
However, as we demonstrated with our application, finite prime event structures are sufficient for interesting practical problems.
}

Finite asynchronous automata \cite{Fates2013} express concurrent systems succinctly 
in the same spirit as prime event structures.
Furthermore, asynchronous automata accept trace languages \cite{Zielonka1987}.
However, to the best of our knowledge, there is neither an algorithm, 
nor a tool to check language inclusion for (loop free) finite asynchronous automata.

Language inclusion of regular languages is a classic problem of computer science \cite{automata_book,meyer1972equivalence,Friedman1976,stearns1985equivalence}.
Algorithms for the problem are well studied and highly optimized \cite{mayr2013advanced,abdulla2010simulation,Cerny2017}.
However, as we demonstrate in the evaluation section, 
our procedure can outperform these algorithms in the realm of highly concurrent systems.
Adapting methods for classic automaton based language inclusions to event structures is interesting future work.

\section{Conclusion and Future Work}
\label{sec:future}

%\afsays{We could summarize our contributions a bit more verbosely}
In this paper we showed that the language inclusion problem between two event structures is computationally hard, 
but our application and evaluation show that there are numerous benchmarks where the use of event structures and their comparison is beneficial.
However, the experiments also manifested a well known shortcoming of prime event structures, namely their inability to succinctly encode shared causes of events.

Interesting future work includes adapting our language inclusion method to different variants of event structures that do not suffer this problem.
Furthermore, we want to study whether our language inclusion procedure can be used to perform model checking over event structures.
Finally, we want to further study the test cases generated for the Paxos distributed consensus algorithm.
Concertizing the resulting test cases and running them against an implementation of the protocol might yield interesting results.

%
%However, the presented algorithm has two shortcomings: Firstly, we define the language only for finite event structures, which means that reactive systems can only be represented using some bound. There exists the concept of cutoffs \ttsays{cite}, that are used to solve reachability for infinite systems by noting when the systems reach an already visited state. To the best of our knowledge nobody defined the language of such an event structure yet and this turns out to be surprisingly difficult.
%
%
%
%The second shortcoming is the fact that some systems expressed as NFAs have an exponential representation as event structures. This could possibly be remedied by sharing substructures inside the event structure and thereby using only polynomial space. However, it is unclear how the language inclusion algorithm would have to be adapted. \ttsays{we could erase this}

\bibliography{references}
\bibliographystyle{splncs04}% the mandatory bibstyle

\appendix
\crefalias{section}{appsec}

\section{Lemmas and Proofs}
\label{app:detailed}

\begin{lemma}
\label{lemma:totalorder}
Let $C$ be a configuration.
$|T(C)| = 1$ if and only if $|C| = 1$ or for all events $e,e' \in C$ either $(e < e')$ or $(e' < e)$.

\end{lemma}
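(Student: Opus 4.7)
The statement says that the linearizations of the restricted partial order $<_{\lceil C \times C}$ are unique precisely when that order is already total (with the singleton case as a degenerate total order). I would prove both directions separately; the forward direction is the substantive one.

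For the easy direction ($\Leftarrow$), if $|C| = 1$, then $C$ has exactly one linearization consisting of the single event. If $<$ restricted to $C$ is total, then any trace $\langle e_1, \ldots, e_n \rangle \in T(C)$ must list the events in strictly increasing order with respect to $<$, so there is exactly one such sequence, namely the unique linear arrangement induced by $<$.

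For the forward direction ($\Rightarrow$), I would argue by contrapositive. Assume $|C| \geq 2$ and there exist distinct events $e, e' \in C$ that are incomparable, i.e., neither $e < e'$ nor $e' < e$ holds. The goal is to build two distinct elements of $T(C)$. The construction I have in mind proceeds in three stages: (i) fix any linear extension $\tau_0$ of the downward-closed set $X = \hist{e} \cup \hist{e'}$ restricted to $C$ (this is possible because any finite strict partial order admits a linearization, e.g.\ by iterated topological sort); (ii) append $e$ then $e'$, or $e'$ then $e$, to obtain two partial sequences (both valid since $X$ contains all $<$-predecessors of $e$ and $e'$ in $C$, and $e, e'$ are incomparable); (iii) extend each to a full linearization of $C$ by topologically sorting the remaining events of $C \setminus (X \cup \{e, e'\})$ in a fixed way, appending them after the prefix. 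Both resulting sequences lie in $T(C)$, and they differ in the relative order of $e$ and $e'$, so $|T(C)| \geq 2$.

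The main obstacle is justifying that step (iii) indeed yields a valid trace: I need that appending the remaining events in a chosen topological order respects $<$ globally, which follows because any event $e'' \in C \setminus (X \cup \{e, e'\})$ has its $<$-predecessors either already in $X$ (hence already placed), equal to $e$ or $e'$ (hence placed just before the tail), or themselves in the tail (where we use a valid topological order). The two constructed linearizations use the same tail ordering but differ in the fixed swap of $e$ and $e'$, so they are distinct elements of $T(C)$, contradicting $|T(C)| = 1$.
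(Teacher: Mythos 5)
Your proof is correct, and the $\Leftarrow$ direction matches the paper's almost verbatim. For the substantive $\Rightarrow$ direction, however, you take a genuinely different route. The paper argues directly: writing the unique trace as $\langle e_1,\ldots,e_n\rangle$, it observes that the adjacent transposition $\langle e_2,e_1,\ldots,e_n\rangle$ would also be a trace unless $e_1 < e_2$ (an adjacent swap of incomparable events always preserves validity), deduces $e_i < e_{i+1}$ for every $i$ from uniqueness, and concludes totality by transitivity. You instead argue by contrapositive: from an arbitrary incomparable pair $e,e'$ you explicitly build two distinct linearizations via the prefix $\hist{e}\cup\hist{e'}$, the two orderings of $e,e'$, and a common topologically sorted tail. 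The paper's argument buys brevity --- adjacency makes the swap trivially legal, so no downward-closure bookkeeping is needed --- while yours handles an arbitrary incomparable pair and therefore has to (and does) verify that the prefix $X\cup\{e,e'\}$ is downward closed so that the tail can be appended safely; this extra care also makes your construction reusable, e.g.\ it is essentially the argument underlying \cref{lem:simempty} and \cref{lemma:splits}. Two trivial points you share with the paper and could tighten: the condition ``for all $e,e'\in C$'' must be read as ranging over \emph{distinct} events (otherwise irreflexivity falsifies it), and your appended $e,e'$ are not in $X$ precisely because they are incomparable and $<$ is irreflexive --- worth one clause to say so.
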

\begin{proof}

The claim is trivial for $|C| = 1$.

$\Rightarrow$\\
Let $T(C) = \{\langle e_1,\ldots,e_n \rangle\}$.
Since $\langle e_2,e_1,\ldots,e_n \rangle \notin T(C)$, we have that $e_1 < e_2$.
Likewise, we can show that $e_i < e_{i+1}$ for all $i \in [1,n-1]$.
Therefore, $e_i < e_j$ for $i < j$.
Since $C = \{e_1,\ldots,e_n\}$ have shown the claim.

$\Leftarrow$\\
Since $<$ is irreflexive, we immediately get that $|T(C)| > 0$.
From the definition of $T(C)$ follows that $e < e'$ then $e$ appears earlier in any trace in $T(C)$ than $e'$.
Since we assume that every pair of events is ordered by $<$, we have that the order in any trace is fully fixed.
In other words, there can only be a single trace. 
\qed

\end{proof}

\begin{corollary}
$|T(C)| > 1$ if and only if there are concurrent events $e,e' \in C$.
\end{corollary}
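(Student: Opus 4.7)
The plan is to derive this corollary as a direct logical consequence of the preceding \cref{lemma:totalorder}, combined with the conflict-freeness clause built into the definition of a configuration. No new combinatorial argument is needed.

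First I would record the trivial observation that $|T(C)| \ge 1$ for every configuration $C$: since $C$ is left-closed and $<$ is a strict partial order on $C$, $C$ admits at least one linear extension that respects $<$, which is by definition a trace. Hence $|T(C)| > 1$ is equivalent to $|T(C)| \neq 1$. Applying \cref{lemma:totalorder} and De Morgan, this is in turn equivalent to the conjunction ``$|C| \neq 1$ and there exist $e,e' \in C$ with $\neg(e < e')$ and $\neg(e' < e)$''. Since $|C| \ge 1$ always, $|C| \neq 1$ forces $|C| \ge 2$, so the witnesses $e,e'$ can be chosen distinct.

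Next I would invoke conflict-freeness. By definition of configuration, $\neg(e \# e')$ for all $e,e' \in C$. Together with $e \neq e'$, $\neg(e < e')$, and $\neg(e' < e)$, this matches exactly the four clauses defining concurrency in the preliminaries; so the pair $e,e'$ produced above is concurrent, establishing the forward implication. For the converse, any pair of concurrent events $e,e' \in C$ is by definition distinct and unordered, which witnesses the failure of the right-hand side of \cref{lemma:totalorder} (we have $|C| \ge 2$ and an unordered pair); hence $|T(C)| \neq 1$, and combined with $|T(C)| \ge 1$ this gives $|T(C)| > 1$.

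The only subtlety, if any, is bookkeeping: negating the disjunction in \cref{lemma:totalorder} correctly, and remembering to cite conflict-freeness to upgrade ``unordered and distinct'' to ``concurrent''. No real difficulty arises.
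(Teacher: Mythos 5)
Your derivation is correct and matches the route the paper intends: the corollary is stated without proof precisely because it is the contrapositive-style consequence of \cref{lemma:totalorder} that you spell out, with conflict-freeness of configurations upgrading ``distinct and unordered'' to ``concurrent'' and the existence of a linear extension giving $|T(C)|\ge 1$. Your bookkeeping (reading the lemma's universal clause over distinct events so that its negation yields a distinct unordered pair) is exactly the point that needs care, and you handle it.
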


\lemmasimempty*

\begin{proof}
$\Rightarrow:$

There is a label preserving bijection $\varphi$, such that $<_1 \cup <_2^\varphi$ is a partial order.
Thus, via a depth first search over the events of $C_1$ with the order $<_1 \cup <_2^\varphi$, we can find a sequence of events $\langle e_1,\ldots,e_n\rangle$, such that
$e_j (<_1 \cup <_2^\varphi) e_i$ implies $j < i$.
It follows that $e_j <_1 e_i$ implies $j < i$, i.e. $\langle e_1,\ldots,e_n\rangle \in T(C_1)$.
Likewise it is the case that $e_j (<_2^{\varphi}) e_i$ implies $j < i$ and by the definition of $<^{\varphi}$, we get $\varphi(e_j) <_2 \varphi(e_i)$, i.e. $\varphi(\langle e_1,\ldots,e_n\rangle) \in T(C_2)$.
Since $\varphi$ is label-preserving, $h(\langle e_1,\ldots,e_n\rangle) = h(\varphi(\langle e_1,\ldots,e_n\rangle)) =: w$, which implies $w \in \lang(C_1)$ and $w \in \lang(C_2)$.

$\Leftarrow:$

Let $t_1 = \langle e_1,\ldots,e_n \rangle$ and $t_2 = \langle e'_1,\ldots,e'_n \rangle$ be traces of $C_1$ and $C_2$ respectively, such that $h(t_1) = h(t_2) \in \lang(C_1) \cap \lang(C_2)$.
We show that the mapping $\varphi: e_i \mapsto e'_i$ is a necessary embedding.
Clearly $\varphi$ is bijective and label-preserving.
We need to show $\forall e \in C_1: \neg e (<_1 \cup <_2^\varphi) e$.
Assume the contrary, i.e. $\exists e \in C_1:  e (<_1 \cup <_2^\varphi) e$.
In order to close the cycle and since $<_1$ and $<_2$ are order relations, there must exist an event $e' \in C_1$, such that $e' <_1 e$ and $e <_2^{\varphi} e'$, i.e. $\varphi(e) <_2 \varphi(e')$.
This implies, that for all $t \in T(C_1)$ we have that $e'$ appears earlier than $e$ and all traces $t\in T(C_2)$ are such that $\varphi(e)$ appears earlier than $\varphi(e')$.
This is a contradiction to $t_1 \in T(C_1), t_2 \in T(C_2)$ and $t_2 = \varphi(t_1)$. 
\qed

\end{proof}

%% TODO: these were; what are they doing?
%\lemmamaxtransfer*
%\begin{restatable}{lemma}{lemmamaxtransfer}
%\label{lemma:maxtransfer}
%
%Let $C_1$ and $C_2$ be configurations such that $C_1~\sqsubset^{\varphi}_S~C_2$.
%If $e$ is maximal in $C_1$ then $\varphi(e)$ is maximal in $C_2$.
%
%\end{restatable}
%
%\begin{restatable}{lemma}{lemmasubconf}
%\label{lemma:subconf}
%
%Let $e$ be maximal in $C$.
%$C\setminus \{e\}$ is a configuration.
%If $\langle e_1,\ldots,e_n \rangle \in T(C\setminus \{e\})$ then $\langle e_1,\ldots,e_n,e \rangle \in T(C)$.
%
%\end{restatable}

\begin{restatable}{lemma}{lemmamaxtransfer}
\label{lemma:maxtransfer}

An event $e$ is \emph{maximal} in $C$, if there is no event $e' \in C$, such that $e < e'$.
Let $C_1$ and $C_2$ be configurations such that $C_1~\sqsubset^{\varphi}_S~C_2$.
If $e$ is maximal in $C_1$ then $\varphi(e)$ is maximal in $C_2$.

\end{restatable}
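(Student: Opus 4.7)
The plan is a short proof by contradiction that leverages only the three defining properties of a sufficient embedding. First I would unpack the setup: we have a bijective, label-preserving $\varphi : C_1 \to C_2$ satisfying the causality condition $\varphi(e_1) <_2 \varphi(e_2) \Rightarrow e_1 <_1 e_2$, and we assume $e \in C_1$ is maximal, i.e.\ there is no $e' \in C_1$ with $e <_1 e'$.

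Next I would suppose, for contradiction, that $\varphi(e)$ is not maximal in $C_2$. Then by the definition of maximality there exists some $d \in C_2$ with $\varphi(e) <_2 d$. Here surjectivity of $\varphi$ is essential: it gives an $e' \in C_1$ with $\varphi(e') = d$, turning the $<_2$-witness in $C_2$ into a $<_2$-relation between images of elements of $C_1$.

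Now I would apply condition (C) of sufficient embedding to $\varphi(e) <_2 \varphi(e')$, concluding $e <_1 e'$. This directly contradicts the assumption that $e$ is maximal in $C_1$, completing the proof. No case analysis, no use of labels or conflicts, and no use of injectivity beyond well-definedness is needed; the only real content is that surjectivity of $\varphi$ plus the ``reverse causality preservation'' clause of a sufficient embedding is enough to transport maximality forward. I do not anticipate any genuine obstacle here—the lemma is essentially an immediate consequence of the definition, and I expect it serves only as an auxiliary fact for later results (e.g.\ relating sufficient embeddings to language inclusion in Lemma~\ref{lem:sufficient}).
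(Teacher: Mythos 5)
Your proof is correct and follows essentially the same route as the paper's: assume a $<_2$-successor of $\varphi(e)$ exists, pull it back through the bijection, apply the causality-reflection clause of sufficient embeddings to get a $<_1$-successor of $e$, and contradict maximality. The only difference is that you make the use of surjectivity explicit, which the paper leaves implicit.
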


\begin{proof}

Assume there exists $e'$, such that $\varphi(e) <_2 \varphi(e')$.
Since $\varphi$ is a sufficient embedding, we have that $e <_1 e'$, which is a contradiction to $e$ being maximal in $C_1$.
Therefore, no such $e'$ exists.
Since $\varphi$ is bijective, $\varphi(e)$ is maximal in $C_2$. 
\qed

\end{proof}

\begin{restatable}{lemma}{lemmasubconf}
\label{lemma:subconf}

Let $e$ be maximal in configuration $C$, then
$C\setminus \{e\}$ is a configuration and
for every $\langle e_1,\ldots,e_n \rangle \in T(C\setminus \{e\})$ it is the case that $\langle e_1,\ldots,e_n,e \rangle \in T(C)$.

\end{restatable}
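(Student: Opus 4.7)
The plan is to discharge the two claims in turn: first that $C \setminus \{e\}$ is a configuration, and then that appending $e$ to any of its traces yields a trace of $C$. Both parts hinge entirely on maximality of $e$, so I expect no substantive obstacle — the lemma is essentially a bookkeeping statement about the definitions of configuration, trace, and maximal event.

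\textbf{$C \setminus \{e\}$ is a configuration.} Conflict-freedom is immediate, since $C$ is conflict-free and any subset of a conflict-free set is conflict-free. For left-closure, I would take any $e' \in C \setminus \{e\}$ and any $e'' \in E$ with $e'' < e'$; from left-closure of $C$ we obtain $e'' \in C$, so it suffices to rule out $e'' = e$. But $e'' = e$ would give $e < e'$ with $e' \in C$, contradicting maximality of $e$ in $C$. Hence $e'' \in C \setminus \{e\}$, as required.

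\textbf{Appending $e$ to any trace of $C \setminus \{e\}$ yields a trace of $C$.} Let $\langle e_1, \ldots, e_n \rangle \in T(C \setminus \{e\})$. By definition of a trace of a configuration, this is a sequence in which every event of $C \setminus \{e\}$ appears exactly once and the order respects $<$. The appended sequence $\langle e_1, \ldots, e_n, e \rangle$ clearly lists every event of $C$ exactly once. For the order-respecting condition, I would consider pairs $(e_i, e_j)$ with $e_i < e_j$: if both lie in $C \setminus \{e\}$, the required $i < j$ is inherited from the original trace; the pair $(e_i, e)$ with $e_i < e$ satisfies the condition because $e_i$ occupies position $i \leq n$ while $e$ occupies position $n+1$; and the case $e < e_j$ with $e_j \in C \setminus \{e\}$ cannot occur since $e$ is maximal in $C$. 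Hence $\langle e_1, \ldots, e_n, e \rangle \in T(C)$.

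The only place any care is needed is verifying that $e \neq e''$ in the left-closure step and checking that maximality rules out the $e < e_j$ case in the trace argument; both are one-line appeals to the definition of maximal. I do not anticipate any hard part in this proof.
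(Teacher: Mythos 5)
Your proof is correct and follows essentially the same route as the paper's: conflict-freedom by subset, left-closure by maximality of $e$, and appending $e$ at the last position, which trivially satisfies all ordering constraints since $e$ has no successors in $C$. Your write-up is merely more explicit about the case analysis; no substantive difference.
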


\begin{proof}

$C\setminus \{e\}$ is conflict free, because $C$ is conflict free.
$C\setminus \{e\}$ is left closed, because $e$ is maximal, therefore there is no event $e'$ such that $e < e'$.
Note that traces in $T(.)$ contain all events of the configuration.
Therefore, from $\langle e_1,\ldots,e_n \rangle \in T(C\setminus \{e\})$ follows $\{e_1,\ldots,e_n,e\} = C$.
Furthermore, for every $e' \in C$, such that $e' < e$, we have that $e' \in \{e_1,\ldots,e_n\}$, which implies $\langle e_1,\ldots,e_n,e \rangle \in T(C)$.
\qed
\end{proof}

\begin{restatable}{lemma}{lemmasubconf2}
\label{lemma:subconf2}

Let $C$ be a configuration.
$\langle e_1,\ldots,e_n \rangle \in T(C)$ implies that $e_n$ is maximal in $C$.

\end{restatable}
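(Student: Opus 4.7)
The plan is to argue by contradiction, directly from the definition of a trace of a configuration.

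First, I would suppose for contradiction that $e_n$ is not maximal in $C$. By the definition of maximal given in \cref{lemma:maxtransfer}, this means there exists some $e' \in C$ with $e_n <\, e'$. Since $<$ is irreflexive (it is a strict partial order on $E$), we have $e' \neq e_n$.

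Next, I would use the fact that a trace of $C$ contains every event of $C$ exactly once. So $e'$ must appear in the sequence $\langle e_1, \ldots, e_n \rangle$, i.e., $e' = e_i$ for some $i \in \{1, \ldots, n\}$, and since $e' \neq e_n$ we have $i < n$. Then applying the trace ordering condition, $e_n <\, e_i$ implies $n < i$, which contradicts $i < n$.

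I do not expect any obstacle here; the entire argument follows immediately from unfolding the definition of a trace together with the irreflexivity of $<$, and is just the converse direction of the observation already used implicitly in \cref{lemma:subconf}. The only thing to be careful about is distinguishing the strict inequality (from irreflexivity) from the weak one, which is why we need $e' \neq e_n$ to derive a strict contradiction $i < n$ versus $n < i$.
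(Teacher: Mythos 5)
Your proof is correct: the paper actually states this lemma without any proof at all (it is evidently regarded as immediate from the definitions), and your argument by contradiction --- using that a trace contains every event of $C$ exactly once and that $e_n < e_i$ forces $n < i$ --- is precisely the routine unfolding the authors left implicit. The one point you flag, needing $e' \neq e_n$ via irreflexivity of $<$ to get a strict position inequality, is handled correctly.
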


\lemmasufficient*

\begin{proof}

We show the claim by induction on $|C_1| = |C_2| = n$.

In the base case we have $C_1 = \{\bot\}$ and $C_2 = \{\bot\}$.
For these configurations the claim is trivially fulfilled.

Let the induction hypothesis be that the claim holds for all configurations of size $n$.

Assume that $C_1$ and $C_2$ are configurations of size $n+1$, such that $C_1~\sqsubset^{\varphi}_S~C_2$.
We show $T(C_1) \subseteq \varphi(T(C_2))$.
Since $\varphi$ is bijective and label-preserving, $T(C_1) \subseteq \varphi(T(C_2))$ is equivalent to $\lang(C_1) \subseteq \lang(C_2)$.

Let $t = \langle e_1,\ldots,e_{n+1} \rangle \in T(C_1)$.
We need to show that $\varphi(t) \in T(C_2)$. 

Since $e_{n+1}$ is the last event of a trace and we consider only traces that include all events of the configuration, clearly $e_{n+1}$ is maximal in $C_1$.
According to \cref{lemma:maxtransfer,lemma:subconf}, we have that $C_1 \setminus \{e_{n+1}\}$ and $C_2 \setminus \{\varphi(e_{n+1})\}$ are configurations of size $n$.
Furthermore, $\varphi$ restricted to $C_1 \setminus \{e_{n+1}\}$ is a witness for $C_1 \setminus \{e_{n+1}\}~\sqsubset_S~C_2 \setminus \{\varphi(e_{n+1})\}$.
%In addition we have that $\langle e_1,\ldots,e_n \rangle \in T(C_1 \setminus \{e_{n+1}\})$. 
From the induction hypothesis we get $ \varphi(\langle e_1,\ldots,e_n \rangle) \in T(C_2 \setminus \{\varphi(e_{n+1})\})$.
Since $\varphi(e_{n+1})$ is maximal in $C_2$, from \cref{lemma:subconf} follows $\varphi(\langle e_1,\ldots,e_n,e_{n+1} \rangle) \in T(C_2)$.
\qed
\end{proof}

\begin{lemma}
\label{lem:sufficient_2}

Let $C_1$ and $C_2$ be maximal configuration of FLES $\les_1$ and $\les_2$ and for every $e_1 \in C_1$ and $e_2 \in C_2$ it is the case that $|\{e\in C_1 \mid h(e_1) = h(e)\}| = 1$ and $|\{e\in C_2 \mid h(e_2) = h(e)\}| = 1$\\
If $\lang(C_1) \subseteq \lang(C_2)$ then $C_1~\sqsubset_S~C_2$.

\end{lemma}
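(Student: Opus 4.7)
The plan is to show that when labels are unique, the only candidate for a sufficient embedding is the natural one that matches events by their labels, and then to verify that language inclusion forces this mapping to preserve the causal order in the required direction.

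First, I would establish that $C_1$ and $C_2$ carry the same set of labels and have the same cardinality. Every trace of $C_i$ enumerates all its (non-$\bot$) events exactly once, so every word in $\lang(C_i)$ has length $|C_i|-1$ and is a permutation of the non-$\varepsilon$ labels of $C_i$. Pick any $t_1 \in T(C_1)$; by $\lang(C_1) \subseteq \lang(C_2)$, the word $h_1(t_1)$ lies in $\lang(C_2)$, which forces $|C_1| = |C_2|$ and equality of the label sets. This lets me define $\varphi : C_1 \to C_2$ by sending each event of $C_1$ to the unique event of $C_2$ with the same label (and $\bot_1 \mapsto \bot_2$). By construction $\varphi$ is bijective and satisfies condition B) of a sufficient embedding.

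Next, I would lift $\varphi$ from events to traces and show $\varphi(T(C_1)) \subseteq T(C_2)$. For any $t_1 \in T(C_1)$ there exists $t_2 \in T(C_2)$ with $h_2(t_2) = h_1(t_1)$ by inclusion of languages; because labels are unique in $C_2$, the sequence $t_2$ is uniquely determined by its label sequence, and by the definition of $\varphi$ it must be exactly $\varphi(t_1)$.

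Finally, I would verify condition C) — the causality-preservation direction required by the definition of sufficient embedding — by contradiction. Suppose $\varphi(e_1) <_2 \varphi(e_2)$ but $\neg(e_1 <_1 e_2)$. Then either $e_2 <_1 e_1$ or $e_1$ and $e_2$ are concurrent in $C_1$; in either case there is a trace $t_1 \in T(C_1)$ in which $e_2$ appears before $e_1$ (an arbitrary linearisation in the first case, one obtained by splitting on $e_2 < e_1$ in the concurrent case). By the previous paragraph, $\varphi(t_1) \in T(C_2)$, so in this trace $\varphi(e_2)$ appears before $\varphi(e_1)$. This contradicts $\varphi(e_1) <_2 \varphi(e_2)$, which forces $\varphi(e_1)$ to precede $\varphi(e_2)$ in every trace of $C_2$.

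The main obstacle is the very first step: convincing oneself that uniqueness of labels plus language inclusion really does pin down both the cardinality and the label set of $C_2$. Once that is in hand, the rest is a clean application of the uniqueness of labels to identify the ``word $\to$ trace'' map in $C_2$ and a short contradiction argument using a single well-chosen linearisation of $C_1$.
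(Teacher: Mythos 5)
Your proof is correct, but it takes a genuinely different route from the paper's. The paper argues by induction on $|C_1|=|C_2|$: it peels off a maximal event $e_{n+1}$ of $C_1$, shows via a union-of-languages contradiction that some maximal, identically labeled event $e$ of $C_2$ satisfies $\lang(C_1\setminus\{e_{n+1}\})\subseteq\lang(C_2\setminus\{e\})$, applies the induction hypothesis, and then extends the resulting embedding by $e_{n+1}\mapsto e$, checking causality preservation for that one new pair. You instead exploit label uniqueness up front: the only candidate bijection is the label-matching map $\varphi$, which is well defined because any single word of $\lang(C_1)\subseteq\lang(C_2)$ already forces the two configurations to have equal size and equal label sets; uniqueness of labels in $C_2$ then makes the word-to-trace map injective, so $\varphi$ carries traces to traces, and condition C) follows from a short contradiction using one linearisation of $C_1$ that places $e_2$ before $e_1$ (which exists whether the two events are reversed or concurrent). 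Your argument is more direct and avoids the induction entirely, at the price of leaning entirely on the uniqueness hypothesis; the paper's inductive peeling mirrors the structure of its proof of the converse direction (\cref{lem:sufficient}) and is the shape one would try to salvage if labels were only ``almost'' unique. One small point worth making explicit in your write-up: the uniqueness hypothesis applies to the $\varepsilon$ label as well, so $\bot$ is the only $\varepsilon$-labeled event in each configuration; this is what justifies your claim that every word has length $|C_i|-1$ and that a trace of $C_2$ is determined by its word.
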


\begin{proof}

We show the claim by induction on $|C_1| = |C_2| = n$.

In the base case we have $C_1 = \{\bot\}$ and $C_2 = \{\bot\}$.
For these configurations the claim is trivially fulfilled.

Let the induction hypothesis be that the claim holds for all configurations of size $n$.

Let $e_{n+1}$ be a maximal event of $C_1$.
Let $M_{e_{n+1}} := \{e \in C_2 \mid e $ is maximal and $h(e_{n+1}) = h(e)\}$.
%\ttsays{how can there be more than one event with that label?}
We start by showing $\exists e \in M_{e_{n+1}}$, such that
$\lang(C_1 \setminus \{e_{n+1}\}) \subseteq \lang(C_2 \setminus \{e\})$.

Assume the contrary.
That is, for every maximal event $e \in M_{e_{n+1}}$ it is the case that $\lang(C_1 \setminus \{e_{n+1}\}) \nsubseteq \lang(C_2 \setminus \{e\})$.
Then $\lang(C_1 \setminus \{e_{n+1}\}) \nsubseteq \bigcup_{e\in M_{e_{n+1}}} \lang(C_2 \setminus \{e\})$.
That is, there is a word $w \in \lang(C_1 \setminus \{e_{n+1}\})$ such that $w \notin \bigcup_{e\in M_{e_{n+1}}} \lang(C_2 \setminus \{e\})$.
Since $e_{n+1}$ is maximal $w \circ h(e_{n+1}) \in \lang(C_1)$ ($\circ$ denotes concatenation).
Let us define language $\lang^{\circ} := \bigcup_{e\in M_{e_{n+1}}} \lang(C_2 \setminus \{e\}) \circ h(e_{n+1})$ ($\circ$ is applied element-wise).
Clearly, $w \circ h(e_{n+1}) \notin \lang^{\circ}$.

We claim that no word in $\lang(C_2) \setminus \lang^{\circ}$ ends with $h(e_{n+1})$.
Assume there is such a word $w'$.
This word corresponds to a trace ending with some event $e'$ with label $h(e_{n+1})$.
Since $e'$ is the last event in a trace, and we only consider maximal configurations, it needs to be maximal.
Therefore, $e' \in M_{e_{n+1}}$, which is a contradiction to $w' \in \lang(C_2) \setminus \lang^{\circ}$.
Since $w \circ h(e_{n+1})$ ends with $h(e_{n+1})$, we have $w \circ h(e_{n+1}) \notin \lang(C_2) \setminus \lang^{\circ}$.

However, clearly $\lang(C_2) = \lang^{\circ} \cupdot (\lang(C_2) \setminus \lang^{\circ})$.
In summary, we found $w \circ h(e_{n+1}) \in \lang(C_1)$ such that 
$w \circ h(e_{n+1}) \notin \lang(C_2)$, which is a contradiction to $\lang(C_1) \subseteq \lang(C_2)$.

Therefore, we can apply the induction hypothesis to $C_1 \setminus \{e_{n+1}\}$ and $C_2 \setminus \{e\}$ for the existing $e \in M_{e_{n+1}}$.
That is, $C_1 \setminus \{e_{n+1}\} ~\sqsubset^{\varphi}_S~ C_2 \setminus \{e\}$.
We extend $\varphi$ to $C_1$ by setting $\varphi(e_{n+1}) := e$.
Clearly, $\varphi$ is bijective and label-preserving.
Furthermore, causality preservation for events other than $e_{n+1}$ carries over from $C_1 \setminus \{e_{n+1}\}$ to $C_1$.
What remains to show is that for all $e\in C$ with $\varphi(e) < \varphi(e_{n+1})$ we have $e < e_{n+1}$.
Since $e_{n+1}$ is maximal, clearly it is not the case that $e_{n+1} < e$.
Assume that $e$ and $e_{n+1}$ are concurrent.
Then there is a word, in which $h(e_{n+1})$ appears earlier than $h(e)$.
Since events are uniquely labeled and $\varphi(e) < \varphi(e_{n+1})$, this word is not in $\lang(\les_2)$,
which is a contradiction to $\lang(\les_1) \subseteq \lang(\les_2)$.
Therefore, $e < e_{n+1}$ and $\varphi$ is a sufficient embedding.
\qed

\end{proof}

\begin{lemma}
\label{lemma:splits}

Let $C$ be a configuration and $e_1, e_2 \in C$ be concurrent events, then
$T(C) = T(\les^C_{e_1<e_2}) \cup T(\les^C_{e_2<e_1})$ and $T(\les^C_{e_1<e_2}) \cap T(\les^C_{e_2<e_1})=\emptyset$.

\end{lemma}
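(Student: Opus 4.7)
The plan is to reduce both the equality and the disjointness to a simple positional argument about where $e_1$ and $e_2$ occur in a trace, after first checking that the two split configurations are even well-defined.

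First I would verify that $\les^C_{e_1<e_2}$ and $\les^C_{e_2<e_1}$ are genuine configurations. The event set and the (empty) conflict relation are inherited from $C$, so left-closure and conflict-freedom are immediate. The only thing to check is that $(<\cup\{(e_1,e_2)\})^+$ is a strict partial order. Since $e_1$ and $e_2$ are concurrent in $C$, neither $e_1<e_2$ nor $e_2<e_1$ holds in $<$, so adding $(e_1,e_2)$ introduces no cycle, and the transitive closure of an acyclic relation remains irreflexive. The symmetric case is identical.

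For $T(\les^C_{e_1<e_2})\cup T(\les^C_{e_2<e_1})\subseteq T(C)$ I would note that the causality relation of each split extends $<$, so any sequence that linearises the stronger order automatically linearises $<$ and enumerates every event of $C$ exactly once. For the converse, take $\trace=\langle e_{i_1},\ldots,e_{i_n}\rangle\in T(C)$. Since $e_1$ and $e_2$ are distinct events both appearing exactly once in $\trace$, one of them precedes the other; assume $e_1$ occurs before $e_2$ (the other case is symmetric). Then $\trace$ positionally respects $<\cup\{(e_1,e_2)\}$. The key observation is that positional respect is preserved under transitive closure: if $e\,(<\cup\{(e_1,e_2)\})^+\,e'$ via a chain $e=f_0\,\prec\,f_1\,\prec\,\cdots\,\prec\,f_k=e'$ with each $\prec$-step respected positionally by $\trace$, then by transitivity of the positional order on indices, $e$ precedes $e'$ in $\trace$. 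Hence $\trace\in T(\les^C_{e_1<e_2})$.

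Disjointness is then immediate: every trace in $T(\les^C_{e_1<e_2})$ lists $e_1$ strictly before $e_2$, while every trace in $T(\les^C_{e_2<e_1})$ lists $e_2$ strictly before $e_1$; these conditions are incompatible (using $e_1\neq e_2$, which is part of the definition of concurrency).

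The only mildly delicate step is the transitive-closure argument in the reverse inclusion, but it is a routine consequence of the fact that finite compositions of position-respecting steps remain position-respecting; no other obstacle is anticipated.
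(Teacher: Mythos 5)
Your proposal is correct and follows essentially the same route as the paper's proof: one inclusion because the split only adds constraints, the converse by a case split on whether $e_1$ or $e_2$ occurs first in the trace, and disjointness because no trace can order $e_1$ and $e_2$ both ways. You are somewhat more careful than the paper in two places — checking that the split yields a well-defined strict partial order, and spelling out why positional respect survives the transitive closure — but these are refinements of the same argument, not a different one.
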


\begin{proof}

From $t \in T(C_{e_1<e_2})$ directly follows $t \in T(C)$, since there is simply one less constraint to fulfill.
Thus, we have $T(C_{e_1<e_2}) \subseteq T(C)$.
Symmetrically we can show $T(C_{e_2<e_1}) \subseteq T(C)$, thus showing $T(C_{e_1<e_2}) \cup T(C_{e_2<e_1}) \subseteq T(C)$.

Let $t = \langle e'_1,\ldots,e'_n \rangle \in T(C)$.
There are $i \neq j$, such that $e'_i = e_1$ and $e'_j = e_2$.
In the case $i < j$, $t$ fulfills all constraints to be a trace of $C_{e_1<e_2}$, thus $t \in T(C_{e_1<e_2})$.
Likewise $j < i$ implies $t \in T(C_{e_2<e_1})$.
In any case $t \in T(C_{e_1<e_2}) \cup T(C_{e_2<e_1})$, showing $T(C) \subseteq T(C_{e_1<e_2}) \cup T(C_{e_2<e_1}) $.

Further, the $T(\les^C_{e_1<e_2})$ and $T(\les^C_{e_2<e_1})$ are disjoint because no trace can fulfill both constraints $e_1<e_2$ and $e_2<e_1$. 
\qed
\end{proof}
\review{
\lemmasplitlang*

\begin{proof}

$\lang(C) = \lang(C_{e_1<e_2}) \cup \lang(C_{e_2<e_1})$ is a direct consequence of \cref{lemma:splits} and the definition of the language of prime event structures.
If $h$ is injective, i.e. labels are unique, then each word in $\lang(C_{e_1<e_2})$ contains the sub sequence $\ldots h(e_1) \ldots h(e_2) \ldots$ whereas each word in $\lang(C_{e_2<e_1})$ contains the different sub sequence $\ldots h(e_2) \ldots h(e_1), \ldots$, which shows that no word can be part of both languages.
\qed
\end{proof}
}
\lemmasplitwitness*

\begin{proof}

Since $\varphi$ is a label-preserving bijection, 
from $C_1 \nsqsubset^{\varphi}_S C_2$ follows that there are events 
$e,e'\in C$, such that $\varphi(e) <_2 \varphi(e')$ and $e \nless_1 e'$.
Towards contradiction assume $e' <_1 e$.
Then we have $e' (<_1 \cup <^{\varphi}_2) e'$, which is a contradiction to $\varphi$ being a necessary embedding.
Therefore, $e,e'$ are concurrent and a witness to the claim.
\qed
\end{proof}

\lemmapreserving*

\begin{proof}

Given a trace $\vec{e} = \langle e_1,\ldots,e_n \rangle$ we define $C^{\vec{e}}_l := \cup_{i=1}^l \{e_i\}$.

Let $w \in \lang(\mathcal{A}^{\les})$, then by the definition of $\mathcal{A}^{\les}$, there is a trace  $\vec{e} = \langle e_1,\ldots,e_n \rangle$ such that $w = \langle h(e_1),\ldots, h(e_n) \rangle$, such that for every $l \leq n: C^{\vec{e}}_l$ is a configuration and $C^{\vec{e}}_n$ is a maximal configuration.
Since all $C^{\vec{e}}_l$ are configurations, we have $\langle e_1,\ldots,e_n \rangle \in T(C^{\vec{e}}_n)$.
Since $C^{\vec{e}}_n$ is maximal, we have $w \in \lang(\les)$.

Conversely, for every $w \in \lang(\les)$, there is a trace  $\vec{e} = \langle e_1,\ldots,e_n \rangle$ such that $w = \langle h(e_1),\ldots, h(e_n) \rangle$ and $\langle e_1,\ldots,e_n \rangle \in T(\les)$.
From $\vec{e} \in T(\les)$ follows that for every $l \leq n: C^{\vec{e}}_l$ is a configuration and that $C^{\vec{e}}_n$ is maximal.
By the definition of $\delta^{\les}$, we have for every $i = 1,\ldots,n-1: (q^{\les}_{C^{\vec{e}}_i},h(e_i),q^{\les}_{C^{\vec{e}}_{i+1}}) \in \delta^{\les}$.
Since $C^{\vec{e}}_n$ is maximal, $q^{\les}_{C^{\vec{e}}_{n}}$ is accepting.
Therefore, $w \in \lang(Q^{\les})$.
\qed
\end{proof}

%\theoremnfaexplosion*

%\begin{proof}
%
%$\lang(\les)$ contains a sub-language of all permutations of labels $\{h(e_1),\ldots,h(e_n)\}$.
%As shown in \cite{Ellul2005}, no NFA with less than $2^n$ states can accept the language of permutations of $n$ symbols.
%Clearly, in order to accept $\lang(\les)$, every NFA $Q$ that accepts $\lang(\les)$ must contain a sub-automaton that accepts the permutation of symbols $\{h(e_1),\ldots,h(e_n)\}$.
%Thus, $Q$ needs to have at least $2^n$ states.
%
%\end{proof}

\exponentialsuccinctness*

\begin{proof}
The family is given by the $\leslabel$-labeled prime event structures $\les_n := \langle E_n, <_n, \emptyset, h_n \rangle$, where $\leslabel = \{\varepsilon\} \cup \{1,\ldots,n\}$, $E_n = \{\bot\} \cup \{e_1,\ldots,e_n\}$, $<_n := \{(\bot,e_i) \mid i = 1,\ldots,n\}$, and $h_n(e_i) := i$.
$\les_n$ encodes exactly all permutations of $\{1,\ldots,n\}$.
As shown in \cite{Ellul2005}, no NFA with less than $2^n$ states can accept the language of permutations of $n$ symbols,
showing that every $\mathcal{A}$ with $\lang(\mathcal{A}) = \lang(\les_n)$ has at least $2^n$ states.

\noindent Furthermore, the number of all permutations of $n$ symbols is $n!$, showing that $|\lang(\les_n)| = n!$.

\noindent Finally, every non-empty subset of $E_n$ is a configuration.
There are $2^n + 1$ such sets, which are all subsets of $\{e_1,\ldots,e_n\}$ together with $\bot$ plus the set $\{\bot\}$.
Since states of our encoding correspond one-to-one with configurations, we have $|Q^{\les}| = 2^n + 1$.
\qed
\end{proof}

\section{Detailed Proof of the Language Inclusion Reduction}

\membershipnphard*

\review{

\begin{proof}
We proof by reduction of HC to FLES language membership.

We use $s(f)$ and $t(f)$ to denote the respectively source and target vertices of directed edge $f$.

Let $G = (V,F)$ be a directed graph and define $n := |V|$.
We assume that $G$ does not contain any self loops, i.e. edges $f$ such that $s(f) = t(f)$, and that $n>1$.
Apart from a graph with only one vertex, a graph contains a Hamiltonian cycle if and only if the same graph without self-loops contains a Hamiltonian cycle.
The case $n=1$ can be trivially decided and is therefore not considered in our reduction.
A graph with self-loops can be converted into one without self-loops in linear time by removing self-loops from $F$.

Given an integer $j \in [1,n]$, we abbreviate $(j\ \mathsf{mod}\ n) + 1$ by $\jsucc{j}$.
We say that $f$ is connected to $f'$ if $t(f) = s(f')$.
We say that a sequence of edges $\langle f_1, \ldots, f_n \rangle$ is a cycle if for every $j \in [1,n]: f_j$ is connected to $f_{\jsucc{j}}$.
A Hamiltonian cycle of $G$ is a cycle $\langle f_1, \ldots, f_n \rangle$ of $F$, 
such that $\{t(f_j) \mid j \in [1,n]\} = V$.
The decision problem HC: "Does there exist a Hamiltonian cycle of $G$" is NP-hard \cite{karp1972reducibility}.

We provide a polynomially sized (in $|V|+|F|$) $\{\varepsilon,x\}$-labeled event structure $\les^G := \langle E,<,\#,h \rangle$, 
such that $x^n \in \lang(\les^G)$ ($n$ times the letter $x$ concatenated) if and only if there exists a Hamiltonian cycle of $G$. 
The main idea is that configurations of the event structure translate to connected sequences of edges via events that each represent the assignment of an edge to a position in the sequence and events that express connectedness of two successive edges in the sequence.

\noindent The set of events $E$ contains exactly the following \textbf{events}:

\begin{itemize}
 \item The initial event $\bot$
 \item For every edge $f \in F$ and $j \in [1,n]$, there is an event $e_{f,j} \in E$ representing that $f$ is the $j$'th element of a sequence of edges.
       %\\There are $n*|F|$ such events.
 \item For every pair of edges $f,f' \in F$ with $t(f) = s(f')$ and $j\in [1,n]$ 
there is an event $e_{f,f',j} \in E$ representing that $f$ and $f'$ are successive elements of a sequence of edges.
%\\There are at most $n*|F|^2$ such events.
\end{itemize}

\noindent Formally, we define $E$ as follows:

$$E  := \{\bot\} \cup \bigcup_{j = 1}^n \left(\{e_{f,j} \mid f \in F\} \cup \{e_{f,f',j} \mid f,f' \in F \wedge  t(f) = s(f')\} \right)$$

Clearly, $|\les^G| = |E|$ is polynomial in $|F| \leq |G|$.
Note that causality, conflict relation, and labeling are always of at most quadratic size in $|E|$.

\noindent The causality relation $<$ is the smallest partial order relation containing the following \textbf{causalities}:

\begin{itemize}
 \item For every edge $f \in F$ and $j \in [1,n]$: $\bot < e_{f,j}$ representing that every assignment of a single edge to a position in a sequence of edges is allowed.
       %\\There are $n*|F|$ such causalities.
 \item For every pair of edges $f,f' \in F$ with $t(f) = s(f')$, and every $j \in [1,n]$: $e_{f,j} < e_{f,f',j}$ and $e_{f',\jsucc{j}} < e_{f,f',j}$ representing that successive edges assigned in the represented sequence are connected.
%\\There are at most $2n*|F|^2$ such causalities.
\end{itemize}

\noindent Formally, we define $<$ as follows, where $X^+$ denotes the transitive closure of relation $X$:

$$<  := \left(\bigcup_{j = 1}^n \left(\{(\bot,e_{f,j}) \mid f \in F\} \cup \{(e_{f,j},e_{f,f',j}), (e_{f',\jsucc{j}},e_{f,f',j}) \mid e_{f,f',j} \in E\} \right)\right)^+ $$

\noindent The conflict relation $\#$ is the smallest conflict relation closed under $<$ (i.e. $e \# e' \wedge e' < e'' \Rightarrow e\# e''$) containing the following \textbf{immediate conflicts}:

\begin{enumerate}[label=\textbf{C.\arabic*}]
 \item \label{conf1} For every edge $f \in F$ and $j, k \in [1,n], j \neq k$: $e_{f,j} \# e_{f,k}$ representing that every edge can only be assigned to one position in a sequence.
       %\\There are $n*(n-1)*|F|$ such conflicts.
 \item \label{conf2} For every pair of edges $f, f' \in F, f \neq f'$ and every $j \in [1,n]$: $e_{f,j} \# e_{f',j}$ representing that every position in the sequence can only be assigned once.
       %\\There are $n*|F|*(|F|-1)$ such conflicts.
 \item \label{conf3} For every pair of edges $f, f' \in F, f \neq f'$, such that $t(f) = t(f')$ and every $j,k \in [1,n]$: $e_{f,j} \# e_{f,k}$ representing that edges assigned to a sequence must have non-overlapping target vertices.
%There are at most $n*|F|^2$ such conflicts.
\end{enumerate}

\noindent Formally, $\#$ has the following set of immediate conflicts:

\begin{align*}
\#^i := \bigcup_{j = 1}^n \bigg(\bigcup_{i = 1, i \neq j}^n \Big(\{(e_{f,j},e_{f,i}) \mid f \in F\} &\cup \{(e_{f,j},e_{f',i}) \mid f,f'\in F \wedge f \neq f' \wedge t(f) = t(f')\} \Big) \\
&\cup \{ (e_{f,j},e_{f',j}) \mid f,f' \in F \wedge  f \neq f'\} \bigg)
\end{align*}

\noindent The \textbf{labeling function} $h$ is given as follows:

\begin{itemize}
 \item Every event of the form $e_{f,f',j}$ has label $h(e_{f,f',j}) := x$.
 \item Every other event in $E$ has label $\varepsilon$.
\end{itemize}

\noindent Foramlly, $h$ is defined as follows:
$$h(e) := \begin{cases} x & \mbox{for } e = e_{f,f',j} \in E \\
						  \varepsilon & \mbox{otherwise} \end{cases}$$

We say that a configuration of $\les^G$ represents a sequence of edge $\langle f_1,\ldots,f_m \rangle $ if it includes events $e_{f_1,i_1},\ldots,e_{f_m,i_m}$, where $\{i_1,\ldots,i_m\} \subseteq [1,n]$, for every $j \in [1,m-1]: i_j < i_{j+1}$.

\noindent\textbf{Configurations represent sequences of edges:}
We claim that every configuration (besides $\{\bot\}$) of $\les^G$ represents a sequence of $m \leq n$ edges with pairwise different targets.

The claim follows from the structure of the immediate conflicts:
Due to conflicts \ref{conf1}, a configuration cannot contain events $e_{f,j}$ and $e_{f,k}$ for $j \neq k$.
Therefore, for every edge, a configuration includes one such event or none.
Furthermore, the indices $j$ of events $e_{f,j}$ give rise to a sequence of represented edges.
Due to conflicts \ref{conf2}, every index can only be assigned to one position in the sequence.
There are at most $n$ possible positions.
Therefore, every configuration (besides $\{\bot\}$) represents a sequence of $m \leq n$ edges.
Furthermore, the edges must pairwise different targets due to conflicts \ref{conf3}.
%\ttsays{Perhaps better to use (1) and also use this in the list of conflicts above. Then it is clearer that these numbers refer  to the list above}

\noindent\textbf{Configurations with events $e_{f,f',j}$ represent sequences of (partially) connected edges:} 

Due to the causes of events $e_{f,f',j}$, we have that a configuration contains $e_{f,f',j}$ 
if and only if it represents a sequence of edges $f_1,\ldots, f_j = f, f_{\jsucc{j}} = f',\ldots,f_m$.

\noindent\textbf{Hamiltonian cycle $\Rightarrow x^n \in \lang(\les^G)$:}

Assume $G$ has a Hamiltonian cycle $\langle f_1,\ldots,f_n \rangle$.
We claim that the set of events $\bot \cup \{e_{f_j,j}, e_{f_j,f_{\jsucc{j}},j} \mid j \in [1,n]\}$ is a maximal configuration.
The set is causally closed, since the edges are connected.
The set is conflict free, because every position is assigned exactly once (no conflicts among \ref{conf1} and \ref{conf2}) and
since the sequence is a Hamiltonian cycle, the targets are not overlapping (no conflict among \ref{conf3}).
The configuration is maximal, since clearly no further event of the form $e_{f,i}$ can be added,
since the assignment of edges to positions is fixed, and no event of the form $e_{f,f',i}$ can be added, 
since these events are directly induced by the assignment of edges to positions in the sequence.
Furthermore, this maximal configuration contains exactly $n$ events labeled by $x$ 
and all other events are labeled by $\varepsilon$, showing $x^n \in \lang(\les^G)$.

\noindent\textbf{$x^n \in \lang(\les^G) \Rightarrow$ Hamiltonian cycle:}

If $x^n \in \lang(\les^G)$, then $\les^G$ has a maximal configuration $C$ that includes $n$ events of the form $e_{f,f',j}$.

Assume that $C$ does not represent a Hamiltonian cycle.
That is, two successive edges in the represented sequence of edges that are not connected, or not all vertices are visited by the sequence.

The former case cannot be true, due to presence of events $e_{f,f',j}$ in $C$, showing that $f$ is connected to $f'$ and the causalities of such events ($e_{f,j}$ and $e_{f',\jsucc{j}}$), implying that $f$ and $f'$ are successive edges in the sequence of edges represented by $C$.

To see why the latter case cannot be true, consider that in order for a connected sequence of $n$ edges not to visit one of the $n$ vertices of the graph, it needs to visit some vertex twice.
That is, it needs to include edges that have the same target vertex.
$C$ can not represent two different edges with the same target due to conflicts \ref{conf3}.
Furthermore, $C$ can not represent the same edge twice, due to conflicts \ref{conf1}.
Therefore, the sequence of edges represented by $C$ can not contain two edges with the same target.

Therefore, the maximal configuration $C$ represents a Hamiltonian cycle in $G$.
\qed
\end{proof}

}
\review{

\label{sec:langincproof}
\lemdyncycle*

\begin{proof}
We prove the claim by reduction of DHC to FLES language inclusion.
Let $G = (V,F)$ be a finite, undirected graph and let $B \subseteq F$.
Let $V = \{v_1,\ldots,v_n\}$, $F = \{f_1,\ldots,f_k\}$, and $B = \{b_1,\ldots,b_m\}$.
Let $\bhalf := \mathsf{floor}(\frac{|B|}{2})$.
Using the same arguments of generality as in the proof of \cref{thm:membership_np_hard}, 
we assume that $G$ does not contain any self loops, i.e. edges $f$ such that $s(f) = t(f)$, and that $n>1$.

For this proof, we will use the same method to encode Hamiltonian cycles of directed graphs into event structures as in the proof of \cref{thm:membership_np_hard}.
Therefore, the first step of our reduction is to encode $G$ as a directed graph 
$\vec{G} = (V,\vec{F})$ with $\vec{F} := \{\vec{f},\cev{f} \mid f = (u,v) \in F \wedge \vec{f} = \langle u,v \rangle \wedge  \cev{f} = \langle v,u \rangle \}$,
where $(u,v)$ and $\langle u,v \rangle$ denote undirected, respectively directed, edges between vertices $u$ and $v$.
A self loop-free graph $G$ has an undirected Hamiltonian cycle if and only if $\vec{G}$ has a directed Hamiltonian cycle. Intuitively this is true, because we introduce for each undirected edge an edge in each direction that connect the vertices in either direction, just as the undirected edge does.
%We use symbols $f,\vec{f},\cev{f},\dvec{f}$ respectively for an arbitrary undirected edge, both directed version of it, and an arbitrary directed edge.

We provide $\{\varepsilon, x,y,lb_1,\ldots, lb_m\}$-labeled event structures $\les^{G,B}_1 = \langle E_1, <_1, \#_1,h_1 \rangle$ and 
$\mathcal{E}^{G,B}_2 := \langle E_2, <_2, \#_2,h_2 \rangle$ such that 
$|\leslabel|$, $|\les^{G,B}_1| := |E_1|$, and $|\les^{G,B}_2| := |E_2|$ are polynomial in $|G| := |V| + |F|$ and
$G$, $B$ has a DHC if and only if 
$\lang(\les^{G,B}_1) \subseteq \lang(\les^{G,B}_2)$. 
$X^+$ denotes the transitive closure of relation $X$.

We define the components of $\les^{G,B}_1$ as follows, where $\#^i_1$ denotes immediate conflicts:

\begin{align*}
	E_1 &:= \{\bot\}\cup \bigcup_{i=1}^n\{ev_i\} \cup \bigcup_{i=1}^m \{ein_i,eD_i,eout_i\}\\
	<_1 &:= \Big( \{(\bot,ev_1)\} \cup  \bigcup_{i=2}^n \{(ev_{i-1},ev_i)\} \cup  \bigcup_{i=1}^m \{(\bot,ein_i),(\bot,eout_i), (ein_i,eD_i)\} \Big)^+\\
	\#^i_1 &:= \bigcup_{i=1}^m \{(ein_i,eout_i) \}\\
  h_1(e) &:= \begin{cases} x & \mbox{for } e \in \{ev_1,\ldots,ev_n\} \\ lb_i & \mbox{for } e = ein_i \\ y & \mbox{for } e \in \{eD_1,\ldots,eD_m\} \\ \varepsilon & \mbox{otherwise} \end{cases}
\end{align*}

For the definition of $\les^{G,B}_2$, we make use of the event structure $\les^G$ 
encoding all Hamiltonian cycles in $\vec{G}$ defined in the proof of \cref{thm:membership_np_hard} and embed it into $\les^{G,B}_2$ using a new root event $es \in E_2$:
\begin{align*}
E^{HC}  := &\bigcup_{j = 1}^n \left(\{e_{f,j} \mid f \in \vec{F}\} \cup \{e_{f,f',j} \mid f,f' \in \vec{F} \wedge  t(f) = s(f')\}         \right)\\
<^{HC}  := &\bigcup_{j = 1}^n \left(\{(es,e_{f,j}) \mid f \in \vec{F}\} \cup \{(e_{f,j},e_{f,f',j}), (e_{f',\jsucc{j}},e_{f,f',j}) \mid e_{f,f',j} \in E^{HC}\} \right) \\
\#^{HC} := & \bigcup_{j = 1}^n \bigg(\bigcup_{i = 1, i \neq j}^n \Big(\{(e_{f,j},e_{f,i}) \mid f \in \vec{F}\} \cup \{(e_{f,j},e_{f',i}) \mid f,f'\in \vec{F} \wedge f \neq f' \wedge t(f) = t(f')\} \Big) \\
        & \phantom{\bigcup_{j = 1}^n \bigg(\bigcup_{i = 1, i \neq j}^n \Big(\{(e_{f,j},e_{f,i}) \mid f \in \vec{F}\}} \cup \{ (e_{f,j},e_{f',j}) \mid f,f' \in \vec{F} \wedge  f \neq f'\} \bigg)  \\
\end{align*}

Finally, we define the components of $\les^{G,B}_2$ as follows, where $\#^i_2$ denotes immediate conflicts:

\begin{align*}
  E_2 :=& \{\bot, es\} \cup \bigcup_{i=1}^n\{ev_i\} \cup \bigcup_{i=1}^m \{ein_i,eout_i,eD_i,\mathit{efix}_{i-1}\} \cup E^{HC}\\
  <_2 :=& \Big( \{(\bot,es),(\bot,ev_1),(\bot,\mathit{efix}_0),(\bot,eD_1)\} \cup  \bigcup_{i=2}^n \{(ev_{i-1},ev_i)\} \cup  \bigcup_{i=1}^m \{(\bot,ein_i),(\bot,eout_i)\} \\
  & \cup \bigcup_{i=1}^{m-1} \{(eD_i,\mathit{efix}_i),(eD_i,eD_{i+1})\} \cup <^{HC} \Big)^+\\
  \#^i_2 :=& \{(es,ev_1)\}\cup \bigcup_{i=1}^m \{(ein_i,eout_i), (eD_i,\mathit{efix}_{i-1}) \} \cup \bigcup_{i=1}^\bhalf \{(\mathit{efix}_i,ev_1) \cup \#^{HC} \} \cup \\
  %& \cup \{(e\dvec{f}_i,e\dvec{f}_j) \mid i \neq j \wedge (\source{\dvec{f}_i} = \source{\dvec{f}_j} \vee \target{\dvec{f}_i} = \target{\dvec{f}_j}) \} \\
  & \bigcup_{i=1}^m \bigcup_{j=1}^n \{(ein_i,e_{\vec{f_i},j}), (ein_i,e_{\cev{f_i},j}) \}\\
	h_2(e) :=& \begin{cases} x & \mbox{for } e = e_{f,f',j} \in E_2 \mbox{ for some } f,f' \in F \mbox{ and } j \in [1,n] \\ x & \mbox{for } e \in \{ev_1,\ldots, ev_n\}\\ y & \mbox{for } e \in  \{eD_1,\ldots,eD_m\} \\ lb_i & \mbox{for } e = ein_i \\ \varepsilon & \mbox{otherwise} \end{cases}
\end{align*}

From the definition of $\les^{G,B}_1$ and $\les^{G,B}_2$, we can immediately see that the reduction is polynomial.
Notice that the causality and the conflict relation, as well as the number of edges of the graph is always at most quadratic in the number of events and vertices in $G$.

%The event structures $\les^{G,B}_1$ and $\les^{G,B}_2$ are displayed in \cref{fig:ex-g1,fig:ex-g2} \afsays{TODO: update figure 2b}. 
%The common alphabet is $\leslabel=\{\varepsilon, x,y,lb_1,\ldots, lb_m\}$.

%The basic structure to encode Hamiltonian cycles is similar to the reduction in \cref{lem:cycle_red}.
%That is, any word in $\les^{G,B}_1$ contains $|V|$ times the symbol $x$, just as the word in the membership reduction.
%Furthermore, $\les^{G,B}_2$ encodes Hamiltonian cycles of $G$ via the configurations containing events $es$, $e_{f,j}$, and $e_{f,f',j}$.

Both event structures $\les^{G,B}_1$ and $\les^{G,B}_2$ encode subsets $D$ of $B$.
In particular, every word $w \in \lang(\les^{G,B}_1) \cup \lang(\les^{G,B}_2)$ encodes a subset $D(w) := \{b_i \mid lb_i \in w\}$ of $B$.
%Since events $ein_i$ and $eout_i$ are pairwise in conflict and caused only by the initial event, 
%every maximal configuration contains exactly $ein_i$ or $eout_i$.
%Therefore, every subset $D$ is represented by a maximal configuration and thereby by a word.

Consider an arbitrary word $w \in \lang(\les^{G,B}_1)$.
We first gather some properties of its symbols and then proceed to show under which condition $w \in \lang(\les^{G,B}_2)$ holds.
Firstly, $w$ must contain the symbol $x$ exactly $|V|$ times due to events $ev_i$.
Secondly, every event $ein_i$ of $\les_1$ causes an event $eD_i$ with label $y$.
Therefore, $w \in \lang(\les^{G,B}_1)$ encodes the cardinality of $D(w)$ by the number of $y$ symbols in $w$.
Finally, since all events $ev_i$ are concurrent to all events $ein_j$ and all events $ein_j$ are pairwise concurrent, 
symbols $lb_i$ and $x$ can appear in any permutation.

To show under which conditions $w \in \lang(\les^{G,B}_2)$ holds, we distinguish whether $y$ occurs in $w$ more often or less or equal than $\bhalf$ times.

Firstly, consider the case that $w$ contains the symbol $y$ more often than $\bhalf$.
That is, $w$ encodes $D(w)$ with $|D(w)| > \bhalf$.
Since DHC does not require the existence of a Hamiltonian cycle in $G_{D(w)}$, the word should be contained in $\lang(\les^{G,B}_2)$ for any graph $G$.
Consider the set of events $C_{D(w)} := \{\bot\} \cup \{ein_j \mid b_j \in D(w)\} \cup \{eout_j \mid b_j \notin D(w)\} \cup \lceil \mathit{efix}_{|D(w)|} \rceil \cup \{ev_1,\ldots,ev_n\}$. 
The set is a maximal configuration, since $\mathit{efix}_{|D(w)|}$ is not in conflict with $ev_1$, due to $|D(w)| > \bhalf$. 
Furthermore, $w \in \lang(C_{D(w)})$ because the $y$-labeled events $eD_i$ are pairwise concurrent with the $x$-labeled events $\{ev_1,\ldots,ev_n\}$ which are in turn pairwise concurrent with the $lb_i$-labeled $\{ein_j \mid b_j \in D(w)\}$ events.
Furthermore, the $lb_i$-labeled events $\{ein_j \mid b_j \in D(w)\}$ are pairwise concurrent to each other.
Therefore, $\lang(C_{D(w)})$ includes exactly all permutations of these symbols, in particular $w \in \lang(C_{D(w)})$.
In summary, for words $w$ which encode $D(w)$, such that $|D(w)| > \bhalf$, we have $w \in \lang(\les^{G,B}_2)$.

Secondly, consider the converse case that $w$ encodes a set $D(w)$ with $|D(w)| \leq \bhalf$.
Any maximal configuration $C_{D(w)}$ such that $w \in \lang(C_{D(w)})$ must include events
$\{\bot\} \cup \{ein_j \mid b_j \in D(w)\} \cup \{eout_j \mid b_j \notin D(w)\} \cup \lceil \mathit{efix}_{|D(w)|} \rceil$.

In contrast to the first case, the event $ev_1$ is in conflict with $\mathit{efix}_{|D(w)|}$, because $|D(w)| \leq \bhalf$.
However, the event $es$ is not in conflict with $\mathit{efix}_{|D(w)|}$.
Hamiltonian cycles are encoded in the sub-event structure following $es$, which can be seen by the arguments presented in the proof of \cref{lem:cycle_red}.
However, due to conflicts of events $ein_i$ with $e_{\vec{f}_i,j}$ and $e_{\cev{f}_i,j}$, 
only events $e_{\vec{f},j}$ and $e_{\cev{f},j}$ can be included in $C_{D(w)}$ with $f \notin D(w)$, i.e. edges of the graph $G_{D(w)}$.
Thus, analogous to the proof of \cref{thm:membership_np_hard}, $C_{D(w)}$ include $|V|$ $x$-labeled events if and only if $G_{D(w)}$ has a Hamiltonian cycle.

In summary, for a word $w \in \lang(\les_1)$, we have $w \in \lang(\les_2)$ if and only if $|D(w)| > \bhalf$ or $|D(w)| \leq \bhalf$ and $G_{D(w)}$ contains a Hamiltonian cycle.
Furthermore, for every $D \subseteq B$, there is a word $w \in \lang(\les_1)$, such that $D = D(w)$.

Therefore, we get $\lang(\les^{G,B}_1) \subseteq \lang(\les^{G,B}_2)$ if and only if $G$ and $B$ satisfy DHC.
\qed
\end{proof}

}

\begin{example}

Consider the graph shown in \cref{fig:dyn_ham_graph}.
The graph and the set of edges $\{b_1,b_2\}$ form a dynamic Hamiltonian cycle.

\Cref{fig:dyn_ham_struct1,fig:dyn_ham_conf} 
show the event structures $\les_1$ and $\les_2$ constructed according to the reduction in the proof of \cref{lemma:dyn_ham_cycle} on this example graph.

The bold events in \cref{fig:dyn_ham_struct1,fig:dyn_ham_conf} show the configurations
that correspond to the Hamiltonian cycle that can be obtained when $b_2$ is removed. 
The configuration where $b_1$ is removed is similar.

The remaining cases are removing none or both of $b_1$ and $b_2$.
In the case that both are removed, the number of $y$ labels is 2 and therefore $eD_2$ has to be part of the configuration, enabling $ev_1,\ldots,ev_4$ to generate 4 times $x$.
In case no $b_i$ is removed, the same configuration as in \cref{fig:dyn_ham_conf} can be used to show existence of a Hamiltonian cycle with small changes to accommodate for the different number of $y$ and the non-existence of $lb_2$.

\end{example}

\begin{figure}[!tbp]
	\begin{subfigure}{0.33\textwidth}
		\centering
		\includegraphics[width=.8\textwidth]{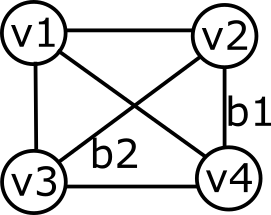}
		\caption{Graph with dynamic Hamiltonian cycle.}
		\label{fig:dyn_ham_graph}
	\end{subfigure}\hfill
	\begin{subfigure}{0.33\textwidth}
		\centering
			\includegraphics[width=.8\textwidth]{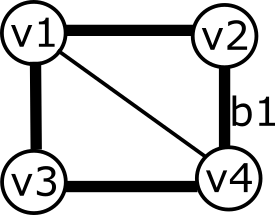}
			\caption{Cycle without $in_2$.}
			\label{fig:cycle1}
	\end{subfigure}\hfill
	\begin{subfigure}{0.33\textwidth}
		\centering
			\includegraphics[width=.8\textwidth]{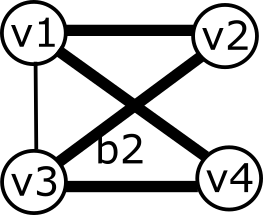}
			\caption{Cycle without $in_1$.}
			\label{fig:cycle2}
	\end{subfigure}
  
	\caption{Graph}
\end{figure}

\centering
\begin{figure}[!tbp]
\begin{tikzpicture}[grow'=right, sloped,scale=0.5,level 1/.style={sibling distance=2cm,level distance=3cm},level 2/.style={sibling distance=2cm,level distance=3cm}]
	\node[event,conf] {$\bot$}
	child {
		node[event,conf,label={above:$x$}] (ev1) {$ev_1$}
		edge from parent
    child {
      node[event,conf,label={above:$x$}] (ev2) {$ev_2$}
      edge from parent
      child {
        node[event,conf,label={above:$x$}] (ev3) {$ev_3$}
        edge from parent
        child {
          node[event,conf,label={above:$x$}] (ev4) {$ev_4$}
          edge from parent
        }
      }
    }
	}
  child {
		node[event,label={above:$lb_1$}] (eb1) {$ein_1$}
		edge from parent
    child {
      node[event,label={above:$y$}] (ebb1) {$eD_1$}
      edge from parent
    }
	}
  child {
		node[event,conf] (ebbb1) {$eout_1$}
		edge from parent
	}
  child {
		node[event,conf,label={above:$lb_m$},yshift=-.3cm] (ebm) {$ein_2$}
		edge from parent
    child {
      node[event,conf,label={above:$y$}] (ebbm) {$eD_2$}
      edge from parent
    }
	}
  child {
		node[event,yshift=-.3cm] (ebbbm) {$eout_2$}
		edge from parent
	};
	
  \draw[conflict] (eb1) -- (ebbb1);
  \draw[conflict] (ebm) -- (ebbbm);
\end{tikzpicture}
\caption{Example Event structure $\les_1$}
\label{fig:dyn_ham_struct1}
\end{figure}\hfill
\centering
\begin{figure}[!tbp]
\centering
\begin{tikzpicture}[grow'=right, sloped,scale=0.5,level 1/.style={sibling distance=2.5cm,level distance=3cm},level 2/.style={sibling distance=2.5cm,level distance=3cm}]
	\node[event,conf] (bot) {$\bot$};
  \node[event,label={above:$lb_1$},right=2.5cm of bot,yshift=3cm] (eb1) {$ein_1$};
  \node[event,conf,label={above:$\varepsilon$},above left=0.5cm of eb1,yshift=-0.3cm] (ebbb1) {$eout_1$};
	\node[event,conf,label={above:$lb_2$},right=2.5cm of bot,yshift=1.5cm] (ebm) {$ein_2$};
  \node[event,above left=0.5cm of ebm,yshift=-0.3cm] (ebbbm) {$eout_2$};
  \draw[edge from parent] (bot) |- (eb1);
  \draw[edge from parent] (bot) -- (ebbb1);
  \draw[edge from parent] (bot) -- (ebm);
  \draw[edge from parent] (bot) -- (ebbbm);
  
  \node[event,conf,right=0.8cm of bot,yshift=-0.2cm] (es) {$es$}
  child {
    node[event,conf,yshift=0.6cm] (efv1v2) {$e_{v1v2,1}$}
    edge from parent[draw=none]
    child {
      node[event,conf] (efv2v4) {$e_{v2v4,2}$}
      edge from parent[draw=none]
      child {
        node[event] (efv3v4) {$e_{v3v4,3}$}
        edge from parent[draw=none]
        child {
          node[event,conf] (efv4v3) {$e_{v4v3,3}$}
          edge from parent[draw=none]
          child {
            node[event,conf] (efv3v1) {$e_{v3v1,4}$}
            edge from parent[draw=none]
            child {
              node[event] (efv2v3) {$e_{v2v3,1}$}
              edge from parent[draw=none]
              child {
                node[event] (efv3v2) {$e_{v3v2,1}$}
                edge from parent[draw=none]
                child {
                  node (dummy) {}
                  edge from parent[draw=none]
                }
              }
            }
          }
        }
      }
    }
  };
  
  \draw[edge from parent] (bot) -- (es);
  \draw[edge from parent] (es) -| (efv1v2);
  \draw[edge from parent] (es) -| (efv2v4);
  \draw[edge from parent] (es) -| (efv3v4);
  \draw[edge from parent] (es) -| (efv4v3);
  \draw[edge from parent] (es) -| (efv3v1);
  \draw[edge from parent] (es) -| (efv2v3);
  \draw[edge from parent] (es) -| (efv3v2);
  \draw[invisible] (efv3v2) --node[auto=false]{\ldots} (dummy);
  
  \draw[conflict] (efv2v3) -- (efv3v2);
  \draw[conflict] (efv3v4) -- (efv4v3);
  
  \draw[conflict] (ebm) -| (efv2v3);
  \draw[conflict] (ebm) -| (efv3v2);
  \draw[conflict] (eb1) -| (efv4v3);
  \draw[conflict] (eb1) -| (efv3v4);
  
  \node[event,conf,below=1cm of efv1v2,xshift=.5cm,label={below:$x$}] (efv1v2efv2v4) {$e_{v1v2,v2v4,1}$};
  \node[event,conf,below=1cm of efv2v4,xshift=1cm,label={below:$x$}] (efv2v4efv4v3) {$e_{v2v4,v4v3,2}$};
  \node[event,conf,below=1cm of efv4v3,xshift=.5cm,label={below:$x$}] (efv4v3efv3v1) {$e_{v4v3,v3v1,3}$};
  \node[event,conf,below=1cm of efv3v1,xshift=1cm,label={below:$x$}] (efv3v1efv1v2) {$e_{v3v1,v1v2,4}$};
  
  \draw[edge from parent] (efv1v2) -- (efv1v2efv2v4);
  \draw[edge from parent] (efv2v4) -- (efv1v2efv2v4);
  \draw[edge from parent] (efv2v4) -- (efv2v4efv4v3);
  \draw[edge from parent] (efv4v3) -- (efv2v4efv4v3);
  \draw[edge from parent] (efv3v1) -- (efv4v3efv3v1);
  \draw[edge from parent] (efv4v3) -- (efv4v3efv3v1);
  \draw[edge from parent] (efv3v1) -- (efv3v1efv1v2);
  \draw[edge from parent] (efv1v2.south east) -- ++(0,-1cm) -| (efv3v1efv1v2);
  
  \node[event,label={80:$x$},below=1.5cm of es] (ev1) {$ev_1$}
  child {
    node[event,label={above:$x$}] (ev2) {$ev_2$}
    edge from parent
    child {
      node[event,label={above:$x$}] (ev3) {$ev_3$}
      edge from parent
      child {
        node[event,label={above:$x$}] (ev4) {$ev_4$}
        edge from parent
      }
    }
  };
  \draw[conflict] (ev1) -- (es);
  \draw[edge from parent] (bot) -- (ev1);
	
  \node[event,conf,label={below:$y$},below=1.5cm of ev1] (ey1) {$eD_1$}
  child {
    node[event,label={below:$y$}] (ey2) {$eD_2$}
    edge from parent
  };
  \draw[edge from parent] (bot) |- (ey1);

  \node[event,above=0.3cm of ey1] (eyy1) {$\mathit{efix}_0$};
  \draw[edge from parent] (bot) -- (eyy1);
  \draw[conflict] (ey1) -- (eyy1);
  \draw[conflict] (eyy1) -- (ev1);
  \node[event,conf,above=0.3cm of ey2] (eyy2) {$\mathit{efix}_1$};
  \draw[edge from parent] (ey1) -- (eyy2);
  \draw[conflict] (ey2) -- (eyy2);
  \draw[conflict] (eyy2) -- (ev1);
  
  \draw[conflict] (eb1) -- (ebbb1);
  
  \draw[conflict] (ebm) -- (ebbbm);

\end{tikzpicture}

\begin{justify}
For space reasons only selected events of the form $e_{f,i}$ and $e_{f,f',i}$ are drawn. For clarity we name the edges $\vec{f}$ and $\cev{f}$ by the vertices they connect, e.g. $v1v2$ and $v2v1$.
\end{justify}
\caption{Example Event structure $\les_2$}
\label{fig:dyn_ham_conf}
\end{figure}

\end{document}